\numberwithin{equation}{section}
\newcolumntype{Y}{>{\centering\arraybackslash}X}
\newcommand\Tstrut{\rule{0pt}{2.9ex}}       
\newcommand\Bstrut{\rule[-0.5ex]{0pt}{0pt}} 
\newcommand{\TBstrut}{\Tstrut\Bstrut} 
\newcommand{\floor}[1]{\lfloor{#1}\rfloor}
\newcommand{\ceil}[1]{\lceil{#1}\rceil}
\newcommand{\dist}{{\mathop{\rm dist}}}
\newcommand{\FP}{{\mathop{\rm FP}}}
\newcommand{\SFP}{{\mathop{\rm SFP}}}
\newcommand{\SP}{{\mathop{\rm SP}}}
\newtheorem{theorem}{Theorem}[section]
\newtheorem{lemma}{Lemma}[section]
\newtheorem{property}{Property}[section]
\newtheorem{corollary}{Corollary}[section]
\theoremstyle{definition}
\newtheorem{assumption}{Assumption}[section]
\begin{document}

\begin{center}
{\Large Algorithms for the Ridesharing with Profit Constraint Problem}
\vskip 0.2in
Qian-Ping Gu$^1$, Jiajian Leo Liang$^1$

$^1$School of Computing Science, Simon Fraser University, Canada\\
qgu@sfu.ca, leo\_liang@sfu.ca
\end{center}
\vspace{2mm}

\noindent \textbf{Abstract:}
Mobility-on-demand (MoD) ridesharing is a promising way to improve the occupancy rate of personal vehicles and reduce traffic congestion and emissions. Maximizing the number of passengers served and maximizing a profit target are major optimization goals in MoD ridesharing.
We study the ridesharing with profit constraint problem (labeled as RPC) which considers both optimization goals altogether: maximize the total number of passengers subject to an overall drivers' profit target.
We give a mathematical formulation for the RPC problem.
We present a polynomial-time exact algorithm framework (including two practical implementations of the algorithm) and a $\frac{1}{2}$-approximation algorithm for the case that each vehicle serves at most one passenger.
We propose a $\frac{2}{3\lambda}$-approximation algorithm for the case that each vehicle serves at most $\lambda \geq 2$ passengers.
Our algorithms revolve around the idea of maximum cardinality matching in bipartite graphs and hypergraphs (set packing) with general edge weight. Based on a real-world ridesharing dataset in Chicago City and price schemes of Uber, we conduct an extensive empirical study on our model and algorithms. Experimental results show that practical price schemes can be incorporated into our model, our exact algorithms are efficient, and our approximation algorithms achieve $\sim$90\% of optimal solutions in the number of passengers served.
\vspace{2mm}

\noindent \textbf{Keywords:} Ridesharing with profit, exact and approximation algorithms, graph matching, network flow, computational study

\section{Introduction} \label{introduction}
Personal vehicles and mobility-on-demand (MoD) systems are major transportation tools worldwide. MoD systems, such as Uber, Lyft and DiDi, have become popular around the globe due to their convenience.
Drivers and passengers can be matched based on ridesharing requests (arrange a ride for passengers in a personal vehicle).
MoD system operators and drivers participated in such systems are mostly motivated by profit in practice. Solely focusing on profit and market share from MoD systems and drivers may have increased congestion and $\textup{CO}_2$ emissions; the use of MoD has increased the number of single-passenger vehicles on the road significantly~\cite{Diao-NS21,Henao-Trans19,Tirachini-IJST20}.
This, coupled with the saturated personal vehicle usage (with low occupancy rate) in Europe 
and North America, causes more traffic congestion and emissions~\cite{Martins-CIE21}. According to studies in~\cite{CSS20, Sierpinski-ATST13, StatsCan-2016}, personal vehicles were the main transportation mode in the United States and Canada in recent years and in more than 200 European cities between 2001 and 2011. In Europe 2017~\cite{EEA19}, the transport sector accounted for 27\% of total greenhouse gas emissions; and of these 27\% gas emissions, 31.55\% (8.52\% total) were from passenger cars.
The occupancy rate of personal vehicles in the U.S. was 1.6 persons per vehicle in 2011~\cite{Ghoseiri-USDT11,Santos-USDTFHA11} (and decreased to 1.5 persons per vehicle in 2017~\cite{CSS20}).

On the other hand, there is an urgency to reduce traffic congestion and greenhouse gas emissions. Ridesharing using MoD systems has been proposed and studied in the academia~\cite{Agatz-EJOR12,Furuhata-TRBM13,Martins-CIE21,Mourad-TRBM19,Tafreshian-SS20,Wang-TRBM19}. Major themes from many previous studies include maximizing the total number of passengers served, minimizing the total cost to serve all passengers and maximizing a profit target.
Studies, such as those in \cite{Alonso-Mora-PNAS17, Amirkiaee-TRFTPB18,Furuhata-TRBM13,Tikoudis-TRDTE21}, have shown that ridesharing is a promising effective way to increase the occupancy rate and reduce congestion. It is estimated that commuting to work by ridesharing in Dublin, Ireland, can reduce 12,674 tons of CO$_2$ emissions per year~\cite{Caulfield-TRDTE09}, and taxi-ridesharing in Beijing can reduce 120 million liters of gasoline annually~\cite{Ma-TKDE15}.
An important factor for the adoption of ridesharing in practice is the profit/pricing scheme.
Demand-and-pricing of a ridesharing system is important for its adaptability of actual ridesharing in practice~\cite{Wang-TRBM19}.
Recently, ridesharing with profit as taxi ridesharing (e.g., \cite{Jung-CACIE16,Ma-TKDE15,Qian-TRBM17,Santos-ESA15} and pricing based platform (MoD) equilibrium analysis (e.g., \cite{Besbes-MS21, Castillo-EC17, Hu-MSOM22, Zhang-TRBM21}) have received much attention.


This study is motivated by the fact that profits-as-incentives may promote ridesharing in practice for both MoD operators and drivers.
The potential of ridesharing has been recognized in the academia, but the potential of ridesharing in profit-maximizing platforms/MoDs is not well understood.
The ridesharing problem we study can be summarized in the following (formal definition is given in Section~\ref{sec-preliminary}):
\begin{itemize}
\item A centralized system periodically receives a set of ridesharing offer trips (drivers) and a set of ridesharing request trips (passengers).
We say a passenger is \emph{served} if the passenger is assigned a driver who can deliver the passenger to his/her destination on time. A driver can serve multiple passengers together, which is a ridesharing \emph{match} consisting of a driver and a group of passengers served by the driver. The system computes a \emph{profit} for each match based on the driver's travel distance and time and passengers' itineraries in the match.
An optimizing goal on the profit only, called the \emph{Ridesharing with Profit} (\textbf{RP}) problem, is to maximize the overall profit obtained from the served matches.
In this paper, we focus on a more complex optimization problem, which we call the \emph{Ridesharing with Profit Constraint} (\textbf{RPC}) problem. Instead of maximizing the overall profit, the RPC problem maximizes the number of served 
trips subject to an additional constraint: a minimum profit target specified by the system must be reached.
\end{itemize}
This RPC problem provides a new framework to consider maximizing both the number of passengers served and drivers' profit target.
In particular, the main optimization goal aims at improving the number of passengers served and occupancy rate of vehicles while meeting an overall drivers' profit target.
To the best of our knowledge, such an optimization problem has not been studied before.
Our model allows a flexible pricing for the MoD system operators; and different pricing schemes (e.g., \cite{Li-TRELTR22,Liu-TRCET17,Yan-NRL20}) can be incorporated into our model.
Although the problem studied by Santos and Xavier~\cite{Santos-ESA15} is closely related to the RPC problem, their optimization goals differ from ours, and they focus on heuristics. Similarly, only (meta)heuristics are discussed in~\cite{Hsieh-IJGI20}.

There are two general approaches for handling ridesharing requests in the literature: online and offline approaches.
In the online approach, a request is processed immediately after its arrival (assign to a driver, reject it, or put it in a queue for a small amount of time) without the information of later trips.
In the offline approach, the system accumulates a set of offer and request trips for each time interval (known as \emph{batching}); and the set of trips is processed at once for that interval.
This is a common approach in the literature (e.g.,~\cite{Agatz-TRBM11, Alonso-Mora-PNAS17,Fielbaum-TRCET21,Gu-ISAAC21,Santi-PNAS14,Simonetto-TRCET19}).
An optimization technique, called \emph{rolling horizon}, recomputes the current solution over multiple time intervals, effectively aggregating more trips to be processed at once.
This technique is usually deployed in more technical papers (e.g.,~\cite{Agatz-TRBM11,Najmi-TRELTR17,Nourinejad-TRBM20}).
Under the offline setting, two optimization problems related to the ridesharing problem are the Dial-A-Ride problem (DARP) and the Vehicle Routing problem (VRP).
There are some major differences between these two problems and the RP/RPC problems.
The drivers and passengers in DARP and VRP have less parameters and/or less restricted parameters than that of the drivers and passengers in RPC.
A variant of the Vehicle Routing Problems with Profits, called the Team Orienteering Problem (TOP)~\cite{Archetti-VR14,Gunawan-EJOR16}, is related to RPC problem studied in this paper.
However, profit calculations in VRPPs and TOP are static compared to the profit calculation in RPC, which is more dynamic since it depends on each different driver-passenger(s) assignment.
The most salient difference is that the capacity of a vehicle in DARP and VRP is substantially higher than that in ridesharing. This causes finding an optimal routing for a vehicle in DARP and VRP harder, making many general approaches of DARP and VRP not suitable for the RPC problem since they focus on different fundamentals.

In this paper, we follow the offline setting and our model uses a graph matching approach.
All feasible matches between all drivers and passengers are computed first; and then based on some optimization goal/objective, an assignment consisting of a set of disjoint feasible matches is selected.
Such an approach is shown to be plausible in practice~\cite{Alonso-Mora-PNAS17,Gu-ISAAC21,Simonetto-TRCET19} when traffic delay is minimal.
By this approach, the RP problem can be converted to the maximum weight hypergraph matching (or maximum weight set packing problem), which is NP-hard in general~\cite{Garey79,Karp72}.
We give a mathematical formulation, an exact and approximate algorithms for the RPC problem. Our algorithms are based on applications of maximum matching in bipartite graphs and hypergraphs (set packing). 
We also conduct empirical studies on our algorithms. One hurdle for empirical studies for the RPC problem is the lack of practical data instances. To clear this hurdle, we incorporate the real-world ridesharing dataset from Chicago City with the driver's profit model of Uber to generate test instances for practical scenarios.
Our contributions in this paper are summarized as follows:
\begin{enumerate}\setlength\itemsep{0em}
\item A new optimization problem (the RPC problem) is studied, and a mathematical formulation of the RPC problem is given.
The NP-hardness of the RP problem implies that the RPC problem is NP-hard.

\item We give a polynomial-time exact algorithm framework (including two practical implementations of the algorithm) and a $\frac{1}{2}$-approximation algorithm for a special case of the RPC problem that each match contains $\lambda = 1$ passenger (labeled as RPC1).

\item Another special case of the RPC problem is that only matches with non-negative profit are considered and each match has at most $\lambda\geq 2$ passengers (labeled as RPC\texttt{+}).
This case is still NP-hard, and we give a $\frac{2}{3\lambda}$-approximation algorithm for a specific range of profit target in this case.

\item Based on a real-world ridesharing dataset in Chicago City, profit model of Uber and practical scenarios, we create datasets for an extensive computational study on RPC1 and RPC\texttt{+} problems. Experiment results show that practical profit schemes can be incorporated into our model.
The exact algorithm implementations are efficient, the $\frac{1}{2}$-approximation algorithm achieves 96\% to 99\% (for different practical scenarios), and the $\frac{2}{3\lambda}$-approximation algorithm achieves 90\% of optimal solutions in number of passengers served.
\end{enumerate}

The rest of the paper is organized as follows.
In Section~\ref{sec-preliminary}, we give the preliminaries of the paper and formally define the RPC problem.
In Section~\ref{sec-rpc1-algs}, we describe the exact and approximation algorithms for RPC1.
The $\frac{2}{3\lambda}$-approximation algorithm for RPC\texttt{+} is presented in Section~\ref{sec-app-nonnegative}.
We discuss our numerical experiments and results in Section~\ref{sec-experiment}.
Finally, Section~\ref{sec-conclusion} concludes the paper.

\section{Preliminaries}\label{sec-preliminary}
Let $G(V,E,w)$ be an \emph{edge-weighted} graph with $w:E\rightarrow \mathbb{R}$ assigns each edge $e\in E$ a weight $w(e)$.
A path $P$ in $G$ is a sequence of vertices $v_1,v_2,\ldots,v_p$ such that $(v_i, v_{i+1})$ is an edge of $E$ for $1\leq i \leq p-1$ and denoted by $P=(v_1,\ldots,v_p)$.
The \emph{distance} of a path $P=(v_1,\ldots,v_p)$ is defined as $\dist(P)=w(P)=\sum_{i=1}^{p-1}w(v_i,v_{i+1})$.
The \emph{length} of a path $P$ is the number of edges in $P$, denoted by $|P|$.
A cycle is even (odd) if it has even (odd) length.
A cycle is called \emph{negative} if the sum of the weights of edges in the cycle is negative.

An MoD system has a road network, modeled as a directed graph $G(V,E,w)$, where $V$ is the set of vertices representing geographical sites, 
$E\subseteq V\times V$ is the set of edges (each edge represents a connection between two sites), and a distance function $w: E\rightarrow \mathbb{R}$ that assigns each edge a weight.
The system periodically receives two sets of trips: a set $D=\{\eta_1,\ldots,\eta_k\}$ of $k$ drivers (each operates a vehicle) and a set $R=\{r_1,\ldots,r_l\}$ of $l$ passengers.
Each driver $\eta_i \in D$ is represented by a tuple $(o_i,d_i,\lambda_i)$ of parameters containing 
an origin location $o_i$ (a vertex $o_i \in V(G)$), a destination location $d_i \in V(G)$, 
and a passenger capacity $\lambda_i\geq 1$ of $\eta_i$'s vehicle.
Each driver $\eta_i$ also has an earliest departure time at $o_i$, a latest arrival time at $d_i$, a detour time/distance limit and a maximum trip duration.
A personal driver and a designated driver only differ in the range of some parameters. For example, a designated driver may have a more flexible schedule time and detour limit known by the system.
Each passenger $r_i \in R$ is represented by a tuple $(o_i,d_i)$ as defined above, along with an earliest departure time at $o_i$, a latest arrival time at $d_i$ and a maximum trip duration.

For a driver $\eta_i \in D$ and a group of passengers $R_i \subseteq R$, 
$(\eta_i,R_i)$ is a \emph{feasible match} if there exists a feasible path $\FP(\eta_i,R_i)$ in $N$ used by $\eta_i$ to deliver all of $R_i$ such that travelling along $\FP(\eta_i,R_i)$ satisfies all constraints specified by $\eta_i$ and every $r_j \in R_i$.
These constraints include $|R_i| \leq \lambda_i$, detour limit of $\eta_i$, and time constraints of $\eta_i$ and $R_i$.
An \emph{assignment} $\Pi$ is a set of feasible matches such that for every two feasible matches $(\eta_i, R_i)$ and $(\eta_j,R_j)$ in $\Pi$, $\eta_i\neq \eta_j$ and $R_i \cap R_j = \emptyset$.
For an assignment $\Pi= \{(\eta_i,R_i) \mid \eta_i\in D, R_i\subseteq R\}$, each driver $\eta_i$ in the assignment follows a \emph{shortest feasible path} $\SFP(\eta_i,R_i)$ to serve trips in $R_i$. 
For example, let $R_i=\{r_a,r_q\}$ be the set of passengers in feasible match $(\eta_i, R_i)$.
There are six different visiting orders of $R_i$ in which the passengers of $R_i$ can be picked-up and dropped-off by $\eta_i$, which correspond to six paths in road network $G(V,E,w)$ (each starts at $o_i$ and ends at $d_i$).
Below are the six visiting orders of $R_i$ for driver $\eta_i$:
\begin{align*}
\{&(o_i,o_a,o_q,d_a,d_q,d_i), (o_i,o_a,d_a,o_q,d_q,d_i),(o_i,o_a,o_q,d_q,d_a,d_i),\\
&(o_i,o_q,o_a,d_a,d_q,d_i), (o_i,o_q,d_q,o_a,d_a,d_i),(o_i,o_q,o_a,d_q,d_a,d_i)\}.
\end{align*}%
Path $\SFP(\eta_i,R_i)$ is the path in $G$ corresponds to one of the six visiting orders that is feasible and has the shortest distance.
Every feasible match $(\eta_i,R_i)$ along with $\SFP(\eta_i,R_i)$ can be computed efficiently with small $|R_i|$, as described in~\cite{Alonso-Mora-PNAS17,Gu-ISAAC21,Simonetto-TRCET19}.

Each feasible match $(\eta_i,R_i)$ is associated with a revenue $rev(\eta_i,R_i)$, a travel cost $tc(\eta_i,R_i)$ and a profit $w(\eta_i,R_i)$, which are computed by the MoD system.
Most or all of $rev(\eta_i,R_i)$ are given to the driver $\eta_i$ for serving all of $R_i$.
The revenue $rev(\eta_i,R_i)$ is assumed to be computed based on $\FP(\eta_i,R_i)$ and should be at most what the passengers of $R_i$ pay to the MoD system.
The travel cost $tc(\eta_i,R_i)$ for $\eta_i$ accounts for traversing $\SFP(\eta_i,R_i)$.
The revenue and cost are decided by several parameters such as $\dist(\SFP(\eta_i,R_i))$, travel time, regions, pricing policies, etc.
For example, a simple revenue $rev(\eta_i,R_i)$ can just be $\sigma\cdot \dist(\SFP(\eta_i,R_i))$ for some constant $\sigma$.
The profit of a feasible match $(\eta_i,R_i)$ is $w(\eta_i,R_i)=rev(\eta_i,R_i) - tc(\eta_i,R_i)$, which can be negative, and we assume it is expressed in integers (e.g., cents, smallest payable amount).
In our computational study, we estimate $rev(\eta_i,R_i)$ and $tc(\eta_i,R_i)$ based on the profit model of Uber and practical scenarios, as described in Section~\ref{sec-experiment}.
The revenue and cost factors may vary over time/by region and also be different for each driver and feasible match, following some pricing strategies chosen by the system operators~\cite{Li-TRELTR22,Liu-TRCET17}.

The RP (ridesharing with profit) problem is to assign passengers of $R$ to drivers $D$ with overall profit maximized. The RP problem can be formulated as follows.
\begin{small}
\begin{alignat}{4}
& \max_{\Pi}   &       & \sum_{(\eta_i,R_i)\in \Pi} w(\eta_i,R_i) & \qquad \tag{\romannumeral1} \label{formulation-sharingmodel}\\
& \text{subject to } & \qquad & \eta_i\neq \eta_j \wedge R_i\cap R_j=\emptyset,  & & \forall (\eta_i,R_i),(\eta_j,R_j)\in \Pi : (\eta_i,R_i)\neq(\eta_j,R_j) \tag{\romannumeral2} \label{constraint-disjoint-RP}
\end{alignat}
\end{small}%
The objective function~\eqref{formulation-sharingmodel} is to maximize the overall profit obtained from served trips. Constraint~\eqref{constraint-disjoint-RP} ensures that each passenger request is assigned to only one driver and each driver serves at most one feasible match (a unique group of passengers).
Note that not all requests of $R$ are required to be served in an assignment $\Pi$.

In this paper, we focus on a more complex optimization problem, called the \emph{Ridesharing with Profit Constraint} (RPC) problem.
In application, MoD may want to serve as many passengers as possible while maintaining a \emph{profit target} $c$.
With this in mind, we introduce a profit constraint and a formulation for the RPC problem as follows.
\begin{small}
\begin{alignat}{4}
& \max_{\Pi}   &        & \sum_{(\eta_i,R_i)\in \Pi} |R_i| & \qquad \tag{\romannumeral3} \label{formulation-max-passenger}\\
& \text{subject to } & \qquad & \eta_i\neq \eta_j \wedge R_i \cap R_j = \emptyset,  & & \forall (\eta_i,R_i),(\eta_j,R_j)\in \Pi : (\eta_i,R_i)\neq(\eta_j,R_j) \tag{\romannumeral4} \label{constraint-disjoint}\\
&                    &        & \sum_{(\eta_i,R_i)\in \Pi} w(\eta_i,R_i) \geq c & & \tag{\romannumeral5} \label{constraint-profit}
\end{alignat}
\end{small}%
The objective function~\eqref{formulation-max-passenger} is to maximize the total number of passengers served. Constraint~\eqref{constraint-disjoint} is the same as constraint~\eqref{constraint-disjoint-RP}.
Constraint~\eqref{constraint-profit} ensures the system profit meets a given target.
An assignment $\Pi$ containing any feasible match $(\eta_i,R_i)$ with negative profit ($w(\eta_i,R_i)<0$) means that the driver $\eta_i$ loses money.
In practice, the MoD system operator could choose to compensate such a driver $\eta_i$ in hope to increase market share.

We construct an integer-weighted hypergraph $H(V,E,w)$ to represent the formulation \eqref{formulation-max-passenger}-\eqref{constraint-profit} as follows.
Initially, $V(H) = D \cup R$.
For each $\eta_i \in D$ and for every subset $R_i$ of $R$ with $1 \leq |R_i| \leq \lambda_i$, create a hyperedge $e=\{\eta_i\} \cup R_i$ in $E(H)$ if $(\eta_i, R_i)$ is a feasible match.
Each edge $e=\{\eta_i\} \cup R_i \in E(H)$ has weight $w(e)=w(\eta_i,R_i)$, the profit of $\eta_i$.
Remove all isolated vertices from $H$.
Let $H^-$ be the subgraph of $H$ such that $H^-$ contains all edges of $H$ with negative weight and $H^+ = H \setminus H^-$.
For an assignment $\Pi$, let $w(\Pi) = \sum_{(\eta_i,R_i)\in \Pi} w(\eta_i,R_i)$ be the profit of $\Pi$.
There are at most $\sum_{1 \leq a \leq \lambda_i} \binom{l}{a}$ edges incident to each $\eta_i$ in $H$.
Let $\lambda = \max_{\eta_i \in D} \lambda_i$.
If $\lambda$ is a small constant, the size of $H$ is polynomially bounded.
In practice, it is reasonable to assume $\lambda$ is small; however when $\lambda$ is not small, we may purposely restrict the number of edges incident to each vertex so that $|E(H)|$ becomes reasonable for practice.

For an edge-weighted (hyper)graph $G(V,E)$ and $E' \subseteq E(G)$, the weight of $E'$ is denoted by $w(E') = \sum_{e \in E'} w(e)$, where $w(e)$ is the weight of edge $e$.
A \emph{matching} $M$ in a (hyper)graph $G$ is a set of edges of $G$ such that every pair of edges in $M$ do not have a common vertex.
The size $|M|$ of a matching $M$ is the number of edges in $M$ and the weight of $M$ is $w(M)$.
The RPC problem is then to find a matching $M$ in $H$ such that $\sum_{\{\eta_i\} \cup R_i \in M} |R_i|$ is maximized and $w(M)>c$.

Let $M_1$ and $M_2$ be two matchings in a (hyper)graph $G(V,E)$.
Denoted by $F(V,E) = M_1 \Delta M_2$ is the resulting graph of the symmetric difference of $M_1$ and $M_2$.
Let $\mathcal{F}$ be the set of connected components in $F$.
It is well known that when $G$ is a graph, each component of $\mathcal{F}$ is either a path or an even cycle where the edges are alternating between $M_1$ and $M_2$.
For any subset $\mathcal{F}' \subseteq \mathcal{F}$, let $E(\mathcal{F}')=\cup_{C\in \mathcal{F}'} E(C)$ and $w(\mathcal{F}')= \sum_{C\in \mathcal{F}'} w(E(C))$.

Let $c^*$ be the weight of a maximum weight matching in the above constructed hypergraph.
It is easy to see that finding a maximum weight matching $M^*$ in $H^+$ solves the formulation~\eqref{formulation-sharingmodel}-\eqref{constraint-disjoint-RP} and vice versa.
Since formulation~\eqref{formulation-sharingmodel}-\eqref{constraint-disjoint-RP} is a weighted $(\lambda+1)$-set packing formulation, finding matching $M^*$ and $c^*$ (RP problem) are NP-hard in general for $\lambda \geq 2$~\cite{Garey79,Karp72}.

\begin{theorem}
The RPC problem is NP-hard for an arbitrary target $c$ and $\lambda\geq 2$.
\label{theorem-RPC-NPhard}
\end{theorem}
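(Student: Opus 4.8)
The plan is to reduce the RP problem to the RPC problem, leveraging the NP-hardness of RP for $\lambda \geq 2$ established above. Recall that RP is equivalent to computing a maximum weight matching in $H$ (equivalently in $H^+$), whose value is $c^*$; this is a weighted $(\lambda+1)$-set packing optimum and is NP-hard. I will show that an oracle solving RPC lets us recover $c^*$ with polynomially many queries, giving a polynomial-time Turing reduction.

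The key observation is that the \emph{feasibility} of an RPC instance with target $c$ --- whether there exists any matching $M$ in $H$ with $w(M) \geq c$ --- coincides exactly with the decision version of the maximum weight matching problem on $H$. First I would note that negative-weight edges are never needed: deleting any edge from a matching leaves a matching, and deleting a negative edge strictly increases its weight, so the maximum weight of a matching in $H$ equals that in $H^+$, namely $c^*$. Consequently the RPC instance with target $c$ admits a feasible assignment $\Pi$ satisfying \eqref{constraint-profit} if and only if $c^* \geq c$. Since any algorithm for the RPC optimization problem must, in particular, detect when no assignment meets the profit constraint, it decides this feasibility question.

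Given this, I would recover $c^*$ by binary search over the integer target $c$. Because the profits $w(\eta_i,R_i)$ are integers, $c^*$ lies in the range $[0, W]$ with $W = \sum_{e \in E(H^+)} w(e)$, whose bit length is polynomial in the input size, so $O(\log W)$ oracle calls pin down $c^*$ exactly; one final call at $c = c^*$ returns an optimal RP assignment. This polynomial-time Turing reduction from the NP-hard RP problem shows that RPC is NP-hard for $\lambda \geq 2$, and the phrase ``arbitrary target $c$'' reflects precisely that the reduction needs the freedom to set $c$ across this range.

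The main obstacle is not the logical structure of the reduction but the care needed to keep it genuinely polynomial and to justify that feasibility-checking is the right bridge: one must confirm that a general weighted $(\lambda+1)$-set packing instance is realizable as the hypergraph $H$ of a ridesharing instance --- exactly the correspondence already used to establish NP-hardness of RP --- and that the search stays within polynomially many calls. A cleaner alternative that avoids binary search is to observe directly that the feasibility problem of RPC is itself the NP-complete weighted $(\lambda+1)$-set packing decision problem, so RPC inherits NP-hardness immediately; I would nonetheless present the binary-search Turing reduction as the primary argument, since it makes the implication ``RP NP-hard $\Rightarrow$ RPC NP-hard'' fully explicit.
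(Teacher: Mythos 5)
Your proposal is correct, but it takes a genuinely different route from the paper. The paper gives a single-call reduction: starting from an RP instance, it keeps the same feasible matches and sets the target $c=\sum_{\eta_i\in D}\omega^*_i$, where $\omega^*_i$ is the largest profit of any match containing driver $\eta_i$; since each driver appears in at most one match of any assignment, the RPC instance is feasible if and only if the RP optimum attains this upper bound, and the paper concludes from the NP-hardness of RP. Your argument instead observes that RPC feasibility at target $c$ is exactly the question ``$c^*\geq c$'' (after the correct remark that negative edges can be discarded from a maximum-weight matching), and recovers $c^*$ by binary search over the integer range $[0,W]$, giving a polynomial-time Turing reduction from the RP optimization problem. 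Your version is more robust: it reduces the full optimization problem without requiring the hard set-packing instances to be ``tight'' in the sense that the optimum equals the per-driver upper bound, a point the paper's one-shot choice of $c$ implicitly leans on (it effectively needs hard instances of perfect-packing type, as in 3-dimensional matching, for the equivalence to decide an NP-hard question). The price is that yours is a Turing reduction with $O(\log W)$ oracle calls rather than one call, which is the standard and perfectly acceptable notion for hardness of optimization problems. Your closing observation---that the feasibility version of RPC \emph{is} the weighted $(\lambda+1)$-set packing decision problem---is in fact the cleanest single-call argument and is closest in spirit to what the paper intends. Both you and the paper rely on the same unproved premise that hard weighted set-packing instances are realizable as ridesharing hypergraphs $H$; you at least name this explicitly.
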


\begin{proof}
Given the formulation~\eqref{formulation-sharingmodel}-\eqref{constraint-disjoint-RP} for an instance of the RP problem, construct an instance of the RPC problem containing the same set of feasible matches $\cup_{\eta_i \in D, R_i\subseteq R} (\eta_i, R_i)$.
For a driver $\eta_i \in D$, let $\omega^*_i$ be the profit of the match containing $\eta_i$ with the largest profit.
Set the profit target $c = \sum_{\eta_i\in D} \omega^*_i$ for the constructed RPC problem instance.
Note that a driver $\eta_i$ cannot appear in more than one match in any solution of the RP and RPC problems.
Hence, the RPC problem has a feasible solution if and only if the objective function value of formulation~\eqref{formulation-sharingmodel}-\eqref{constraint-disjoint-RP} is equal to $c$.
Since solving the formulation~\eqref{formulation-sharingmodel}-\eqref{constraint-disjoint-RP} is NP-hard for $\lambda \geq 2$, the RPC problem is NP-hard for $\lambda \geq 2$.
\end{proof}

Further, Hazan et al.~\cite{Hazan-CC06} showed that the $(\lambda+1)$-set packing problem cannot be approximated to within $\Omega(\frac{\text{ln} (\lambda+1)}{\lambda+1})$ in general for $\lambda\geq 2$.
There exists a polynomial-time $\frac{2}{\lambda+2}$-approximation algorithm for approximating the maximum profit of $\Pi$~\cite{Berman-SWAT00}.
However, similar algorithms~\cite{Berman-SWAT00, Chandra-JoA01} cannot be directly applied to the RPC problem since these algorithms only approximate the maximum profit $w(\Pi)$ and do not consider the size of each subset (match) in $\Pi$ and the different elements in the subset/match.
Algorithms for the maximum set packing problem (e.g., \cite{Furer-ISCO14,Sviridenko-ICALP13}) cannot apply to the RPC problem either since such algorithms do not consider general integer weight.
Due to the NP-hardness of the RPC problem (Theorem~\ref{theorem-RPC-NPhard}) and the inapproximability of the weighted set packing problem, we study two variants of the RPC problem: RPC1 and RPC\texttt{+}.
The RPC1 problem variant assumes that for a given instance of the RPC problem, $\lambda_i = 1$ for every driver $\eta_i \in D$ ($\lambda=1$).
To solve the RPC1 variant, we use an approach in solving the maximum matching problem on bipartite graphs.
For the RPC\texttt{+} problem variant, we include one more constraint (called the \textit{non-negative profit constraint}) to formulation~\eqref{formulation-max-passenger}-\eqref{constraint-profit} of the RPC problem:
\[
w(\eta_i,R_i) \geq 0, \forall (\eta_i,R_i) \in \Pi.  \tag{\romannumeral6} \label{constraint-nonnegative-edge}
\]
To solve the RPC\texttt{+} variant, we use a local search approach similar to the ones in~\cite{Berman-SWAT00, Chandra-JoA01}.

\section{RPC1 variant} \label{sec-rpc1-algs}
For $\lambda = 1$, the weighted hypergraph $H(V,E,w)$, constructed in Section~\ref{sec-preliminary}, becomes a weighted bipartite graph. 
A solution to the RPC problem for $\lambda=1$ (with $c \leq c^*$) is a matching $M$ in $H$ with $w(M)\geq c$ and $|M|$ maximized.
We first give a polynomial-time exact algorithm (referred to as \textbf{ExactNF}) framework that uses network flow to find an optimal solution.
Then, we describe two implementations of ExactNF that are suitable for practice.
Finally, we give a simple $\frac{1}{2}$-approximation algorithm (referred to as \textbf{Greedy}).

\subsection{Exact algorithm}
The framework description of ExactNF is given below.
\begin{enumerate}\setlength\itemsep{0em}
\item Construct a flow network $N(V,E)$ from $H$, where $V(N) = \{s,t\} \cup V(H)$, $s$ is the source, and $t$ is the sink.
For each $\eta_i\in V(H)$, create an edge $(s,\eta_i)$ in $E(N)$ with cost $0$ and capacity $1$. 
For each $(\eta_i,r_j) \in E(H)$, create an edge $(\eta_i,r_j)$ in $E(N)$ with cost $-w(\eta_i,r_j)$ and capacity $1$.
For each $r_j\in V(H)$, create an edge $(r_j,t)$ in $E(N)$ with cost $0$ and capacity $1$.
Note that the maximum amount of flow that can be sent from $s$ to $t$ in $N$ is at most $n_{min}=\min\{|V(H) \cap D|, |V(H) \cap R|\}$.

\item For $1 \leq y\leq n_{min}$, find a minimum cost flow $f_y$ of value $y$ (sent from $s$ to $t$) or conclude that there is no flow of value $y$ in $N$.

\item For an edge $e \in E(N)$, let $f_y(e)$ be the flow value passing through $e$ in $f_y$. Let $c(f_y) = \sum_{e\in E(N) \mid f_y(e) > 0} w(e)$ be the cost of flow $f_y$.
If a flow $f_y$ with $c(f_y)\leq -c$ is computed in Step 2, then $y=\text{argmax}_{y} -c(f_y)\geq c$, and output the edges $\cup_{e\in E(N)\mid f_y(e)>0 \wedge e\in E(H)}$ with positive flow value in $f_y$ as solution $M$; otherwise, conclude there is no matching in $H$ with profit at least $c$.
\end{enumerate}

\begin{theorem}\label{theorem-exact-alg}
Algorithm ExactNF finds a matching $M$ with $w(M)\geq c$ and $|M|$ maximized or concludes that there is no matching $M$ with $w(M)\geq c$ in $H$ in polynomial time.
\end{theorem}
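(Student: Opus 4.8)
The plan is to prove correctness and polynomial running time in three parts: first establish a weight-preserving bijection between integral $s$--$t$ flows of value $y$ in $N$ and matchings of size $y$ in $H$; then deduce that a minimum cost flow of value $y$ realizes a maximum weight matching of size exactly $y$, so that a size-$y$ matching meeting the profit target exists precisely when $c(f_y)\leq -c$; and finally argue that returning the largest feasible $y$ (or reporting infeasibility) both maximizes the number of served passengers and runs in polynomial time.

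First I would set up the correspondence. All capacities in $N$ are integral, so the minimum cost flow problem has an integral optimum and each $f_y$ may be taken to be $\{0,1\}$-valued. Decomposing such a flow into $s$--$t$ paths, every path must have the form $s\to\eta_i\to r_j\to t$, since the only edges out of $s$ enter drivers and the only edges into $t$ leave passengers; the unit capacities of $(s,\eta_i)$ and $(r_j,t)$ force each driver and each passenger to lie on at most one path. Hence $M_y=\{(\eta_i,r_j): f_y(\eta_i,r_j)=1\}$ is a matching of size exactly $y$, and conversely every size-$y$ matching induces a feasible flow of value $y$. As $(s,\eta_i)$ and $(r_j,t)$ carry zero cost while $(\eta_i,r_j)$ carries cost $-w(\eta_i,r_j)$, the flow cost satisfies $c(f_y)=-w(M_y)$.

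Consequently, minimizing $c(f_y)$ maximizes $w(M_y)$ over size-$y$ matchings, so the min cost flow of value $y$ yields a maximum weight size-$y$ matching $M_y^*$ with $w(M_y^*)=-c(f_y)$. A size-$y$ matching with profit at least $c$ then exists iff $w(M_y^*)\geq c$, i.e.\ iff $c(f_y)\leq -c$; and since any nonempty matching $M$ of weight $\geq c$ has some size $y$ with $w(M_y^*)\geq w(M)\geq c$, the absence of any such $y$ certifies that no matching attains profit $\geq c$. Because $\lambda=1$ makes the number of served passengers equal to $|M|=y$, maximizing it subject to the profit constraint is exactly the task of choosing the largest $y$ with $c(f_y)\leq -c$, which is precisely what Step 3 reports.

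I expect the main obstacle to be the negative cost edges: whenever $w(\eta_i,r_j)>0$ the edge $(\eta_i,r_j)$ has negative cost, ruling out a direct Dijkstra-based shortest augmenting path. I would resolve this by noting that the base network $N$ is acyclic (flow proceeds $s\to D\to R\to t$) and hence has no negative cost cycle, so minimum cost flows of every value are well defined and computable by successive shortest augmenting paths with Johnson-style potentials, initialized by a Bellman--Ford or topological-order first pass. Each unit-capacity augmentation increases the flow value by exactly one along a shortest residual path, and these path costs are non-decreasing, so a single run produces $f_1,\dots,f_{n_{min}}$ with $c(f_y)$ convex in $y$; the feasible set $\{y: c(f_y)\leq -c\}$ is therefore a contiguous interval whose right endpoint is the sought maximizer. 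Since $n_{min}\leq\min\{|D|,|R|\}$ and each of the at most $n_{min}$ augmentations runs in polynomial time, the whole algorithm is polynomial, completing the proof.
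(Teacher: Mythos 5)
Your proposal is correct and follows essentially the same route as the paper's proof: unit capacities give a bijection between integral flows of value $y$ and size-$y$ matchings with $c(f_y)=-w(M_y)$, so the min-cost flow of each value $y$ realizes the maximum-weight size-$y$ matching, and taking the largest $y$ with $c(f_y)\leq -c$ solves RPC1 since $\lambda=1$ makes passengers served equal to $|M|$. The additional material on negative-cost edges, Johnson-style potentials, and the convexity of $c(f_y)$ is not needed for this framework-level theorem (the paper simply invokes a polynomial-time min-cost flow routine here) but correctly anticipates what the paper later develops for its ExactNF2 implementation (Properties~\ref{property-new-weight}--\ref{property-reduced-cost-optimality} and Lemma~\ref{lemma-concave-flow-cost}).
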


\begin{proof}
The edges $\cup_{e\in E(N)\mid f_y(e)>0 \wedge e\in E(H)}$ of a flow $f_y$ form a matching $M$ in $H$ of cardinality $y$.
Since the cost $c(f_y)$ is minimum among all flows of value $y$ and the profit $w(M)$ is the negation of $c(f_y)$, $w(M)$ is maximum of all matchings in $H$ of cardinality $y$.
If there is a matching $M$ in $H$ with $w(M)\geq c$, then $1\leq |M|\leq n_{min}$ (since $\lambda=1$) and Algorithm ExactNF finds the matching $M$ of the largest cardinality with $w(M)\geq c$.
An upper bound on the running time of ExactNF is $O(n_{min}\cdot t(N))$, where $t(N)$ is the time to compute a min-cost flow $f_y$ and is polynomial in the size of $N$~\cite{Ahuja-NF93}.
\end{proof}

The computational time of $t(N)$ heavily depends on how $f_y$ is computed.
We give two practical implementations of ExactNF (referred to as Algorithm \textbf{ExactNF1} and Algorithm \textbf{ExactNF2}) to compute $f_y$.
The first one uses a linear programming (LP) approach to find $f_y$ by a min-cost flow LP formulation, and the second one to find $f_y$ by graph algorithms.
Algorithm \textbf{ExactNF1} is described in the following.
\begin{enumerate}
\item Let $N'$ be the network $N(V,E)$ constructed above without the edge costs.
Construct a maximum flow LP formulation for $N'$ and solve the LP.
Let $y^*$ be the maximum flow value (optimal solution of the LP).

\item For $y=y^*$ to 1, find a minimum cost flow of value $y$ in $N$ (sent from $s$ to $t$) by solving another LP formulation for a min-cost flow in $N$.
If $c(f_y)\leq -c$, stop and output $\cup_{e\in E(N)\mid f_y(e)>0 \wedge \{u,v\}\in E(H)}$. Otherwise, continue until $c(f_y)\leq -c$ or conclude there is no such flow in $N$, implying $H$ does not have a feasible solution.
\end{enumerate}
In the worst case, such a simple implementation may need to solve the min-cost flow at most $n_{min}$ times.
However, most (majority) of the edges in $H$ have positive weight (negative weight in $N$) in practical scenarios, implying the minimum cost flow $f_y$ that satisfies $c(f_y)\leq -c$ usually has $y$ close to $y^*$.
Further, if $c$ is smaller than the largest profit obtainable for a noticeable amount (say 20\%), $y$ is very close to $y^*$.
These suggest that the number of min-cost flows to be computed is a small constant in practice.
Due to the restricted search space provided by the given upper flow value $y$, computing a min-cost flow $f_y$ on $N$ is faster than that on the original network $N$.
An efficient implementation for ExactNF1 is to compute $f_y$ for $y$ from $y^*$ down to 1 and output the solution when the first $f_y$ with $c(f_y)\leq -c$ is found.
From Theorem~\ref{theorem-exact-alg}, we have the following corollary.

\begin{corollary}\label{corollary-exact1-alg}
Algorithm ExactNF1 finds a matching $M$ with $w(M)\geq c$ and $|M|$ maximized or concludes that there is no matching $M$ with $w(M)\geq c$ in $H$ in $O(n_{min}\cdot t(N))$ time, where $t(N)$ is the time to find a min-cost flow $f_y$ by an LP solver.
\end{corollary}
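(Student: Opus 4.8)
The plan is to derive the corollary almost entirely from Theorem~\ref{theorem-exact-alg} by arguing that Algorithm ExactNF1 is a faithful realization of the ExactNF framework, differing only in how the value $y^*$ and each min-cost flow $f_y$ are obtained. First I would observe that the maximum-flow LP of Step 1 returns the true maximum flow value $y^*$ of $N'$, which equals the size of a maximum-cardinality matching in $H$ and satisfies $y^* \leq n_{min}$; for every integer $1 \leq y \leq y^*$ a flow of value $y$ exists, so each min-cost flow LP in Step 2 is well defined. I would then note that a min-cost flow LP of prescribed value $y$ in $N$ has the same optimal cost as the abstract $f_y$ of the framework, so the quantities $c(f_y)$ and $-c(f_y)=w(M)$ coincide with those used in Theorem~\ref{theorem-exact-alg}.

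For correctness I would invoke the identities $|M|=y$ and $w(M)=-c(f_y)$ established in the proof of Theorem~\ref{theorem-exact-alg}, so that the test $c(f_y)\leq -c$ is exactly the condition $w(M)\geq c$ for the matching extracted from $f_y$. Scanning $y$ downward from $y^*$ to $1$ and halting at the first index satisfying this test returns the largest $y$ for which a cardinality-$y$ matching of weight at least $c$ exists; since no matching has cardinality exceeding $y^*$, this is precisely the maximum-cardinality feasible matching, and the absence of any such $y$ correctly certifies infeasibility. No monotonicity of $c(f_y)$ in $y$ is needed here: because the scan inspects every candidate value, the first downward success is by definition the maximum feasible cardinality.

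The running-time bound then follows by counting LP solves: Step 1 is a single max-flow LP, and Step 2 performs at most $y^*\leq n_{min}$ min-cost flow LP solves, each costing $t(N)$, for a total of $O(n_{min}\cdot t(N))$.

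The step I expect to require the most care is the integrality argument connecting the LP optima to genuine matchings in $H$. An LP solver returns an optimal vertex of the feasible polytope, and I would argue that fixing the flow value to $y$ is equivalent to prescribing a supply of $y$ at $s$ and a demand of $y$ at $t$, which keeps the instance a standard min-cost flow problem whose node-arc incidence matrix is totally unimodular; with integer capacities this forces an integral optimal vertex, so the arcs of $E(H)$ carrying positive flow form a genuine matching rather than a fractional object. Once this integrality is in hand, every remaining point is bookkeeping and a direct appeal to Theorem~\ref{theorem-exact-alg}.
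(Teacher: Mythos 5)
Your proposal is correct and follows essentially the same route as the paper, which derives the corollary directly from Theorem~\ref{theorem-exact-alg} by viewing ExactNF1 as an implementation of the ExactNF framework and counting at most $n_{min}$ min-cost flow LP solves. The only addition is your total-unimodularity argument for integrality of the LP optima, which the paper leaves implicit but which is a standard and worthwhile point to make explicit.
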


The second approach for computing $f_y$ is by graph algorithms described in~\cite{Ahuja-NF93}: 
Let $f_y$ be the min-cost flow of flow value $y$ in $N$ and $N_{f_y}$ be the residual network of $N$ w.r.t. $f_y$, where $N_{f_0}=N$.
We compute $f_y$ in $N_{f_{y-1}}$ w.r.t. $f_{y-1}$ for $y=1,2,\ldots,n_{min}$.
As shown in~\cite{Ahuja-NF93}, the min-cost flow $f_y$ can be computed by the successive shortest path algorithm (SSPA).
For completeness and implementation detail, we give the details of the SSPA, including how
our stopping conditions are applied (which is different from the SSPA).
Further, definitions used in the correctness proof of Algorithm ExactNF2 are also introduced.
The detailed implementation of Algorithm \textbf{ExactNF2} is described below.
\begin{enumerate}
\item If $N(V,E)$ has negative weighted edges, first change the negative edge weights into non-negative weights as in Johnson's shortest path algorithm~\cite{Cormen-IntroAlg09,Johnson-JoA77}:
Use Bellman-Ford algorithm to compute the shortest distance $\dist(s,u)$ for every $u \in V(N)$.
Assign $h(u)=\dist(s,u)$ and compute a new cost $\hat{w}(u,v)=w(u,v)+h(u)-h(v)$ for every
$(u,v)\in E(N)$, then $\hat{w}(u,v)\geq 0$.
Label the network with the new weights as $\hat{N}(V,E)$ (a min-cost flow $f_y$ of value $y$ in $\hat{N}$ is a min-cost flow of value $y$ in $N$).

\item The rest is similar to the successive shortest path algorithm, except the stopping conditions are different.
Initialize the \emph{node potential} $\pi(u)=0$ for each $u\in V(\hat{N})$ and \emph{reduced cost} $w_{\pi}(u,v)=\hat{w}(u,v)-\pi(u)+\pi(v)$ (which is $\hat{w}(u,v)$ initially) for each $(u,v) \in E(\hat{N})$.
Let $f_0$ be an empty initial flow on $\hat{N}$.
Let $\hat{N}_{f_{y-1}}(\pi)$ be the residual network w.r.t. flow $f_{y-1}$ and reduced costs $w_{\pi}$, where $\hat{N}_{f_0}(\pi) = \hat{N}$.

For $y=1,\ldots,n_{min}$, find a minimum cost flow $f_y$ of value $y$ from $f_{y-1}$ as follows.
Compute single-source shortest paths from $s$, $SSSP(s)$, to every other vertex in $u \in V(\hat{N}_{f_{y-1}}(\pi))$ to get $\dist_{y-1}(s,u)$.
Update node potential $\pi(u) = \pi(u) - \dist_{y-1}(s,u)$ for every $u \in V(\hat{N})$.
Update reduced cost for every edge of $\hat{N}_{f_{y-1}}(\pi)$:
$w_{\pi}(u,v)=\hat{w}(u,v)-\pi(u)+\pi(v)$ if $(u,v) \in E(\hat{N}) \cap E(\hat{N}_{f_{y-1}}(\pi))$; otherwise, $w_{\pi}(u,v)$ is unchanged.
Let $P_{y-1}$ be the shortest $s-t$ path found by $SSSP(s)$.
Then, augment flow along $P_{y-1}$ to get a flow $f_y$, and construct the residual network $\hat{N}_{f_y}(\pi)$ w.r.t. $f_y$ and $w_{\pi}$.
Note that each edge $(u,v)$ in $P_{y-1}$ is in $\hat{N}_{f_{y-1}}(\pi) \setminus \hat{N}_{f_y}(\pi)$; and its reduced cost $w_{\pi}(u,v)$ is updated prior to augmenting flow along $P_{y-1}$.
After the augmentation, each such edge $(u,v)$ in $P_{y-1}$ has reduced cost $w_{\pi}(v,u)=-w_{\pi}(u,v)$ in $\hat{N}_{f_y}(\pi)$.
Due to the properties of the successive shortest path algorithm, $w_{\pi}(u,v)\geq 0$ for every $(u,v) \in E(\hat{N}_{f_y}(\pi))$ (reduced cost optimality conditions~\cite{Ahuja-NF93}).

\item Let $f_y$ be the flow found by Step 2 in each iteration. For an edge $(u,v) \in E(\hat{N})$, let $f_y(u,v)$ be the flow value passing through $(u,v)$ w.r.t. $f_y$.
Let $c(f_y)$ be the cost of flow $f_y$ in $N$, namely, $c(f_y) = \sum_{(u,v)\in E(\hat{N}) \mid f_y(u,v) > 0} \hat{w}(u,v) + h(v)- h(u) = \sum_{(u,v)\in E(\hat{N}) \mid f_y(u,v) > 0} w(u,v)$.
Stop Step 2 if:

\begin{itemize}
\item either the value $y \leq n_{min}$ of flow $f_y$ cannot be increased (an $s-t$ path cannot be found in $\hat{N}_{f_y}(\pi)$), or $c(f_{y+1}) > c(f_y)$ such that $c(f_{y+1}) > -c$.
For either case, if $c(f_y)\leq -c$ then output $\cup_{e\in E(\hat{N})\mid f_y(e)>0 \wedge e\in E(H)}$ as a solution. Otherwise, conclude there is no matching in $H$ with profit at least $c$.
\end{itemize}

\end{enumerate}

Johnson's shortest path algorithm has the following properties that also apply to ExactNF2.
\begin{property}\label{property-new-weight}
\cite{Cormen-IntroAlg09,Johnson-JoA77}.
\begin{enumerate}[a$)$]\setlength\itemsep{0em}
\item For every edge $(u,v) \in E(\hat{N})$, $\hat{w}(u,v) \geq 0$.
\item A flow $f_y$ of flow value $y$ is a min-cost flow in $N$ if and only if $f_y$ is a min-cost flow of flow value $y$ in $\hat{N}$.
\end{enumerate}
\end{property}

\begin{proof}
Since there is no cycle in $N(V,E)$, there is no negative cycle in $N(V,E)$.
There are only out-going edges from source $s$ with zero weight.
It follows from the analysis in \cite{Cormen-IntroAlg09} that part (a) holds;
and all edges of $E(N)$ can be changed to non-negative weighted edges $E(\hat{N})$ by using the Bellman-Ford algorithm.

Part (b), which is an extension of a property in~\cite{Cormen-IntroAlg09} on the length of a path in $N$ and $\hat{N}$.
Let $Q'= (s=v_0,v_1,\ldots,v_{z-1},v_z=t)$ be an $s-t$ path in $\hat{N}$.
As shown in~\cite{Cormen-IntroAlg09}, the weight of $Q'$ in $\hat{N}$ is
\begin{align} \label{eq-optimal-flow}
\hat{w}(Q')=\sum_{1\leq j\leq z} \hat{w}(v_{j-1},v_j)&=\sum_{1\leq j \leq z} w(v_{j-1},v_j)+h(v_{j-1})-h(v_j) \\
&=h(s)-h(t)+\sum_{1\leq j \leq z} w(v_{j-1},v_j) \nonumber \\
&=h(s)-h(t)+ w(Q'), \nonumber
\end{align}
where $w(Q')$ is the weight of $Q'$ in $N$.
For any flow $f_y$ in $N$ or $\hat{N}$, the edges with positive flow of $f_y$ form a set $\mathcal{Q} =\{Q_1,\ldots, Q_y\}$ of edge-disjoint $s-t$ paths since each edge in $N$ and $\hat{N}$ has unit capacity.
From Eq~\eqref{eq-optimal-flow}, the weight of $\mathcal{Q}$ in $\hat{N}$ is
\begin{align}
\hat{w}(\mathcal{Q}) = \sum_{1\leq i\leq y} \hat{w}(Q_i)=y(h(s) - h(t)) + \sum_{1\leq i\leq y} w(Q_i).
\end{align}
Since $h(s)$ and $h(t)$ are independent of any $s-t$ path, if $f_y$ is a min-cost flow ($\sum_{1\leq i\leq y} w(Q)$ is minimum) in $N$, then $f_y$ is a min-cost flow in $\hat{N}$ and vice versa.
\end{proof}

The SSPA has the following properties that also apply to ExactNF2 since the initial network $\hat{N}$ has non-negative edge weights and $SSSP(s)$ on $\hat{N}$ can be computed correctly by Property~\ref{property-new-weight}.
\begin{property}\label{property-reduced-cost-optimality}
\cite{Ahuja-NF93}
\begin{enumerate}[a$)$]\setlength\itemsep{0em}
\item (Reduced cost optimality conditions) For $y\geq 0$ and $\hat{N}_{f_y}(\pi)$, a feasible flow $f_y$ is an optimal solution to the min-cost flow problem if and only if some set of node potentials $\pi$ satisfy that $w_{\pi}(u,v)\geq 0$ for every edge $(u,v)\in\hat{N}_{f_y}(\pi)$.
\item Sending flow along an $s-t$ shortest path $p_y$ in $\hat{N}_{f_y}(\pi)$ w.r.t. reduced costs $w_{\pi}$ still maintains the reduced cost optimality conditions in each iteration.
\end{enumerate}
\end{property}

Next, we give a general definition of the weight of a path in $\hat{N}_{f_i}(\pi)$ for $N$ and $\hat{N}$.
Let $\hat{N}_{f_i}(\pi)$ be a residual network w.r.t. flow $f_i$ ($0 \leq i \leq n_{min}$) and reduced cost $w_{\pi}$, where $\hat{N}_{f_0}(\pi) = \hat{N}$.
For any path $P$ in $\hat{N}_{f_i}(\pi)$, define the weight of $P$ in $N$ as 
\[
w(P) = \sum_{(u,v)\in P\cap E(N)} w(u,v) - \sum_{(u,v)\in P\setminus E(N)} w(v,u)\]
and the weight of $P$ in $\hat{N}$ as 
\[
\hat{w}(P) = \sum_{(u,v)\in P\cap E(\hat{N})} \hat{w}(u,v) - \sum_{(u,v)\in P\setminus E(\hat{N})} \hat{w}(v,u).
\]
Note that if $P\setminus E(N) = \emptyset$, the weight $w(P)$ is the regular definition of $w(P)$ (same for $\hat{w}(P)$ as $E(N)=E(\hat{N})$).

\begin{corollary}\label{corollary-general-weight-path}
Let $P$ be an $s-t$ path in the residual network $\hat{N}_{f_i}(\pi)$ w.r.t. flow $f_i$ and reduced cost $w_{\pi}$.
The weight of $P$ in $\hat{N}$ is $\hat{w}(P) = h(s) - h(t) + w(P)$.
\end{corollary}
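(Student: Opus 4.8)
The plan is to reduce the claim to a telescoping identity, extending the calculation already carried out in Eq.~\eqref{eq-optimal-flow} from genuine forward $s$–$t$ paths in $\hat{N}$ to paths that may also traverse residual (backward) arcs. First I would write $P = (s = v_0, v_1, \ldots, v_{z-1}, v_z = t)$ and partition its arcs exactly as in the definition of $\hat{w}(P)$: the forward arcs $(v_{j-1}, v_j) \in E(\hat{N})$, and the residual arcs $(v_{j-1}, v_j) \in P \setminus E(\hat{N})$, each of which is the reverse of a flow-carrying edge $(v_j, v_{j-1}) \in E(\hat{N})$.

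The key step is to verify that \emph{every} consecutive pair $(v_{j-1}, v_j)$ of $P$ contributes exactly $h(v_{j-1}) - h(v_j)$ to the difference $\hat{w}(P) - w(P)$, independent of whether the arc is forward or residual. For a forward arc this is immediate from the definition $\hat{w}(v_{j-1}, v_j) = w(v_{j-1}, v_j) + h(v_{j-1}) - h(v_j)$, so the contributions to $\hat{w}(P)$ and $w(P)$ differ by $h(v_{j-1}) - h(v_j)$. For a residual arc the corresponding terms in $\hat{w}(P)$ and $w(P)$ are $-\hat{w}(v_j, v_{j-1})$ and $-w(v_j, v_{j-1})$, and subtracting gives $-(h(v_j) - h(v_{j-1})) = h(v_{j-1}) - h(v_j)$, the same expression. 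Thus the potential $h$ contributes in the same telescoping pattern along both kinds of arcs.

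Summing over $j = 1, \ldots, z$ then yields
\[
\hat{w}(P) - w(P) = \sum_{j=1}^{z} \big(h(v_{j-1}) - h(v_j)\big) = h(v_0) - h(v_z) = h(s) - h(t),
\]
which rearranges to the claimed identity $\hat{w}(P) = h(s) - h(t) + w(P)$. I do not anticipate a genuine obstacle; the only point requiring care is the sign bookkeeping for residual arcs, and the crucial realization is that the per-arc contribution to $\hat{w}(P) - w(P)$ is direction-independent. This is precisely what makes the intermediate potentials cancel telescopically, exactly as in Eq.~\eqref{eq-optimal-flow}, even though $P$ need not be a simple forward path in $\hat{N}$ and may revisit structure through residual arcs.
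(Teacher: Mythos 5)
Your proof is correct and follows essentially the same route as the paper's: split the arcs of $P$ into forward arcs of $E(\hat{N})$ and residual (reverse) arcs, observe that in either case the per-arc contribution to $\hat{w}(P)-w(P)$ is $h(v_{j-1})-h(v_j)$, and telescope. The paper presents this as a single grouped computation rather than an arc-by-arc claim, but the underlying argument is identical.
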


\begin{proof}
If $P\setminus E(N) = \emptyset$, then from Eq~\eqref{eq-optimal-flow}, corollary holds.
Suppose $P \setminus E(N) \neq \emptyset$.
Then, the weight of $P$ in $\hat{N}$ is
\begin{align*}
\hat{w}(P)&=\sum_{(u,v)\in P\cap E(\hat{N})} \hat{w}(u,v) - \sum_{(u,v)\in P\setminus E(\hat{N})} \hat{w}(v,u) \\ &=[\sum_{(u,v)\in P\cap E(N)} w(u,v)+h(u)-h(v)] - [\sum_{(u,v)\in P\setminus E(N)} w(v,u)+h(v)-h(u)] \\
&=[\sum_{(u,v)\in P\cap E(N)} w(u,v)+h(u)-h(v)] + [\sum_{(u,v)\in P\setminus E(N)} h(u) -h(v)] - \sum_{(u,v)\in P\setminus E(N)} w(v,u) \\
&=h(s)-h(t)+\sum_{(u,v)\in P\cap E(N)} w(u,v) - \sum_{(u,v)\in P\setminus E(N)} w(v,u) \\
&=h(s)-h(t)+ w(P).
\end{align*}
Hence, we have the corollary.
\end{proof}

\begin{lemma}\label{lemma-concave-flow-cost}
Let $f_y$ and $f_{y+1}$ be two flows found in Step 2 of ExactNF2.
If $c(f_y) < c(f_{y+1})$, then $c(f_{y+1}) \leq c(f_z)$ for every flow $f_z$ found after $f_{y+1}$, $z> y+1$.
\end{lemma}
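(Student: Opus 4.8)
The plan is to prove that the sequence of costs $c(f_1),c(f_2),\ldots$ produced in Step 2 is \emph{convex} in the flow value, i.e.\ that the per-unit marginal costs are non-decreasing; the lemma is an immediate consequence. Since every edge of $\hat{N}$ has unit capacity, each $f_{y+1}$ is obtained from $f_y$ by augmenting one unit of flow along the shortest $s$--$t$ path $P_y$ in $\hat{N}_{f_y}(\pi)$, and hence $c(f_{y+1})-c(f_y)=w(P_y)$, the $N$-weight of $P_y$ in the generalized sense defined just before Corollary~\ref{corollary-general-weight-path} (forward residual edges contribute $+w$, reverse residual edges contribute $-w$). The central claim I would establish is that $w(P_y)\le w(P_{y+1})$ for every $y$, so that the marginal costs form a non-decreasing sequence.

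Granting this claim, the lemma is routine. The hypothesis $c(f_y)<c(f_{y+1})$ says exactly that $w(P_y)=c(f_{y+1})-c(f_y)>0$. By the claim, $w(P_j)\ge w(P_y)>0$ for every $j\ge y+1$, so every augmentation performed after $f_{y+1}$ strictly increases the cost. Telescoping, for any $z>y+1$ we get $c(f_z)-c(f_{y+1})=\sum_{j=y+1}^{z-1} w(P_j)\ge (z-y-1)\,w(P_y)>0$, which yields $c(f_{y+1})\le c(f_z)$ as required.

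It remains to prove the marginal-cost claim, which is the main obstacle. By Corollary~\ref{corollary-general-weight-path}, for any $s$--$t$ residual path $P$ we have $\hat{w}(P)=h(s)-h(t)+w(P)$; since $h(s)-h(t)$ is a constant independent of $y$, it suffices to show that the shortest $s$--$t$ distance measured with $\hat{w}$ in the residual networks $\hat{N}_{f_y}(\pi)$ is non-decreasing in $y$. I would derive this from the reduced-cost optimality the algorithm maintains (Property~\ref{property-reduced-cost-optimality}). The only edges present in $\hat{N}_{f_{y+1}}(\pi)$ but absent from $\hat{N}_{f_y}(\pi)$ are the reverses $(v,u)$ of the edges $(u,v)$ lying on the augmenting path $P_y$. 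The potential update $\pi(u)\leftarrow\pi(u)-\dist_y(s,u)$ performed before augmenting drives the reduced cost of every edge on the shortest path $P_y$ to $0$, so each newly created reverse edge satisfies $w_{\pi}(v,u)=-w_{\pi}(u,v)=0$ in $\hat{N}_{f_{y+1}}(\pi)$, while all other residual edges retain $w_{\pi}\ge 0$ by Property~\ref{property-reduced-cost-optimality}. Thus every edge of $\hat{N}_{f_{y+1}}(\pi)$ still has non-negative reduced cost, so the reduced-cost length of a shortest $s$--$t$ path cannot have decreased. Converting reduced-cost length back to $\hat{w}$-length via the telescoping identity $w_{\pi}(P)=\hat{w}(P)-\pi(s)+\pi(t)$ then gives $\dist_{y+1}(s,t)\ge\dist_y(s,t)$, and hence $w(P_{y+1})\ge w(P_y)$.

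The point I expect to require the most care is the sign and potential bookkeeping across the residual-network update. One must verify that the reverse edges created by augmenting $P_y$ receive reduced cost exactly $0$ (not merely non-negative), that the generalized weight $w(P)$ for paths traversing reverse edges is handled consistently so that the identity $c(f_{y+1})-c(f_y)=w(P_y)$ holds with the correct sign, and that the telescoping identity $w_{\pi}(P)=\hat{w}(P)-\pi(s)+\pi(t)$ remains valid for paths using residual reverse arcs. Once these are checked, the monotonicity of $\dist_y(s,t)$, and therefore of the marginal costs $w(P_y)$, follows, completing the argument.
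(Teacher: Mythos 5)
Your proof is correct, but it takes a genuinely different route from the paper. You prove the stronger statement that the marginal costs $w(P_y)=c(f_{y+1})-c(f_y)$ form a non-decreasing sequence (i.e., $y\mapsto c(f_y)$ is convex), which is the classical distance-monotonicity property of the successive shortest path algorithm: reduced costs stay non-negative, the reverse arcs created by an augmentation enter with reduced cost exactly $0$, and the accumulated potentials let you convert the non-negativity of the reduced-cost $s$--$t$ distance in $\hat{N}_{f_{y+1}}(\pi)$ into $\hat{w}(P_{y+1})\geq\hat{w}(P_y)$, hence $w(P_{y+1})\geq w(P_y)$ by Corollary~\ref{corollary-general-weight-path}; the lemma then follows by telescoping. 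The paper instead argues by contradiction with an explicit path-exchange: assuming some later $P_{z-1}$ has negative weight, it locates the first residual network $\hat{N}_{f_g}(\pi)$ containing $P_{z-1}$, splits $P_{z-1}$ and the actually-augmented path $P_{g-1}$ at a reversed common segment, and recombines the pieces into an $s$--$t$ path in $\hat{N}_{f_{g-1}}(\pi)$ cheaper than $P_{g-1}$, contradicting the min-cost optimality of $f_g$ (Property~\ref{property-reduced-cost-optimality}(a)). Your approach buys a cleaner and stronger conclusion and reuses the potential machinery the algorithm already maintains; the one point you should nail down is the bookkeeping identity that the accumulated potential at $t$ equals minus the previous iteration's $\hat{w}$-distance, which is exactly the step you flag. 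The paper's approach avoids reasoning about how potentials evolve across iterations and uses only the optimality of each intermediate flow, at the price of a more delicate combinatorial case analysis on how the two paths interleave.
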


\begin{proof}
Assume for contradiction that there is a flow $f_{z}$ found after $f_{y+1}$ such that $c(f_z) < c(f_{y+1})$ for the smallest $z>y+1$.
Let $P_y$ be the path found in $\hat{N}_{f_y}(\pi)$ to get $f_{y+1}$ from $f_y$ (by augmenting flow along $P_y$), and similarly, let $P_{z-1}=(s, v_1, v_2,\ldots, v_q, t)$ be the path found in $\hat{N}_{f_{z-1}}(\pi)$.
Then, $c(f_{y+1})-c(f_y)=w(P_y) > 0$ and $c(f_z)-c(f_{z-1})=w(P_{z-1}) < 0$.
By Corollary~\ref{corollary-general-weight-path}, $\hat{w}(P_{z-1}) < \hat{w}(P_{y})$.
If $P_{z-1}$ exists in $\hat{N}_{f_y}(\pi)$, by Property~\ref{property-reduced-cost-optimality}, the algorithm would have augmented flow along $P_{z-1}$ instead of $P_y$ since it gives a flow $f'_{y+1}$ of value $y+1$ in $\hat{N}$ with a cost lower than that of $f_{y+1}$.
Assume $P_{z-1}$ does not exist in $\hat{N}_{f_y}(\pi)$.
Let $g \geq y+1$ be the smallest iteration such that $P_{z-1}$ exists in $\hat{N}_{f_g}(\pi)$ ($g\leq z-1$).
Let $P_{g-1}=(s,u_1,u_2,\ldots, u_x, t)$ be the path found in $\hat{N}_{f_{g-1}}(\pi)$.
Some edges of $P_{z-1}$ must have inverse directions in $P_{g-1}$.
Consider any maximal subpath $(v_i,v_{i+1},\ldots,v_j)$ of $P_{z-1}$ such that the reverse path $(v_j,\ldots,v_{i+1},v_i)$ is in $P_{g-1}$.
Label $P_{z-1}$ as three subpaths: $P_{z-1}(1) = (s,v_1,\ldots, v_i)$, $P_{z-1}(2) = (v_i,\ldots, v_j)$ and $P_{z-1}(3) = (v_j, \ldots, v_q,t)$.
Label $P_{g-1}$ as three subpaths: $P_{g-1}(1) = (s,u_1,\ldots, u_a,v_j)$, $P_{g-1}(2) = (v_j,\ldots, v_i)$ and $P_{g-1}(3) = (v_i, u_b,\ldots, u_x,t)$.
Then, $P'_{g-1}=P_{z-1}(1) \cup P_{g-1}(3)$ and $P''_{g-1}=P_{g-1}(1) \cup P_{z-1}(3)$ are two $s-t$ paths in $\hat{N}_{f_{g-1}}(\pi)$ (see Figure~\ref{fig-subpaths} for an example).
\begin{figure}[!ht]
\centering
\includegraphics[width=0.7\linewidth]{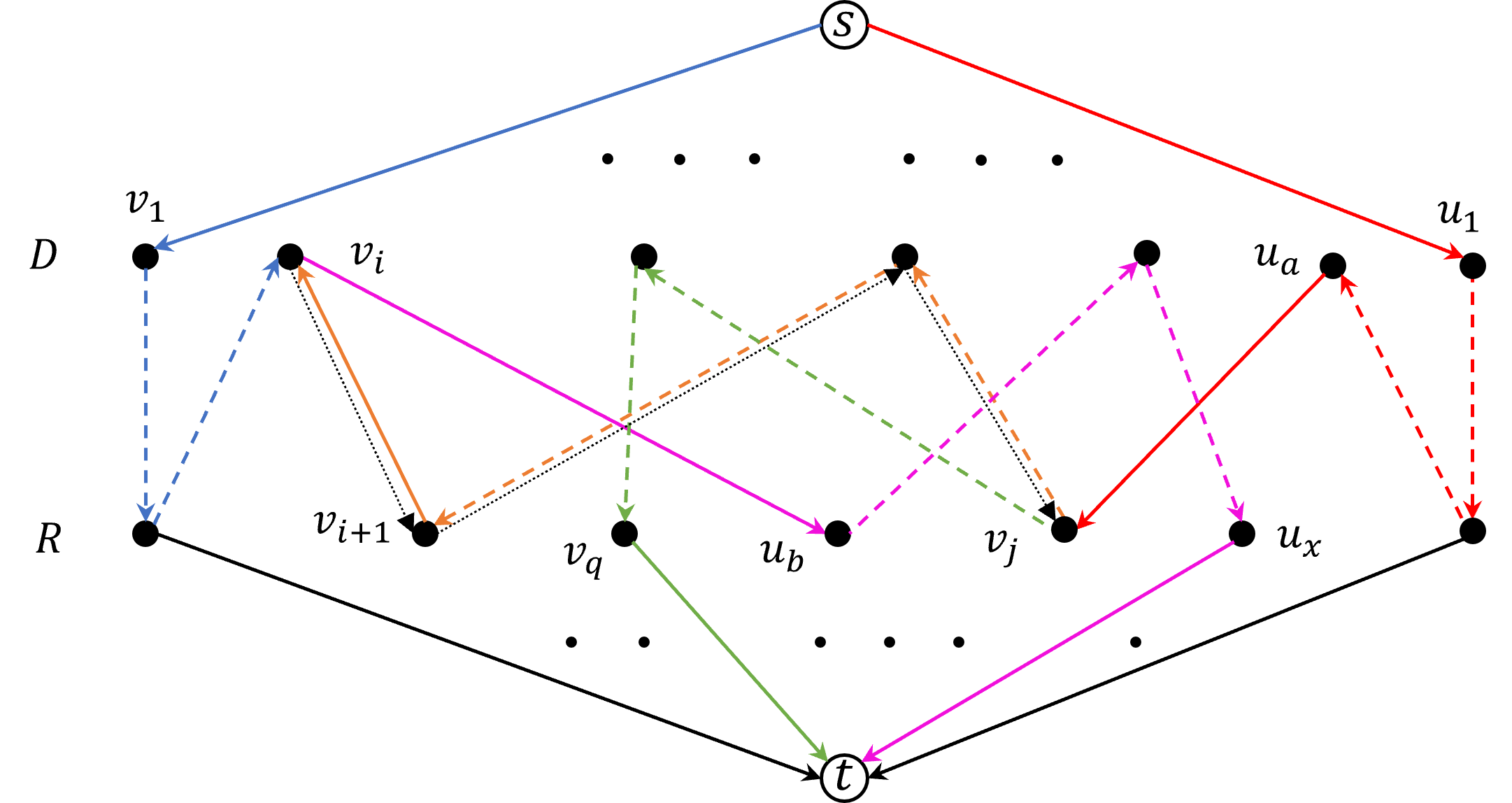}
\captionsetup{font=small}
\caption{Residual network $\hat{N}_{f_{g-1}}(\pi)$. Path $P_{g-1}=(s,u_1,\ldots, u_x, t)$ found in $\hat{N}_{f_{g-1}}(\pi)$ by the algorithm is labeled as three subpaths $P_{g-1}(1)$ (red lines), $P_{g-1}(2)$ (orange lines) and $P_{g-1}(3)$ (pink lines).
The path $P_{z-1}=(s, v_1,\ldots, v_q, t)$ exists in $\hat{N}_{f_g}(\pi)$ and is labeled as three subpaths $P_{z-1}(1)$ (blue lines), $P_{z-1}(2)$ (black dotted lines, these edges are in $\hat{N}_{f_g}(\pi)$ and not in $\hat{N}_{f_{g-1}}(\pi)$) and $P_{z-1}(3)$ (green lines).}
\label{fig-subpaths}
\end{figure}
Recall that $z>y+1$ is the smallest such that $c(f_z) < c(f_{y+1})$, which means $w(P_{g-1}) \geq 0$ and $w(P_{z-1}) < 0$.
Next, we show $w(P'_{g-1}) < w(P_{g-1})$ or $w(P''_{g-1}) < w(P_{g-1})$.
If $w(P_{z-1}(3)) < w(P_{g-1}(2)) + w(P_{g-1}(3))$, then 
\[
w(P''_{g-1}) = w(P_{g-1}(1)+ w(P_{z-1}(3)) < w(P_{g-1}(1)) + w(P_{g-1}(2)) + w(P_{g-1}(3)) = w(P_{g-1}).
\]
Suppose $w(P_{z-1}(3)) \geq w(P_{g-1}(2)) + w(P_{g-1}(3))$.
Note that $P_{z-1}(2)$ is the reversed path of $P_{g-1}(2)$, and $w(P_{z-1}(2)) = -w(P_{g-1}(2))$ 
since for each $(u,v)$ in $P_{z-1}(2)$, $w(u,v) = -w(v,u)$, where $(v,u)$ is in $P_{g-1}(2)$.
Because $w(P_{z-1}) < 0$, $w(P_{z-1}(1)) + w(P_{z-1}(3)) < -w(P_{z-1}(2))$, so
\[
w(P_{g-1}(2)) > w(P_{z-1}(1)) + w(P_{z-1}(3)).
\]
From this,
\begin{align*}
w(P_{z-1}(3)) &\geq w(P_{g-1}(2)) + w(P_{g-1}(3)) \\
&> w(P_{z-1}(1)) + w(P_{z-1}(3)) + w(P_{g-1}(3)),
\end{align*}
implying $w(P'_{g-1}) = w(P_{z-1}(1)) + w(P_{g-1}(3)) < 0$.
Thus, $w(P'_{g-1}) < w(P_{g-1})$.
From the above and Corollary~\ref{corollary-general-weight-path}, at least one of $\hat{w}(P'_{g-1}) < \hat{w}(P_{g-1})$ and $\hat{w}(P''_{g-1}) < \hat{w}(P_{g-1})$ is true.
This implies that we can obtain a min-cost flow $f'_g$ of value $g$ in $\hat{N}$ by augmenting flow along $P'_{g-1}$ (or $P''_{g-1}$) instead of $P_{g-1}$ such that $f'_g$ has a cost lower than that of $f_g$.
This is a contradiction to Property~\ref{property-reduced-cost-optimality} that $f_g$ is a min-cost flow of value $g$.
Therefore, a flow $f_z$ with $c(f_z) < c(f_{y+1})$ cannot exist for $z > y+1$.
\end{proof}

\begin{theorem}\label{theorem-exact2-alg}
Algorithm ExactNF2 finds a matching $M$ with $w(M)\geq c$ and $|M|$ maximized or concludes that there is no matching $M$ with $w(M)\geq c$ in $H$ in time $O(nm + n_{min}\cdot t(N))$, where $t(N)$ is the time for computing $SSSP(s)$ in a residual network of $\hat{N}$, $n=|V(N)|$ and $m=|E(N)|$.
\end{theorem}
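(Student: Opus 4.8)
The plan is to split the statement into correctness and running time, and to treat ExactNF2 as a faithful, early-stopping implementation of the ExactNF framework, so that most of the matching-level correctness is inherited from Theorem~\ref{theorem-exact-alg}. The only genuinely new correctness content is (i) that the successive-shortest-path computation in Step~2 really produces a minimum-cost flow $f_y$ of value $y$, and (ii) that the two stopping rules of Step~3 still identify the largest feasible cardinality. For (i) I would chain the already-proved facts: by Property~\ref{property-new-weight}(a) the reweighted network $\hat{N}$ has nonnegative edge weights, so $SSSP(s)$ can be computed correctly and, by Property~\ref{property-reduced-cost-optimality}, augmenting along the shortest $s$--$t$ path preserves the reduced-cost optimality conditions in every iteration; hence each $f_y$ is a min-cost flow of value $y$ in $\hat{N}$, and by Property~\ref{property-new-weight}(b) also in $N$. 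Decomposing $f_y$ into $y$ edge-disjoint unit-capacity $s$--$t$ paths gives a matching $M$ of size $y$ in $H$ with $w(M)=-c(f_y)$, where minimality of $c(f_y)$ yields maximality of $w(M)$ among size-$y$ matchings, exactly as in Theorem~\ref{theorem-exact-alg}.

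The core of the proof is part (ii). Here I would invoke Lemma~\ref{lemma-concave-flow-cost}: once the cost strictly increases at a step $y\to y+1$, it never again drops below $c(f_{y+1})$, so the costs $c(f_y)$ are, in effect, convex in $y$ and the set $\{y : c(f_y)\le -c\}$ is an interval. For the stopping case $c(f_{y+1})>c(f_y)$ with $c(f_{y+1})>-c$, Lemma~\ref{lemma-concave-flow-cost} gives $c(f_z)\ge c(f_{y+1})>-c$ for all $z>y+1$, so every cardinality $\ge y+1$ is infeasible; if moreover $c(f_y)\le -c$, then $y$ is the largest feasible cardinality and $f_y$ is output. For the other stopping case, where the flow value can no longer be increased, $f_y$ is the maximum-cardinality matching in $H$, and feasibility is decided directly by the test $c(f_y)\le -c$.

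The subtle point I expect to be the main obstacle is the ``no matching'' branch, where the algorithm halts with $c(f_y)>-c$: I must rule out a feasible flow of some \emph{smaller} value that the single forward pass skipped over. The key leverage is that Step~3 stops at the \emph{first} index satisfying a stopping condition. Combined with a short backward induction this closes the gap: since we did not stop at any earlier step and $c(f_y)>-c$, each earlier check can only have failed through $c(f_{j+1})\le c(f_j)$, so the costs are non-increasing all the way back to $c(f_0)=0$, forcing $c(f_j)\ge c(f_y)>-c$ for every $j\le y$; together with the convexity bound for $z>y$ this makes every cardinality infeasible, so the conclusion is correct. It is precisely this interplay between the first-stopping rule and Lemma~\ref{lemma-concave-flow-cost} that guarantees the one-directional pass terminates neither prematurely nor too late.

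Finally, for the running time I would account separately for the two phases. Step~1 runs Bellman--Ford on $N$ to obtain $h(u)=\dist(s,u)$ and the reweighting, costing $O(nm)$ with $n=|V(N)|$ and $m=|E(N)|$. Step~2 performs at most $n_{min}$ iterations, each dominated by a single $SSSP(s)$ computation on a residual network of $\hat{N}$ (valid since the reduced costs stay nonnegative), at cost $t(N)$ per iteration, with the potential and reduced-cost updates, path augmentation, and evaluation of $c(f_y)$ all absorbed into that bound. Summing the two phases yields the claimed $O(nm+n_{min}\cdot t(N))$ running time.
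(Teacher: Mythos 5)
Your proposal is correct and follows essentially the same route as the paper's proof: establish that each $f_y$ is a min-cost flow via Properties~\ref{property-new-weight} and~\ref{property-reduced-cost-optimality}, use Lemma~\ref{lemma-concave-flow-cost} to justify the early-stopping rules, and charge $O(nm)$ for the Bellman--Ford reweighting plus $n_{min}$ iterations of $SSSP(s)$. Your backward induction showing that no skipped smaller cardinality can be feasible when the algorithm reports ``no matching'' is in fact more explicit than the paper's own argument, which leaves that case implicit by deferring to Lemma~\ref{lemma-concave-flow-cost} and Theorem~\ref{theorem-exact-alg}.
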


\begin{proof}
From Property~\ref{property-reduced-cost-optimality}, the flow $f_y$ found by Step 3 at each iteration is a minimum cost flow of value $y$ in $\hat{N}$; and by Property~\ref{property-new-weight} (b), $f_y$ is a min-cost flow of value $y$ in $N$.
If Algorithm ExactNF2 terminates due to $f_y$ is maximum, then by Theorem~\ref{theorem-exact-alg}, either $f_y$ is an optimal solution, or there is no solution for $H$.

Suppose Algorithm ExactNF2 terminates due to $c(f_{y+1}) > c(f_y)$ and $c(f_{y+1}) > -c$.
By Lemma~\ref{lemma-concave-flow-cost}, any flow $f_z$ found after $f_{y+1}$ has $c(f_z) \geq c(f_{y+1}) > -c$.
If $c(f_{y}) \leq -c$, then $f_{y}$ implies an an optimal solution; otherwise, there is no solution for $H$.

Johnson's shortest path algorithm runs in $O(nm)$, due to the Bellman-Ford algorithm having a running time of $O(nm)$.
There are at most $n_{min}$ iterations after re-weighting the edges using Johnson's algorithm.
There are at most $n_{min}$ iterations.
In each iteration, it takes $t(N) \geq O(m+n\log n)$ time to compute $SSSP(s)$ and $O(m+n)$ time to update node potentials, reduced costs and residual network.
Therefore, Algorithm ExactNF2 runs in time $O(nm + n_{min}\cdot t(N))$.
\end{proof}

In the worst case, ExactNF2 has $n_{min}$ iterations (after re-weighting the edges). The stopping condition ($c(f_{y+1})>c(f_y)$ and $c(f_{y+1})>-c$) can reduce the number of iterations in practice.
Dijkstra's algorithm can be used to compute $SSSP(s)$ in each iteration since the edge cost in $\hat{N}_{f_y}(\pi)$ is non-negative.
As described in~\cite{Ahuja-NF93}, one can compute an $s-t$ path instead of computing $SSSP(s)$ to improve the computational time. 
After an $s-t$ path is found using Dijkstra's algorithm (with early termination), the update to each node potential $\pi(u)$ is based on whether vertex $\pi(u)$ has been permanently ($u$ is visited and explored) or temporarily labeled by Dijkstra's algorithm.
We implement this version in our experiment. 
Note, however, that Algorithm ExactNF2 still runs in time $O(nm + n_{min}\cdot t(N))$ asymptotically.
Due to the structure of the graph $N$, the length of any $s-t$ path in $N$ is exactly 3.
We can terminate the Bellman-Ford algorithm after 3 iterations (instead of $n-1$ iterations), resulting in time $O(m)$ for Johnson's shortest path algorithm. Hence, we have a reduced running time of $O(m + n_{min}\cdot t(N))$.

\subsection{Approximation algorithm}
Next, we present a simple $\frac{1}{2}$-approximation algorithm (referred to as \textbf{Greedy}) for an arbitrary profit target $c$. This algorithm is useful, in terms of actual computational time, if the instance $H$ contains many negative edges.
\begin{enumerate}\setlength\itemsep{0em}
\item Compute a maximum weight matching $M'$ in $H$.
\item Let $M = M'$.  For each iteration, select an edge $e''$ in $H^-=H\setminus H^+$ such that
\[
e'' = \text{argmax}_{e \in E(H^-) \setminus M\; \mid\; e \cap e' = \emptyset\; \forall e'\in M} w(e).
\] 
If $w(M) + w(e'') \geq c$, then add $e''$ to $M$. Repeat this until such an edge $e''$ does not exist (every edge of $H$ intersects with an edge of $M$) or $w(M) + w(e'') < c$.
\end{enumerate}
Algorithm Greedy has a running time of $O(t(H)+m\log m)$, where $t(H)$ is the time to find a maximum weight matching in $H$ and $m=|E(H)|$.

Next, we show Algorithm Greedy has a $\frac{1}{2}$-approximation ratio.
It is easy to see that Greedy always produces a feasible solution (matching).
Let $M'$ be the initial maximum weight matching computed in the first step of Greedy. Note that $M' \subseteq E(H^+)$.
Let $M^*$ be a matching in $H$ with $w(M^*) \geq c$ and $|M^*|$ maximized.

\begin{property} \label{property-greedy-initial-matching}
Every edge $e \in E(H^+) \setminus M'$ is incident to at least one edge of $M'$, implying any edge $e$ of $H$ not incident to $M'$ must have weight $w(e) < 0$.
\end{property}

Let $M_1=M\setminus M'$ be the set of edges added to $M'$ during the second step of Greedy.
Let $M^*_1=\{e \in M^* \mid e \text{ is not incident to any edge of } M'\}$.
From Property~\ref{property-greedy-initial-matching}, every $e\in M_1 \cup M^*_1$ has weight $w(e)<0$, namely, $(M_1 \cup M^*_1) \subseteq E(H^-)$.
From Property~\ref{property-greedy-initial-matching} and each edge of $M'$ is incident to at most two edges of $M^*\setminus M^*_1$, $|M'| \geq |M^*\setminus M^*_1|/2$ since $M'$ is a maximal matching in $H^+$.
Since $w(M')$ is maximum among all matchings in $H$, $w(M') \geq w(M^*\setminus M^*_1)$.

\begin{theorem}
Let $M$ be the matching found by the Greedy algorithm and $M^*$ be a matching in $H$ with $w(M^*) \geq c$ and $|M^*|$ maximized. Then, $\frac{|M|}{|M^*|} \geq \frac{1}{2}$, implying Greedy is $\frac{1}{2}$-approximate to RPC1.
\label{theorem-greedy-app}
\end{theorem}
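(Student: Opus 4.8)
The plan is to combine the two inequalities already established just before the statement. Writing $M = M' \cup M_1$ and $M^* = (M^*\setminus M^*_1)\cup M^*_1$, we are given $|M'|\ge |M^*\setminus M^*_1|/2$, so it suffices to prove the analogous bound for the negative part, namely $|M_1|\ge |M^*_1|/2$; adding the two inequalities then yields $|M|=|M'|+|M_1|\ge \tfrac12(|M^*\setminus M^*_1|+|M^*_1|)=\tfrac12|M^*|$, which is the claim. So the whole argument reduces to comparing the edges Greedy adds in its second phase with the corresponding part of the optimum.

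To bound $|M_1|$ against $|M^*_1|$, I would first observe that both are matchings made of edges of $H^-$ that are not incident to $M'$: for $M_1$ because Greedy only adds edges disjoint from the current $M\supseteq M'$, and for $M^*_1$ by definition. I would then split on how Greedy terminates the second phase. If it stops because no admissible edge of $H^-$ remains disjoint from $M$, then $M_1$ is a \emph{maximal} matching in this subgraph of $H^-$, and since $M^*_1$ is a matching in the same subgraph, the standard fact that a maximal matching is a $2$-approximation of a maximum matching gives $|M^*_1|\le 2|M_1|$. The interesting case is when Greedy halts because the profit target is reached, i.e. the best remaining edge $e''$ satisfies $w(M)+w(e'')<c$, equivalently $-w(M_1)+(-w(e''))>w(M')-c$.

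In that case I would exploit the budget together with the fact that Greedy always takes the least-negative available edge (so the costs $-w(g_i)$ of $M_1$ are nondecreasing and $-w(e'')\ge\max_i(-w(g_i))$). Partition $M^*_1$ into the edges $X$ sharing a vertex with $M_1$ and the edges $Y$ disjoint from $M_1$. Each $x\in X$ is charged to the first edge $g_i$ of $M_1$ that blocks it; since $x$ was still available when Greedy chose $g_i$, we get $-w(x)\ge -w(g_i)$, and each $g_i$ receives at most two charges. Each $y\in Y$ is available when Greedy stops, so $-w(y)\ge -w(e'')$. From $w(M^*)\ge c$ and $w(M^*\setminus M^*_1)\le w(M')$ we have $-w(M^*_1)\le w(M')-c$, while the stopping condition gives $-w(M_1)+(-w(e''))>w(M')-c$. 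Assuming $|M^*_1|\ge 2|M_1|+1$ and summing the per-edge lower bounds, a short counting argument classifying the edges of $M_1$ by how many edges of $X$ they receive (two, one, or none) shows $-w(M^*_1)\ge -w(M_1)+(-w(e''))$, contradicting the two budget inequalities; hence $|M^*_1|\le 2|M_1|$ here as well.

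The main obstacle is precisely this budget-limited case. The naive maximal-matching argument breaks down because the edges $Y$ are exactly the optimal negative edges that Greedy declined for budget reasons, so a purely combinatorial charging leaves them unaccounted and yields only the weaker bound $|M^*_1|\le 3|M_1|$. The crux is therefore to show that $X$ and $Y$ cannot both be large at once: a large $X$ forces its charged edges to be expensive (each bounded below by the cost of its blocker $g_{i(x)}$), while every edge of $Y$ is at least as costly as the most expensive edge of $M_1$, so an $M^*_1$ with more than $2|M_1|$ edges would have total negative weight exceeding the very budget $w(M')-c$ that $M_1$ respects. Making this weight bookkeeping precise — in particular justifying $-w(x)\ge -w(g_{i(x)})$ for each blocked edge and carrying out the resulting summation — is where the real work of the proof lies.
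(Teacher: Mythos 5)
Your proposal is correct and takes essentially the same route as the paper: the same decomposition $M=M'\cup M_1$, $M^*=(M^*\setminus M^*_1)\cup M^*_1$ with $|M'|\ge |M^*\setminus M^*_1|/2$ for the positive part, and for the negative part the same combination of a two-to-one charge of blocked optimal edges onto their greedy blockers (with weight domination because Greedy always takes the least negative available edge) and a budget argument for the unblocked ones. The paper packages this charging through the components of $M_1\,\Delta\,M^*_1$ and the sets $E_0$, $E_1$ rather than your $X/Y$ split, but the content is identical; if anything, your explicit summation over the number of charges each $g_i$ receives makes precise the step the paper asserts tersely, namely that $|E_1|\ge |E(\mathcal{F}^*_1)|$.
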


\begin{proof}
First, divide $F(V,E) = M_1 \Delta M^*_1$ into two collections of components:
$\mathcal{F}^*_1=\{C\in \mathcal{F}\mid E(C)\subseteq M^*_1\}$ and $\mathcal{F}_0=\mathcal{F}\setminus\mathcal{F}^*$.
Notice that $|E(\mathcal{F}^*_1)|=|\mathcal{F}^*_1|$ and $(E(F)\cap M_1) \subseteq E(\mathcal{F}_0)$.
We prove that $E(F)\cap M_1$ can be divided into two subsets $E_0$ and $E_1$ such that $|E_0|\geq |E(\mathcal{F}_0) \cap M^*_1|/2$ and $|E_1|\geq |E(\mathcal{F}^*_1)|$. 
Initially, $E_0=\emptyset$ and $E_1=\emptyset$.
For each component $C \in \mathcal{F}_0$, let $e_1,e_2,\ldots,e_b$ be the set of edges in $E(C) \cap M_1$ selected by Greedy, where $e_i$ is the $i^{th}$ edge added to $E(C)\cap M_1$ and $b=|E(C)\cap M_1|$.
We divide $M_1 \cap E(\mathcal{F}_0)$ into $E_0$ and $E_1$ as follows:
for each $C\in \mathcal{F}_0$ and $i=1,\ldots,b$, if there is an edge in $E(C)$ incident to $e_i$ then add $e_i$ to $E_0$ and remove every edge in $E(C)$ incident to $e_i$ from $E(C)$ (each removed edge is in $E(C)\cap M^*_1$); otherwise, add $e_i$ to $E_1$.
Since each edge $e_i\in E_0$ is incident to at most two edges of $M^*_1$, $|E_0|\geq |E(\mathcal{F}_0)\cap M^*_1|/2$.
By the Greedy algorithm, $w(E_0)\geq w(E(\mathcal{F}_0) \cap M^*_1)$.
From this and $w(M')\geq w(M^*\setminus M^*_1)$, if there is any edge $e^*$ in $E(\mathcal{F}^*_1)$ then there is a distinct edge $e$ in $E_1$ with $w(e)\geq w(e^*)$, implying $|E_1|\geq |E(\mathcal{F}^*_1)|$.
Therefore,
{\small
\[
\frac{|M_1|}{|M^*_1|} = \frac{|E_0| + |E_1| + |M_1\cap M^*|}{|E(\mathcal{F}_0) \cap M^*_1| + |E(\mathcal{F}^*_1)| + |M_1\cap M^*|} \geq \frac{|E(\mathcal{F}_0) \cap M^*_1|/2 + |E(\mathcal{F}^*_1)|+ |M_1\cap M^*|}{|E(\mathcal{F}_0) \cap M^*_1| + |E(\mathcal{F}^*_1)| + |M_1\cap M^*|} \geq \frac{1}{2}.
\]
}%
From this and $|M'| \geq |M^*\setminus M^*_1|/2$,
\begin{align*}
\frac{|M|}{|M^*|} = \frac{|M'|+|M_1|}{|M^*\setminus M^*_1| + |M^*_1|} \geq \frac{|M^*\setminus M^*_1|/2+|M^*_1|/2}{|M^*\setminus M^*_1| + |M^*_1|} \geq \frac{1}{2}. \tag*{\qedhere}
\end{align*}%
\end{proof}

\section{RPC+ variant} \label{sec-app-nonnegative}
Due to the non-negative profit constraint~\eqref{constraint-nonnegative-edge}, only edges in $H^+$ can be selected to solve the RPC\texttt{+} problem (formulation \eqref{formulation-max-passenger}-\eqref{constraint-nonnegative-edge})
Inherently, the profit target must be non-negative for the RPC\texttt{+} problem.
In this case, a matching $M$ with $w(M) \geq c$ and $|M|$ maximized may not be an optimal solution to the RPC\texttt{+} problem if $\lambda \geq 2$.
For instance, a matching $M_1=\{e_1,e_2,e_3\}$ with three edges may contain only three passenger vertices of $V(H) \cap R$, whereas a matching $M_2=\{e_4\}$ with one edge can contain four passenger vertices (assuming $\lambda\geq 4$).
We need to find a matching $M$ in $H$ such that the number of passenger vertices $V(H) \cap R$ contained in $M$ is maximized and $w(M) \geq c$.

\subsection{Algorithm}
We propose a local search algorithm for $\lambda \geq 2$, called \textbf{LS2}.
For an edge $e=\{\eta_i\} \cup R_i \in E(H)$, let $D(e)=\{\eta_i\}$ (the driver of edge $e$) and $R(e)=R_i$ (the passengers of $e$).
For a subset $E' \subseteq E(H)$, let $D(E') = \cup_{e \in E'} D(e)$ and $R(E') = \cup_{e \in E'} R(e)$.
For an edge $e \in E(H)$, let $N(e)$ be the set of edges incident to $e$, called the \emph{neighborhood} of $e$, and $N^+(e) = N(e) \cap E(H^+)$.
By constraint~\eqref{constraint-nonnegative-edge}, we only need to consider the subgraph $H^+$.
Although the profit target $c$ is an input parameter of an RPC\texttt{+} instance, one needs to determine that $c$ should be at most $c^*$ (the weight of a maximum weight matching in $H$) so that it admits a feasible solution.
However, finding $c^*$ is NP-hard as mentioned in the preliminaries.
Note that as $c$ gets closer to $c^*$, the chance of having a lower objective value is higher.
We suggest a way to set $c$ to be a reasonable target value.
We use a heuristic to compute a weight $\tilde{c}$ to approximate $c^*$ and set $c\leq \tilde{c}$ as a profit target.
In fact, our experiment shows that the total profit of solutions with respect to $c$ is not too far way from the total profit of solutions with respect to $c^*$ in practice.

There are two steps in Algorithm LS2.
In the first step, LS2 uses the \textit{simple greedy} in~\cite{Berman-SWAT00, Chandra-JoA01} to find an initial weighted set packing (hypergraph matching) to get $\tilde{c}$.
In the second step of LS2, a local search is used to improve the solution computed in the first step.
The first step produces a solution with a $\frac{1}{2\lambda}$-approximation ratio, and the second step gives a solution with a $\frac{2}{3\lambda}$-approximation ratio when a specific condition on the profit target is met.
Algorithm LS2 is given in the following, starting with $M'=\emptyset$.
\begin{enumerate}
\item In each iteration, select an edge $e'' \in E(H^+)$ that does not intersect with any edge of $M'$ and has maximum weight. That is, find an edge $e''$ in $E(H^+)$ such that 
\[
e'' = \text{argmax}_{e \in E(H^+)\setminus M'\; \mid\; e \cap e' = \emptyset\; \forall e'\in M'} w(e),
\] and add $e''$ to $M'$.
Repeat this until every edge of $E(H^+) \setminus M'$ intersects with $M'$.
Determine $c$ by setting $c \leq \tilde{c} = w(M')$.

\item Let $M = M'$ be the matching obtained after Step 1.
Let $A =\{e \in M \mid |R(e)|=1\}$ and assume $A=\{a_1,\ldots,a_q\}$ with $w(a_i)\leq w(a_j)$ for $1\leq i<j\leq q$.
An \emph{improvement} $\delta_e$ of an edge $e \in M$ is a subset of edges in $N^+(e)$ such that
\begin{itemize}\setlength\itemsep{0em}
\item $|\delta_e| \leq 2$, all edges of $(M \cup \delta_e)\setminus \{e\}$ are pairwise vertex-disjoint, $|R(\delta_e)| > |R(e)|$ and $w(M) + w(\delta_e) - w(e) \geq c$.
\end{itemize}
An improvement $\delta_e$ is \emph{maximum} if $|R(\delta_e) \setminus R(M)|$ is maximum among all improvements of $e$.
\begin{enumerate}\setlength\itemsep{0em}
\item If $\lambda = 2$, execute the following for-loop for each $a_i \in A$.
    \begin{itemize}
          \item For $i=1$ to $q$ do, if there is an improvement $\delta_{a_i}$ of $a_i$ such that $|R(\delta_{a_i})| = 4$, then perform an \emph{augmentation} as $M = (M \cup \delta_{a_i}) \setminus \{a_i\}$.
    \end{itemize}
\item Else if $\lambda \geq 3$, execute the following for-loop for each $a_i \in A$.
    \begin{itemize}
          \item For $i=1$ to $q$ do, if there is an improvement of $a_i$, then find a maximum improvement $\delta_{a_i}$ and perform an \emph{augmentation} as $M = (M \cup \delta_{a_i}) \setminus \{a_i\}$.
    \end{itemize}
\end{enumerate}
Output $M$.
\end{enumerate}

\subsection{Analysis of LS2}
Let $M'$ be the solution (a matching) found after Step 1 of the LS2 algorithm.
Let $0 \leq c \leq w(M')$ and $M^*$ be a matching in $H^+$ such that $|R(M^*)|$ is maximized and $w(M^*) \geq c$, representing an optimal solution to the RPC\texttt{+} problem.
We first show that $M'$ is already $\frac{1}{2\lambda}$-approximate for any $\lambda \geq 1$.

\begin{property}
Every edge $e \in E(H^+) \setminus M'$ is incident to at least one edge of $M'$.
\label{property-incident-edges}
\end{property}

For any matching $M''$ found during the execution of algorithm LS2 and an edge $e \in M''$, let $M^*(e) = \{e^* \in M^* \mid e^* \text{ is incident to } e\}$.
For two incident edges $e \in M'$ and $e^* \in M^*$, we say they are incident/intersected \textit{by trip} if $R(e) \cap R(e^*) \neq \emptyset$, \textit{by driver} if $D(e) = D(e^*)$, or \textit{by both} if $R(e) \cap R(e^*) \neq \emptyset$ and $D(e) = D(e^*)$.

\begin{theorem}
Let $M'$ be the matching found by Step 1 of the LS2 algorithm, $0 \leq c \leq w(M')$ and $M^*$ be a matching in $H^+$ such that $|R(M^*)|$ is maximized and $w(M^*) \geq c$.
Then, $\frac{|R(M')|}{|R(M^*)|} \geq \frac{1}{2\lambda}$ for $\lambda \geq 1$.
\label{lemma-2lambda-app-nonnegative}
\end{theorem}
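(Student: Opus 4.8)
The plan is to exploit that Step~1 produces a \emph{maximal} matching in $H^+$ (Property~\ref{property-incident-edges}): since $M^* \subseteq E(H^+)$, every edge of $M^*$ is either in $M'$ or incident to some edge of $M'$. I would then charge the passengers served by $M^*$ to the edges of $M'$, and show that each edge of $M'$ is charged at most $2\lambda$ times its own passenger count, which gives the ratio directly.

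First, for each $e \in M'$ I would use the set $M^*(e)$ of edges of $M^*$ incident to $e$ (as already defined before the statement). Because $M^*$ is a matching, the edges of $M^*(e)$ are pairwise vertex-disjoint, so each one must meet $e$ in a distinct vertex of $e$. Since $e=\{\eta_i\}\cup R(e)$ has exactly $|R(e)|+1$ vertices (one driver vertex $\eta_i$ together with the $|R(e)|$ trip vertices), this yields the incidence bound $|M^*(e)| \le |R(e)|+1$, where the ``$+1$'' accounts for the possibility that one edge of $M^*(e)$ shares the driver vertex rather than a trip vertex.

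Second, each edge of $H^+$ serves at most $\lambda$ passengers, so $|R(M^*(e))| \le \lambda\,|M^*(e)| \le \lambda\,(|R(e)|+1)$. Summing over $e\in M'$ and noting that every edge of $M^*$, hence every passenger in $R(M^*)$, is incident to at least one edge of $M'$ and is therefore counted at least once in $\sum_{e\in M'}|R(M^*(e))|$, I would obtain $|R(M^*)| \le \sum_{e\in M'}\lambda\,(|R(e)|+1) = \lambda\,|R(M')| + \lambda\,|M'|$. Finally, every edge of $M'$ contains at least one passenger, so $|M'|\le |R(M')|$, which gives $|R(M^*)| \le 2\lambda\,|R(M')|$ and thus $\frac{|R(M')|}{|R(M^*)|}\ge\frac{1}{2\lambda}$.

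The main obstacle is the bookkeeping in the charging argument rather than any deep step: I must justify the incidence bound $|M^*(e)|\le|R(e)|+1$ carefully, treating the driver vertex on equal footing with the trip vertices, and I must verify that the over-counting arising when a single $e^*\in M^*$ meets several edges of $M'$ only inflates the left-hand sum (so it preserves the upper bound on $|R(M^*)|$ rather than breaking it). Everything else reduces to the elementary inequalities $|R(e^*)|\le\lambda$ and $|M'|\le|R(M')|$.
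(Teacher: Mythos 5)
Your proof is correct and follows essentially the same route as the paper's: bound the number of edges of $M^*$ incident to each $e \in M'$ by $|R(e)|+1$ (one possible incidence at the driver vertex plus $|R(e)|$ at the trip vertices), multiply by $\lambda$, and use the maximality of $M'$ in $H^+$ to ensure every edge of $M^*$ is charged to some edge of $M'$. The only cosmetic difference is that you aggregate globally and invoke $|M'|\le|R(M')|$ at the end, whereas the paper applies the equivalent per-edge inequality $\frac{|R(e)|}{(|R(e)|+1)\lambda}\ge\frac{1}{2\lambda}$.
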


\begin{proof}
For every $e \in M'$, $e$ is incident to at most one edge of $M^*$ by driver and $|R(e)|$ edges of $M^*$ by trip.
From this, 
\[
\frac{|R(e)|}{|R(M^*(e))|} \geq \frac{|R(e)|}{(|R(e)|+1)\lambda} \geq \frac{1}{2\lambda}.
\]
From Property~\ref{property-incident-edges}, every edge $e^* \in M^*$ must be incident to some edge $e \in M'$ (or $e^* \in M'$).
Therefore, $\frac{|R(M')|}{|R(M^*)|} \geq \frac{1}{2\lambda}$.
\end{proof}

As can be seen in the analysis of Theorem~\ref{lemma-2lambda-app-nonnegative}, the approximation ratio is dominated by $\frac{|R(e)|}{|R(M^*(e))|} \geq \frac{|R(e)|}{(|R(e)|+1)\lambda}$ for every $e \in M'$.
We can generalize it as the following corollary.
\begin{corollary}
If $|R(e)| \geq b$ for every edge $e \in M$ and a constant $b \geq 1$, then $\frac{|R(M)|}{|R(M^*)|} \geq \frac{b}{(b+1)\lambda}$.
\label{corollary-general-app-nonnegative}
\end{corollary}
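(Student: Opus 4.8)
The plan is to reuse the per-edge counting argument from the proof of Theorem~\ref{lemma-2lambda-app-nonnegative} almost verbatim and only sharpen the final inequality using the hypothesis $|R(e)| \geq b$. First I would fix the matching $M$ (which I take to satisfy the maximality property of Property~\ref{property-incident-edges}, as the matchings arising in LS2 do) and bound $|R(M^*(e))|$ for each $e \in M$ exactly as before: the single driver vertex $D(e)$ lies in at most one edge of $M^*$, so $e$ is incident to at most one edge of $M^*$ by driver; and each of the $|R(e)|$ passenger vertices of $e$ lies in at most one edge of $M^*$, so $e$ is incident to at most $|R(e)|$ edges of $M^*$ by trip. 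Hence $|M^*(e)| \leq |R(e)| + 1$, and since every edge of $H^+$ carries at most $\lambda$ passengers, $|R(M^*(e))| \leq (|R(e)|+1)\lambda$, giving $\frac{|R(e)|}{|R(M^*(e))|} \geq \frac{|R(e)|}{(|R(e)|+1)\lambda}$.

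The only genuinely new ingredient is monotonicity. The map $x \mapsto x/(x+1)$ is increasing for $x \geq 1$, so the assumption $|R(e)| \geq b$ upgrades the per-edge bound to
\[
\frac{|R(e)|}{|R(M^*(e))|} \geq \frac{|R(e)|}{(|R(e)|+1)\lambda} \geq \frac{b}{(b+1)\lambda}
\]
for every $e \in M$, which recovers the bound of Theorem~\ref{lemma-2lambda-app-nonnegative} at $b = 1$.

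To pass from the per-edge bounds to the global ratio I would combine maximality with a mediant (Cauchy-type) inequality. By Property~\ref{property-incident-edges}, every edge $e^* \in M^*$ is either in $M$ or incident to some $e \in M$, so every passenger vertex of $R(M^*)$ appears in $R(M^*(e))$ for some $e \in M$; thus $|R(M^*)| \leq \sum_{e \in M} |R(M^*(e))|$. Since $M$ is a matching, its edges are pairwise vertex-disjoint and $|R(M)| = \sum_{e \in M} |R(e)|$. Combining these with the per-edge bound, and using that each summand ratio is at least $\frac{b}{(b+1)\lambda}$, the mediant inequality (if $a_i \geq \rho\, b_i$ for all $i$ then $\sum a_i \geq \rho \sum b_i$) yields $\frac{|R(M)|}{|R(M^*)|} \geq \frac{b}{(b+1)\lambda}$.

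The main thing to watch is the summation step, not the algebra. An edge $e^* \in M^*$ may be incident to several edges of $M$, so $\sum_e |R(M^*(e))|$ can overcount $|R(M^*)|$; this is harmless, since it only enlarges the denominator we are upper-bounding. The one delicate case is an edge $e \in M$ with $M^*(e) = \emptyset$, for which the fraction is undefined: such an $e$ contributes $|R(e)|$ to the numerator while adding nothing to the cover of $R(M^*)$, so I would simply restrict the mediant step to edges with $|R(M^*(e))| > 0$, which only strengthens the ratio. With that caveat the result is a direct, bookkeeping-level extension of Theorem~\ref{lemma-2lambda-app-nonnegative}.
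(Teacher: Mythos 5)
Your proposal is correct and follows the same route the paper intends: the corollary is stated as a direct generalization of the per-edge bound $\frac{|R(e)|}{|R(M^*(e))|} \geq \frac{|R(e)|}{(|R(e)|+1)\lambda}$ from Theorem~\ref{lemma-2lambda-app-nonnegative}, sharpened via the monotonicity of $x \mapsto x/(x+1)$ and combined over $e \in M$ exactly as you describe. Your explicit treatment of the summation step (the mediant inequality and the $M^*(e)=\emptyset$ case) only spells out details the paper leaves implicit.
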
%

If the number of edges in $M'$ containing only one passenger is small, then a better approximation ratio can be obtained, which is the main purpose of Step 2 of LS2.
Recall that $A =\{e \in M' \mid |R(e)|=1\}$.
When an edge $e$ of $A$ is replaced by an improvement $\delta_e$, $|R(\delta_e)| - |R(e)| \geq 1$, increasing $|R(M')|$ by at least one.
However, an improvement $\delta_e$ can decrease the total weight $w(M')$.
If the profit target $c$ is smaller than $w(M')$ by some fraction of $w(A)$ (as stated in Assumption~\ref{assumption-profit-target-range}), then we can get a $\frac{2}{3\lambda}$-approximation, and we prove this in the remainder of this section.

\begin{assumption}
The profit target $c$ is within $0 \leq c \leq w(M' \setminus A) + 2w(A) / (\lambda+1)$.
\label{assumption-profit-target-range}
\end{assumption}

Let $F(V,E) = M' \Delta M^*$ be the resulting graph of the symmetric difference of $M'$ and $M^*$ for the rest of the analysis.
Let $\mathcal{F}$ be the set of connected components in $F(V,E)$.
Let $\mathcal{C}(3) =\{C \in \mathcal{F} \mid |E(C)| = 3 \text{ and } |R(M'\cap E(C))|=1 \text{ and } |R(e^*)| > 1 \text{ for each } e^*\in M^*\cap E(C)\}$.
Let $Q = \cup_{C \in \mathcal{C}(3)} M'\cap E(C)$ and $Q^*=\cup_{C\in \mathcal{C}(3)} M^*\cap E(C)$.
By the definition of $\mathcal{C}(3) $, every $C \in \mathcal{C}(3)$ contains exactly one edge of $M'$ (due to $|R(M'\cap E(C))|=1$) and two edges $e^*_1$ and $e^*_2$ of $M^*$.
Then, there exists an improvement $\delta_e$ with $\delta_e=\{e^*_1\}$, $\delta_e=\{e^*_2\}$, or $\delta_e=\{e^*_1, e^*_2\}$ for every $e \in Q$.
Such an improvement $\delta_e$ is independent of each $C\in \mathcal{C}(3)$.
For an edge $e\in Q$ and an improvement $\delta_e$ with $|\delta_e|=2$ w.r.t. $M'$, $|R(\delta_e)| \geq |R(e^*_1\cup e^*_2)|-|R(e)| \geq 3$ because $|R(e^*_1)| \geq 2$ and $|R(e^*_2)| \geq 2$ by definition.
Similarly for an improvement $\delta_e$ with $|\delta_e|=1$, $|R(\delta_e)| -|R(e)| \geq 1$.
Since each edge of $Q$ is incident to two edges of $Q^*$ and any edge $e^*$ of $Q^*$ is only incident to one edge of $Q$, $|Q^*| = 2|Q|$.
Hence, we have the following property.

\begin{property} \label{property-Q}
Let $\mathcal{C}(3) $, $Q$ and $Q^*$ be defined as above.
\begin{enumerate}
\item[(1)] There exists an improvement $\delta_e$ w.r.t. $M'$ for every $e \in Q$ such that edges of $\delta_e$ are not incident to edges of $\delta_{e'}$ for every pair $e,e' \in Q$. 
\item[(2)] $|R(Q)| \geq \frac{|R(Q^*)|}{2\lambda}$.
\end{enumerate}
\end{property}

\begin{lemma}
$|R(M' \setminus Q)| \geq \frac{2}{3\lambda} |R(M^* \setminus Q^*)|$.
\label{lemma-property-ratio}
\end{lemma}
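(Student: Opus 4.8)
\medskip\noindent\emph{Proof sketch (proposal).}
The plan is to reduce this global bound to a per-component statement on $F=M'\Delta M^*$ and then settle each component by a fractional charging whose only obstruction turns out to be exactly a $\mathcal{C}(3)$ component. First I would decompose both sides along $F$. Every edge in $M'\cap M^*$ contributes $|R(e)|$ to both $|R(M'\setminus Q)|$ and $|R(M^*\setminus Q^*)|$ (such an edge lies in no component of $F$, hence in neither $Q$ nor $Q^*$), so these are harmless. Since $Q$ and $Q^*$ are precisely the $M'$- and $M^*$-edges of the $\mathcal{C}(3)$ components, deleting them removes whole components of $F$; thus the edges of $M'\setminus Q$ and $M^*\setminus Q^*$ lying in $F$ are exactly those in components $C\in\mathcal{F}\setminus\mathcal{C}(3)$. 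Because $M'$ and $M^*$ are matchings, their passenger sets are partitioned by their edges, so it suffices to prove, for every $C\notin\mathcal{C}(3)$, the local inequality $|R(M'\cap E(C))|\ge\frac{2}{3\lambda}\,|R(M^*\cap E(C))|$; the mediant inequality then assembles these (with the ratio-$1$ contribution of $M'\cap M^*$) into the lemma.

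Fix such a $C$ and write $m'=M'\cap E(C)$, $m^*=M^*\cap E(C)$. I would charge the passengers of each edge of $m^*$ \emph{fractionally} to its incident edges of $m'$, giving every $e\in m'$ the uniform capacity $\frac{3\lambda}{2}|R(e)|$ (so that $|R(e)|$ divided by its capacity equals $\frac{2}{3\lambda}$). If all $|R(m^*)|$ passengers can be placed, then $|R(m^*)|\le\frac{3\lambda}{2}|R(m')|$, the desired local inequality. Edges $e\in m'$ with $|R(e)|\ge 2$ are unproblematic: by the incidence bound in the proof of Theorem~\ref{lemma-2lambda-app-nonnegative}, such an $e$ meets at most $|R(e)|+1$ edges of $m^*$, carrying at most $(|R(e)|+1)\lambda\le\frac{3\lambda}{2}|R(e)|$ passengers, so $e$ can absorb all of its incident $m^*$-edges (the same slack used in Corollary~\ref{corollary-general-app-nonnegative} at $b=2$). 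Hence the only binding constraints come from the single-passenger edges of $m'$, each meeting at most one edge of $m^*$ by driver and one by trip.

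It remains to verify feasibility of the charging, i.e.\ the associated weighted Hall (min-cut) condition. I would first route every $m^*$-edge meeting a multi-passenger $m'$-edge to such an edge; the leftover set $S_0\subseteq m^*$ then meets only single-passenger $m'$-edges, and it suffices to place $S_0$ among those edges. Working on each connected piece $C'$ of the bipartite incidence graph between $S_0$ and the single-passenger edges of $m'$, I use Property~\ref{property-incident-edges} (each $m^*$-edge has a neighbour), the incidence-degree bound $\le 2$ for a single-passenger edge, and connectedness, which forces at least $|m'\cap C'|+|S_0\cap C'|-1$ incidences; together these give $|S_0\cap C'|\le|m'\cap C'|+1$, hence $|R(S_0\cap C')|\le\lambda(|m'\cap C'|+1)\le\frac{3\lambda}{2}|m'\cap C'|$ whenever $|m'\cap C'|\ge 2$.

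The delicate case, which I expect to be the main obstacle, is a piece with $|m'\cap C'|=1$: a lone single-passenger $m'$-edge adjacent to two multi-passenger $m^*$-edges having no other neighbour is \emph{exactly} a $\mathcal{C}(3)$ component. I would rule this out inside a genuine $C\notin\mathcal{C}(3)$ by noting that these three edges are then incidence-saturated and therefore disconnected from the rest of $C$, contradicting connectivity; the surviving one-edge pieces contain only a single $m^*$-edge or a single-passenger $m^*$-edge, for which $\lambda+1\le\frac{3\lambda}{2}$ (here $\lambda\ge 2$ is used) settles the bound. Combining the single-passenger piece bounds with the multi-passenger absorption gives $|R(m^*)|\le\frac{3\lambda}{2}|R(m')|$ on every $C\notin\mathcal{C}(3)$, and summing over all components proves the lemma.
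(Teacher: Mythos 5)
Your argument is correct, and it reaches the bound by a genuinely different organization of the same underlying incidence counts. The paper classifies the components of $F=M'\Delta M^*$ outside $\mathcal{C}(3)$ into three types --- $\mathcal{F}_1$ (one $M'$-edge carrying a single passenger, settled by $\frac{1}{\lambda+1}\ge\frac{2}{3\lambda}$ because one of the two incident $M^*$-edges must itself be single-passenger), $\mathcal{F}'_2$ (one $M'$-edge with $|R(e)|\ge 2$, settled by $\frac{|R(e)|}{(|R(e)|+1)\lambda}\ge\frac{2}{3\lambda}$), and $\mathcal{F}''_2$ (at least two $M'$-edges, settled by pairing $M'$-edges joined by an $M^*$-edge and counting at most $|R(e_1)\cup R(e_2)|+1$ incident $M^*$-edges per pair). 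You instead fix one component and run a charging scheme: multi-passenger $M'$-edges absorb all their incident $M^*$-edges (the same $(|R(e)|+1)\lambda\le\frac{3\lambda}{2}|R(e)|$ count as in Corollary~\ref{corollary-general-app-nonnegative} with $b=2$), and the residue is distributed over single-passenger $M'$-edges via the spanning-tree bound $a+b-1\le 2a$, with the unique infeasible configuration ($a=1$, two multi-passenger $M^*$-neighbours and nothing else) shown to be a full component that is precisely an element of $\mathcal{C}(3)$ --- your disconnection argument for that case is sound, since such $M^*$-edges can meet no other edge of $F$. What each approach buys: your routing-plus-Hall formulation treats components containing many $M'$-edges of mixed passenger counts uniformly, whereas the paper's $\mathcal{F}''_2$ pairing is stated only for pairs joined by an $M^*$-edge and leaves implicit how those pairs exhaust all $M^*$-edges of a large component; conversely, the paper's case split is shorter and makes $\mathcal{C}(3)$ appear directly as an excluded component type rather than emerging at the end of a feasibility analysis. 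Both arguments use $\lambda\ge 2$ in the same place ($\lambda+1\le\frac{3\lambda}{2}$), consistent with the lemma's use downstream.
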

\begin{proof}
Let $\mathcal{F}_1 = \{C \in \mathcal{F} \setminus \mathcal{C}(3) \mid |R(M'\cap E(C))|=1\}$ and $\mathcal{F}_2 = \mathcal{F} \setminus (\mathcal{F}_1 \cup \mathcal{C}(3))$.
Let $M'_1 = \cup_{C \in \mathcal{F}_1} M'\cap E(C)$, $M^*_1=\cup_{C\in \mathcal{F}_1} M^*\cap E(C)$, $M'_2 = \cup_{C \in \mathcal{F}_2} M'\cap E(C)$ and $M^*_2=\cup_{C\in \mathcal{F}_2} M^*\cap E(C)$.
We first consider (1) $\mathcal{F}_1$, and then (2) $\mathcal{F}_2$.

(1) For each $C \in \mathcal{F}_1$, $C$ has exactly one edge $e$ of $M'$ and at most two edges of $M^*$.
If $C$ contains two edges $e^*_1$ and $e^*_2$ of $M^*$, then one of $e^*_1$ and $e^*_2$ contains only one passenger by the definition of $\mathcal{C}(3)$.
This implies that 
\[
\frac{|R(e)|}{|R(e^*_1)\cup R(e^*_2)|} \geq \frac{1}{\lambda+1} \geq \frac{2}{3\lambda} \hspace*{4mm} \text{ for } \lambda \geq 2.
\]
If $C$ contains only one edge $e^*$ of $M^*$, $|R(e)|/|R(e^*)| \geq 1/\lambda$.
From these, $|R(M'_1)|\geq \frac{2}{3\lambda} |R(M^*_1)|$.

(2) Let $\mathcal{F}'_2=\{C\in \mathcal{F}_2 \mid |M'\cap E(C)|=1\}$.
For each $C \in \mathcal{F}'_2$ and an edge $e \in M'\cap E(C)$, $|R(e)| \geq 2$ by the definition of $\mathcal{F}_1$.
Then, 
\begin{align}
\frac{|R(e)|}{|R(M^*(e))|} \geq \frac{|R(e)|}{(|R(e)| +1)\lambda} \geq \frac{2}{3\lambda}. \label{eq-F2-1}
\end{align}
Let $\mathcal{F}''_2=\{C\in \mathcal{F}_2 \mid |M'\cap E(C)|\geq2\}$.
For any $C\in \mathcal{F}''_2$ and every two edges $e_1$ and $e_2$ in $M' \cap E(C)$ that are connected by an edge of $M^*\cap E(C)$, there can be at most $|R(e_1)\cup R(e_2)| +1$ edges of $M^*$ incident to $e_1$ and $e_2$.
From this, 
\begin{align}
\frac{|R(e_1\cup e_2)|}{|R(M^*(e_1)\cup M^*(e_2))|} \geq \frac{|R(e_1)\cup R(e_2)|}{(|R(e_1)\cup R(e_2)| +1)\lambda} \geq \frac{2}{3\lambda}. \label{eq-F2-2}
\end{align}
Eq~\eqref{eq-F2-1} and Eq~\eqref{eq-F2-2} imply that $\frac{|R(M'_2)|}{|R(M^*_2)|} \geq \frac{2}{3\lambda}$.
Since $M'\setminus Q = M'_1 \cup M'_2 \cup (M'\cap M^*)$ and $M^* \setminus Q^* = M^*_1 \cup M^*_2 \cup (M'\cap M^*)$, $|R(M' \setminus Q)| \geq \frac{2}{3\lambda} |R(M^* \setminus Q^*)|$ from the above.
\end{proof}

Let $B = \{a_1, a_2, \ldots, a_m\}$ be any subset of $A$ such that $w(a_1) \leq w(a_i) \leq w(a_j) \leq w(a_m)$ for $1\leq i<j \leq m$, $m\geq 2$ and $\lambda \geq 2$.
For two integers $1\leq x \leq y\leq m$, let $B(x,y) = \{a_x, \ldots, a_y\}$.
Since $B \subseteq A$, we know that
\[
w(A) - w(B) \geq z\cdot(w(A) - w(B)) = z\cdot w(A) - z\cdot w(B)
\]
for any constant $0\leq z \leq 1$.
Hence, we have the following property.

\begin{property}
$\frac{2}{\lambda+1}w(B) - w(B) \geq \frac{2}{\lambda+1}w(A) - w(A)$ for $\lambda \geq 1$.
\label{property-A-B}
\end{property}

\begin{lemma}
Let $B'=B(1, \floor{\frac{(\lambda-1)m}{\lambda+1}})$ for $\lambda \geq 2$.
For matching $M'' = (M' \setminus B') \cup E_{B'}$, where $E_{B'} \subseteq (E(H^+)\setminus B')$ is a (an empty) set of edges incident to $B'$ s.t. $M''$ remains as a matching, $w(M'') \geq c$.
\label{lemma-minimum-A1-size}
\end{lemma}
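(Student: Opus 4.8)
The plan is to show that the profit constraint survives the replacement even in the worst case where no new edges are added, and that bringing in any edges of $E_{B'}$ can only help. Since $M'$ was assembled in Step~1 of LS2 purely from $E(H^+)$, every edge of $M'$ — and hence of $A$, $B$, and $B'$ — has non-negative weight, and every edge of $E_{B'} \subseteq E(H^+)$ is non-negative as well. Because $M''$ is a matching, the edge sets $M' \setminus B'$ and $E_{B'}$ are disjoint, so $w(M'') = w(M' \setminus B') + w(E_{B'}) \geq w(M' \setminus B')$. It therefore suffices to prove $w(M' \setminus B') \geq c$.

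First I would rewrite the quantity by splitting $M'$ along $A$. Since $A \subseteq M'$ and $B' \subseteq B \subseteq A$, we have $w(M' \setminus B') = w(M' \setminus A) + w(A) - w(B')$. Invoking Assumption~\ref{assumption-profit-target-range}, namely $c \leq w(M' \setminus A) + \frac{2}{\lambda+1} w(A)$, the target inequality reduces to $w(A) - w(B') \geq \frac{2}{\lambda+1} w(A)$, i.e.\ to the clean bound $w(B') \leq \frac{\lambda-1}{\lambda+1} w(A)$. Thus the whole lemma collapses to controlling the weight of the removed prefix $B'$.

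The key step is to bound $w(B')$. Since $B' = B(1, \lfloor\frac{(\lambda-1)m}{\lambda+1}\rfloor)$ consists of the $k = \lfloor\frac{(\lambda-1)m}{\lambda+1}\rfloor$ lowest-weight edges of the sorted list $w(a_1) \leq \cdots \leq w(a_m)$, the average weight over $B'$ is at most the average over $B$; formally $w(B') \leq \frac{k}{m}\, w(B)$, the standard fact that a prefix of a weight-sorted non-negative sequence carries no more than its proportional share of the total. Using the floor bound $k \leq \frac{(\lambda-1)m}{\lambda+1}$ then yields $w(B') \leq \frac{\lambda-1}{\lambda+1}\, w(B)$.

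Finally I would close the gap between $w(B)$ and $w(A)$ using Property~\ref{property-A-B}, which (for $\lambda \geq 2$) is just the statement $w(B) \leq w(A)$ guaranteed by $B \subseteq A$ and non-negativity of the weights. Chaining everything gives $w(B') \leq \frac{\lambda-1}{\lambda+1}\, w(B) \leq \frac{\lambda-1}{\lambda+1}\, w(A)$, which is exactly the bound the reduction demanded. I expect the only nonroutine point to be the averaging inequality $w(B') \leq \frac{k}{m}\, w(B)$: it crucially uses that $B'$ is the bottom $k$ edges of the weight-sorted set $B$ rather than an arbitrary subset, combined with the floor estimate for $k$. Every remaining step is substitution of the assumption together with the observation that all relevant weights are non-negative.
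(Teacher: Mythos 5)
Your proposal is correct and follows essentially the same route as the paper's proof: both hinge on the sorted-prefix bound $w(B') \leq \frac{\lambda-1}{\lambda+1} w(B)$, the comparison of $w(B)$ with $w(A)$ (Property~\ref{property-A-B}), the non-negativity of $w(E_{B'})$, and Assumption~\ref{assumption-profit-target-range}; you merely rearrange the algebra by reducing the claim to $w(B') \leq \frac{\lambda-1}{\lambda+1} w(A)$ first, and you spell out the averaging step more explicitly than the paper does.
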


\begin{proof}
Each edge $e \in B$ that is replaced by the algorithm can reduce $w(M')$ by at most $w(e)$ since the improvement $\delta_e$ of $e$ has weight $w(\delta_e) \geq 0$.
Recall that the elements of $B$ are sorted in the increasing order of their weights.
For $B'=B(1,\floor{\frac{(\lambda-1)m}{\lambda+1}})$, $B'$ contains at most $\frac{\lambda-1}{\lambda+1}$ smallest elements in $B$.
Hence,
\[
w(B') \leq \frac{\lambda-1}{\lambda+1}w(B)=(1-\frac{2}{\lambda+1})w(B).
\]
Let $M''=(M' \setminus B') \cup E_{B'}$.
From $w(E_{B'})\geq 0$, Property~\ref{property-A-B} and Assumption~\ref{assumption-profit-target-range},  
\begin{align*}
w(M'')\geq w(M')-w(B') &\geq w(M')-(1-\frac{2}{\lambda+1})w(B)\\
&= w(M'\setminus B)+\frac{2}{\lambda+1}w(B) \\
&\geq w(M'\setminus A)+\frac{2}{\lambda+1}w(A) \geq c. \tag*{\qedhere}
\end{align*}
\end{proof}

During Step 2 of Algorithm LS2, an edge $e^* \in Q^*$ remains \emph{unblocked} if $e^*$ is not incident to any edge of improvement $\delta_{e}$ for any $e \in A$; and, after the augmentation of any improvement $\delta_{e}$ for some $e \in A$, an unblocked edge $e^* \in Q^*$ becomes \emph{blocked} if $e^*$ is incident to any edge of $\delta_{e}$, that is, $D(e^*) \cap D(\delta_{e}) \neq \emptyset$ or $R(e^*) \cap R(\delta_{e}) \neq \emptyset$.

\begin{lemma}
Let $M$ be the final matching found by the LS2 algorithm.
Let $M^*$ be a matching in $H^+$ such that $|R(M^*)|$ is maximized and $w(M^*) \geq c$, representing an optimal solution to the RPC\texttt{+} problem. 
For $\lambda = 2$, $\frac{|R(M)|}{|R(M^*)|} \geq \frac{2}{3\lambda}$.
\label{lemma-lambda2-app}
\end{lemma}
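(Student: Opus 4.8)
The plan is to reduce the claim to the components of $F(V,E)=M'\Delta M^*$ lying in $\mathcal{C}(3)$. For the components outside $\mathcal{C}(3)$, Lemma~\ref{lemma-property-ratio} already gives $|R(M'\setminus Q)|\geq \frac{2}{3\lambda}|R(M^*\setminus Q^*)|$, and these components are vertex-disjoint from the $\mathcal{C}(3)$ ones; so it suffices to account for the passengers of $Q^*$, i.e.\ to show that Step 2 of LS2 recovers enough of the $4|Q|$ passengers in $R(Q^*)$ (recall $|Q^*|=2|Q|$ and each edge of $Q^*$ has exactly two passengers when $\lambda=2$). First I would record the exact bookkeeping of Step 2: replacing an edge $a_i\in A$ (one passenger) by an improvement $\delta_{a_i}$ with $|R(\delta_{a_i})|=4$ raises $|R(M)|$ by exactly $3$, and since every augmented edge has two passengers it is never a candidate for later removal and stays vertex-disjoint from subsequent improvements, so augmentations are permanent and $|R(M)|=|R(M')|+3|S|$, where $S$ is the set of augmented edges.

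Next I would analyze the local structure of a component $C\in\mathcal{C}(3)$ for $\lambda=2$. Its unique $M'$-edge $e=(\eta,r)$ is joined to its two $M^*$-edges through its two vertices, so exactly one of them, say $e^*_1$, shares the driver $\eta$ and the other, $e^*_2$, shares the passenger $r$; thus $e^*_1=(\eta,r_1,r_2)$ and $e^*_2=(\eta',r,r_3)$ with $r_1,r_2,r_3,\eta'$ fresh, and the natural improvement $\delta_e=\{e^*_1,e^*_2\}$ has $|R(\delta_e)|=4$. By Property~\ref{property-Q}(1) the improvements chosen for distinct edges of $Q$ are mutually non-incident, so augmenting one $Q$-edge never blocks another; hence every $e\in Q$ that is not augmented is obstructed only by a blocking augmentation of some edge of $A\setminus Q$ or by the profit test. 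The structural observation I would exploit is that while $e$ stays in $M$ its driver $\eta$ and passenger $r$ remain occupied, so any augmentation incident to $e^*_1$ can meet it only through $r_1$ or $r_2$; thus blocking the driver-sharing edge $e^*_1$ forces at least one of its passengers into $R(M)$.

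I would then classify the edges of $Q$ at termination into augmented, blocked-but-not-augmented, and unblocked-but-not-augmented, and charge the passengers of $R(Q^*)$ accordingly: augmented components contribute their full improvement, blocked components contribute the passengers pinned down by the blocking augmentations (which themselves lie in $S$ and so are paid for by the $3|S|$ term), and the single passenger $r$ is always present. The profit-obstructed class is the delicate one: for such an $e$ the improvement $\delta_e$ is available (unblocked throughout) but the test $w(M)+w(\delta_e)-w(e)\geq c$ failed. Since LS2 processes $A$ in nondecreasing weight order and each augmentation lowers $w(M)$ by at most $w(e)$, I would invoke Assumption~\ref{assumption-profit-target-range} (giving the slack $w(M')-c\geq \tfrac{2}{\lambda+1}w(A)$) together with Lemma~\ref{lemma-minimum-A1-size} to argue that the lightest $\tfrac{\lambda-1}{\lambda+1}$ fraction of these unblocked edges can always be augmented inside the budget, confining the genuinely profit-obstructed edges to a bounded, heavier fraction.

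The hard part will be exactly this reconciliation between the \emph{dynamic} weight decrease during the greedy pass and the \emph{static} budget of Lemma~\ref{lemma-minimum-A1-size}: I must ensure that augmentations of edges in $A\setminus Q$ do not exhaust the profit slack before the light, unblocked edges of $Q$ are reached, and that the driver-only blocking case (where the passenger $r_3$ of $e^*_2$ can escape coverage) is absorbed by the per-component averaging rather than breaking it. Once the number of uncovered $Q^*$-passengers is bounded in this way, combining the $Q^*$ count with the $M^*\setminus Q^*$ bound from Lemma~\ref{lemma-property-ratio} and the identity $|R(M)|=|R(M')|+3|S|$ should yield $\frac{|R(M)|}{|R(M^*)|}\geq \frac{2}{3\lambda}=\frac13$, completing the proof for $\lambda=2$.
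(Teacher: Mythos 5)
Your setup coincides with the paper's: the same decomposition $F=M'\Delta M^*$, the reduction of everything outside $\mathcal{C}(3)$ to Lemma~\ref{lemma-property-ratio}, the observation that for $\lambda=2$ each augmentation nets exactly three new passengers and is permanent, and the correct local structure of a component of $\mathcal{C}(3)$ (one $M^*$-edge meets $e$ in the driver, the other in the passenger). But there is a genuine gap exactly where you flag it: you never produce a lower bound on the number of augmentations LS2 actually performs, i.e.\ on $|S|$, so the identity $|R(M)|=|R(M')|+3|S|$ yields nothing. The per-component charging you propose (augmented / blocked / profit-obstructed, patched by ``per-component averaging'') is not carried out, and you yourself name an unresolved failure mode (the driver-only blocking case where $r_3$ escapes coverage). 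A further small inaccuracy: ``augmenting one $Q$-edge never blocks another'' only holds if the algorithm happens to pick the canonical improvements of $\delta_Q=\{\delta_e\mid e\in Q\}$; for $\lambda=2$ it may augment $e'\in Q$ with some other $|R(\cdot)|=4$ improvement whose edges do intersect $\delta_e$, so blockings among $Q$-edges cannot be ruled out. As written, the argument does not establish the inequality $|R(M)|-|R(M')|\geq |Q|/3$ that the conclusion needs.

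The paper closes this gap with a global count rather than a charging scheme. Lemma~\ref{lemma-minimum-A1-size}, applied with $m=|A|-\floor{2|A|/3}\geq\ceil{|Q|/3}$, shows that $\floor{m/3}\geq\ceil{|Q|/9}$ augmentations fit inside the profit budget of Assumption~\ref{assumption-profit-target-range}. Each performed augmentation contributes $3$ new passengers and is incident to at most $5$ of the $|Q|$ pairwise non-incident improvements in $\delta_Q$ (at most $4$ when the augmented edge lies in $A\setminus Q$); so, the paper argues, before $\ceil{|Q|/9}$ augmentations have occurred, fewer than $5\ceil{|Q|/9}<|Q|$ members of $\delta_Q$ are blocked, an unblocked improvement remains, and the loop cannot stop for lack of candidates. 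Hence at least $\ceil{|Q|/9}$ augmentations occur and $|R(M)|-|R(M')|\geq 3\ceil{|Q|/9}\geq|Q|/3$. Combined with $|R(Q)|=|Q|\geq |R(Q^*)|/(2\lambda)$ from Property~\ref{property-Q}, this gives $|R(Q)|+|Q|/3=\tfrac{4}{3}|R(Q)|\geq\tfrac{2}{3\lambda}|R(Q^*)|$, which together with Lemma~\ref{lemma-property-ratio} finishes the proof. Note that the paper never tries to cover the passengers of $R(Q^*)$ individually --- it only needs the aggregate gain $|Q|/3$ --- which is precisely how it sidesteps the reconciliation you correctly identify as the hard part.
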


\begin{proof}
We show that at least $|Q|/3$ additional edges are added to $M'$, that is, $|R(M)| - |R(M')| \geq |Q|/3$.
For $\lambda = 2$, removing $|B(1, \floor{\frac{(\lambda-1)m}{\lambda+1}})| \geq |B(1, \floor{\frac{m}{3}})|$ edges of $A$ from $M'$ results in a matching $M''$ with $w(M'') \geq c$ by Lemma~\ref{lemma-minimum-A1-size}.
Letting $m = |A|-\floor{2|A|/3} \geq \ceil{|A|/3} \geq \ceil{|Q|/3}$ implies that $\floor{\ceil{|A|/3}/3} \geq \ceil{|Q|/9}$ improvements (in augmentations) can be performed on $M'$ such that the resulting matching $M''$ has weight $w(M'') \geq c$.
By the definition of $\mathcal{C}(3)$, each improvement $\delta_e$ of $e \in Q$ contains the two edges $e^*_1, e^*_2 \in Q^*$ incident to $e$. Let $\delta_Q=\{\delta_e \mid e\in Q\}$ be the set of such improvements.
An improvement $\delta_e$ for $e \in A \setminus Q$ adds 3 additional edges to $M'$ and can be incident to at most 4 improvements of $\delta_Q$ since $e$ is not incident to any edge of $Q^*$.
An improvement $\delta_e$ for $e \in Q$ adds at least 3 additional edges to $M'$ and can be incident to at most 5 improvements of $\delta_Q$ (four different improvements plus the one already incident to $e$).
This implies that, in the worst case, after $\ceil{|Q|/9}$ improvements with $|Q|/3$ additional edges, there are still some improvements of $\delta_Q$ not incident to these $\ceil{|Q|/9}$ improvements.
Hence, at least $|Q|/3$ additional edges are added to $M'$.

Recall from Lemma~\ref{lemma-property-ratio} and Property~\ref{property-Q} (2) that $|R(M' \setminus Q)| \geq \frac{2}{3\lambda}|R(M^* \setminus Q^*)|$ and $|R(Q)| \geq |R(Q^*)|/2\lambda$.
We have
\begin{align} \label{eq-lambda3-app-1}
|R(M)| \geq |R(M' \setminus Q)| + |R(Q)| + \frac{|Q|}{3} &= |R(M' \setminus Q)| + \frac{4|R(Q)|}{3} \\
&\geq \frac{2}{3\lambda}|R(M^* \setminus Q^*)| + \frac{2}{3\lambda}|R(Q^*)| \nonumber
\end{align}
From Eq~\eqref{eq-lambda3-app-1},
\begin{align} \label{eq-lambda3-app-2}
\frac{|R(M)|}{|R(M^*)|} \geq \frac{\frac{2}{3\lambda}|R(M^* \setminus Q^*)| + \frac{2}{3\lambda}|R(Q^*)|}{|R(M^* \setminus Q^*)| + |R(Q^*)|}
=\frac{2}{3\lambda}.
\end{align}
Therefore, the lemma holds.
\end{proof}

Lemma~\ref{lemma-lambda3-app} is proved in a similar way as the proof of Lemma~\ref{lemma-lambda2-app}.
\begin{lemma}
Let $M$ be the final matching found by the LS2 algorithm.
Let $M^*$ be a matching in $H^+$ such that $|R(M^*)|$ is maximized and $w(M^*) \geq c$, representing an optimal solution to the RPC\texttt{+} problem.
For $\lambda \geq 3$, $\frac{|R(M)|}{|R(M^*)|} \geq \frac{2}{3\lambda}$.
\label{lemma-lambda3-app}
\end{lemma}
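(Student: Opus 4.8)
The plan is to follow the template of the proof of Lemma~\ref{lemma-lambda2-app} and reduce the claim to a single counting bound: that Step~2 of LS2 adds at least $\frac{|Q|}{3}$ new passengers, i.e. $|R(M)| - |R(M')| \geq \frac{|Q|}{3}$. Once this bound is established, the conclusion follows exactly as in Eq~\eqref{eq-lambda3-app-1}--\eqref{eq-lambda3-app-2}: combining $|R(M'\setminus Q)| \geq \frac{2}{3\lambda}|R(M^*\setminus Q^*)|$ from Lemma~\ref{lemma-property-ratio}, the identity $|R(Q)| = |Q|$ (every edge of $Q\subseteq A$ carries exactly one passenger), and $|R(Q)| \geq \frac{|R(Q^*)|}{2\lambda}$ from Property~\ref{property-Q}(2), the extra $\frac{|Q|}{3}$ passengers turn $|R(Q)|$ into $\frac{4|R(Q)|}{3}$ and push the ratio on the $Q$/$Q^*$ part up to $\frac{2}{3\lambda}$. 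Hence the entire burden is to prove the counting bound for $\lambda \geq 3$.

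To bound $|R(M)| - |R(M')|$ I would track the same three quantities as for $\lambda = 2$, but with $\lambda$-dependent constants. First, the \emph{budget}: by Lemma~\ref{lemma-minimum-A1-size} applied to a suitable $B \subseteq A$, Assumption~\ref{assumption-profit-target-range} lets us replace the smallest $\floor{\frac{(\lambda-1)|B|}{\lambda+1}}$ edges of $B$ while keeping $w(M)\geq c$; since $\frac{\lambda-1}{\lambda+1}\geq \frac{1}{2}$ for $\lambda\geq 3$, a larger fraction of $A$ may be augmented than in the $\lambda=2$ case. Second, the \emph{per-augmentation gain}: for $e\in Q$ the two edges $e^*_1,e^*_2\in Q^*$ in its component satisfy $|R(e^*_j)|\geq 2$, so the canonical improvement $\{e^*_1,e^*_2\}$ has $|R(\delta_e)|\geq 4$ and net gain $|R(\delta_e)|-|R(e)|\geq 3$; because LS2 uses a \emph{maximum} improvement for $\lambda\geq 3$, the augmentation actually performed at $e$ gains at least this much. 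Third, the \emph{blocking}: a performed improvement consists of at most two edges, each carrying up to $\lambda$ passengers and one driver, so it can be incident to---and hence block---at most $O(\lambda)$ of the (pairwise vertex-disjoint) improvements in $\delta_Q$.

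The counting step then mirrors the $\lambda=2$ argument: process the edges $a_i\in A$ in increasing weight order, performing the maximum improvement whenever one exists and the budget permits. Because the members of $\delta_Q$ lie in distinct components of $F$ and are therefore pairwise vertex-disjoint, while each executed improvement blocks only $O(\lambda)$ of them, the supply of available $Q$-improvements is not exhausted before the budget is, and enough $\geq 3$-gain augmentations are carried out to accumulate $\frac{|Q|}{3}$ new passengers. The main obstacle I expect is precisely this balancing: for $\lambda\geq 3$ both the budget and the blocking bound grow with $\lambda$ while the guaranteed per-augmentation gain stays at $3$, so one must verify that the number of affordable, unblocked $Q$-augmentations times their gain still dominates $\frac{|Q|}{3}$ for every $\lambda\geq 3$. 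This is where the maximum-improvement rule is essential, since an improvement that blocks many members of $\delta_Q$ is incident to many $Q^*$ edges and, being maximum, harvests a correspondingly large number of new passengers---keeping the gain in step with the blocking. Choosing $|B|$ (as $m$ was chosen for $\lambda=2$) so that the affordable augmentation count stays below the blocking threshold $\Theta(|Q|/\lambda)$ is the delicate bookkeeping that will require the most care.
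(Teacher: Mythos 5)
Your proposal follows essentially the same route as the paper: reduce the claim to the counting bound $|R(M)|-|R(M')|\geq |Q|/3$, obtain the augmentation budget from Lemma~\ref{lemma-minimum-A1-size} via $\frac{\lambda-1}{\lambda+1}\geq\frac{1}{2}$ for $\lambda\geq 3$, and balance the per-augmentation passenger gain against the number of blocked $Q^*$-improvements, with the maximum-improvement rule keeping the two in step. The ``delicate bookkeeping'' you defer is exactly the paper's per-improvement accounting (an augmentation $\delta_e$ gains $|R(\delta_e)|-1$ passengers while blocking at most $|R(\delta_e)|+2$ of the $2|Q|$ edges of $Q^*$), so your plan matches the published argument.
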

\begin{proof}
We show that at least $|Q|/3$ additional edges are added to $M'$, that is, $|R(M)| - |R(M')| \geq |Q|/3$.
For $\lambda \geq 3$, removing $|B(1, \floor{\frac{(\lambda-1)m}{\lambda+1}})| \geq |B(1, \floor{\frac{m}{2}})|$ edges of $A$ from $M'$ results in a matching $M''$ with $w(M'') \geq c$ by Lemma~\ref{lemma-minimum-A1-size}.
Letting $m = |A|-\floor{|A|/3} \geq \ceil{2|A|/3} \geq \ceil{2|Q|/3}$ implies that $\floor{\ceil{2|Q|/3}/2} \geq \ceil{|Q|/3}$ improvements (in augmentations) can be performed on $M'$ such that the resulting matching $M''$ has weight $w(M'') \geq c$.
By Property~\ref{property-Q} (1), there are $2|Q|$ improvements w.r.t. $M'$, each of which is an edge in $Q^*$.
An improvement $\delta_e$ for $e \in A \setminus Q$ adds $|R(\delta_e)|-1 \geq 1$ additional edges to $M'$ and can be incident to at most $|R(\delta_e)|$ unblocked edges of $Q^*$ since $e$ is not incident to any edge of $Q^*$; and these unblocked edges become blocked and are no longer improvements after augmenting $\delta_e$.
Similarly, an improvement $\delta_e$ for $e \in Q$ with $|\delta_e| =2$ adds at least $|R(\delta_e)|-1 \geq 3$, due to maximum improvement, additional edges to $M'$ and can be incident to at most $|R(\delta_e)|+2$ unblocked edges of $Q^*$.
It is possible that after some augmentations, an improvement $\delta_e$ for $e \in Q$ contains only one edge due to blocked edges of $Q^*$; and in this case, each such improvement $\delta_e$ adds at least $|R(\delta_e)|-1 \geq 1$ additional edges to $M'$ and can be incident to at most $|R(\delta_e)|+1$ unblocked edges of $Q^*$.
These imply that, in the worst case, even after $\ceil{|Q|/3}$ improvements with $|Q|/3$ additional edges, not all edges of $Q^*$ are incident to some edges of these $\ceil{|Q|/3}$ improvements, that is, there are edges of $Q^*$ remain unblocked.
Hence, at least $|Q|/3$ additional edges are added to $M'$.
Therefore, by Eq~\eqref{eq-lambda3-app-1} and Eq~\eqref{eq-lambda3-app-2}, the lemma holds.
\end{proof}

From Theorem~\ref{lemma-2lambda-app-nonnegative}, Lemma~\ref{lemma-lambda2-app} and Lemma~\ref{lemma-lambda3-app}, we have Theorem~\ref{theorem-ls2-app}.
\begin{theorem}
Let $M'$ be the matching found by Step 1 of the LS2 algorithm and $M$ be the final matching found by the LS2 algorithm.
Let $A = \{e \in M' \mid |R(e)|=1\}$.
Let $0 \leq c \leq w(M')$ and $M^*$ be a matching in $H^+$ such that $|R(M^*)|$ is maximized and $w(M^*) \geq c$.
$\frac{|R(M')|}{|R(M^*)|} \geq \frac{1}{2\lambda}$ for $\lambda \geq 1$, and if $c \leq w(M' \setminus A) + \frac{2w(A)}{\lambda+1}$ for $\lambda \geq 2$, then $\frac{|R(M)|}{|R(M^*)|} \geq \frac{2}{3\lambda}$.
\label{theorem-ls2-app}
\end{theorem}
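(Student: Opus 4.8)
The plan is to recognize that this theorem simply assembles the three approximation guarantees established immediately above, so there is essentially no new argument to construct: the work reduces to citing the correct prior result for each claim and verifying that the stated case split exhausts all admissible $\lambda$. First I would dispatch the inequality $\frac{|R(M')|}{|R(M^*)|} \geq \frac{1}{2\lambda}$ for $\lambda \geq 1$. This is verbatim the conclusion of Theorem~\ref{lemma-2lambda-app-nonnegative}, whose hypotheses---namely $0 \leq c \leq w(M')$ together with $M^*$ being a matching in $H^+$ maximizing $|R(M^*)|$ subject to $w(M^*) \geq c$---are exactly the ones assumed here. So this part follows by direct citation, with no further computation.

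Next I would establish the second inequality under the extra hypothesis $c \leq w(M' \setminus A) + \frac{2w(A)}{\lambda+1}$. The key observation is that this is precisely Assumption~\ref{assumption-profit-target-range}, which is the standing condition under which Lemma~\ref{lemma-lambda2-app} and Lemma~\ref{lemma-lambda3-app} were proved; hence I may invoke them directly. I then split on $\lambda$: for $\lambda = 2$, Lemma~\ref{lemma-lambda2-app} yields $\frac{|R(M)|}{|R(M^*)|} \geq \frac{2}{3\lambda}$, and for $\lambda \geq 3$, Lemma~\ref{lemma-lambda3-app} yields the same bound. Since these two cases cover every $\lambda \geq 2$, the bound holds throughout that range, as claimed. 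One small consistency point worth recording is that the tighter hypothesis for the second inequality implies the looser one for the first: because every edge lies in $H^+$ we have $w(A) \geq 0$, so $w(M' \setminus A) + \frac{2w(A)}{\lambda+1} \leq w(M' \setminus A) + w(A) = w(M')$ whenever $\lambda \geq 1$, giving $c \leq w(M')$ automatically.

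There is no genuine obstacle in this final statement; the substantive machinery was already carried out in the cited lemmas---the symmetric-difference decomposition $F = M' \Delta M^*$, the isolation of the sets $Q$ and $Q^*$ arising from the three-edge components $\mathcal{C}(3)$, the weight-preservation argument of Lemma~\ref{lemma-minimum-A1-size} ensuring $w(M'') \geq c$ after the augmentations, and the counting that guarantees at least $|Q|/3$ surviving additional edges. The only care required here is bookkeeping: confirming that Theorem~\ref{lemma-2lambda-app-nonnegative}, Lemma~\ref{lemma-lambda2-app}, and Lemma~\ref{lemma-lambda3-app} are all stated against the same objects $M'$, $M$, $A$, and $M^*$, and that the case split is exhaustive. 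With that verified, the theorem is an immediate corollary of the three preceding results.
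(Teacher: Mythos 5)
Your proposal is correct and matches the paper's own treatment exactly: the paper derives Theorem~\ref{theorem-ls2-app} by directly combining Theorem~\ref{lemma-2lambda-app-nonnegative} (for the $\frac{1}{2\lambda}$ bound) with Lemma~\ref{lemma-lambda2-app} and Lemma~\ref{lemma-lambda3-app} (for the $\frac{2}{3\lambda}$ bound under Assumption~\ref{assumption-profit-target-range}), just as you do. Your added remark that the stronger profit-target hypothesis implies $c \leq w(M')$ is a harmless bonus consistency check not present in the paper.
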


\section{Numerical experiments} \label{sec-experiment}
We conduct an extensive empirical study to evaluate our model and algorithms for RPC1 and RPC\texttt{+}.
To the best of our knowledge, there is no practical test dataset publicly available for the RPC problem.
To clear this hurdle, we create a simulation dataset by incorporating a real-world ridesharing dataset from Chicago City with the driver profit model of Uber.
In Section~\ref{sec-experiment-overview}, we introduce the simulation setup and describe the ridesharing dataset of Chicago City. We describe how to apply the profit/revenue model of Uber to assign a profit/revenue value to each feasible match in Section~\ref{sec-experiment-profit}. In Section~\ref{sec-experiment-instance}, we describe the test instances in detail.
Experimental results are reported in Section~\ref{sec-experiment-results}.

\subsection{Simulation and dataset overview} \label{sec-experiment-overview}
The simulated centralized system receives a batch of driver offer trips $D$ and passenger request trips $R$ in a fixed time interval, where the origin $o_i$ and destination $d_i$ are within Chicago City for every driver $\eta_i\in D$ and passenger $r_i\in R$.
The roadmap data of Chicago city is retrieved from OpenStreetMap (BBBike.org)\footnote{Planet OSM. \url{https://planet.osm.org}.  BBBike. \url{https://download.bbbike.org/osm/}}.
We used the GraphHopper\footnote{GraphHopper 6.0. \url{https://www.graphhopper.com}} library to construct the logical graph data structure of the roadmap, which contains 290048 vertices and 414124 edges.

The ridesharing dataset (denoted by \textit{TNP}) we use is publicly available on Chicago Data Portal (CDP), maintained by Chicago Transit Authority (CTA)\footnote{CDP. \url{https://data.cityofchicago.org}. CTA. \url{https://www.transitchicago.com}}.
The TNP dataset contains completed trips records, reported by Transportation Network Providers (which are rideshare companies) to CTA.
The TNP dataset range is chosen from May 1st, 2022 to May 31st, 2022.
The Chicago city is divided into 77 official community areas (\textbf{area} for brevity and labeled as A1 to A77).
Each record in the TNP dataset describes a passenger trip served by a driver who provides the ride service.
A trip record in the TNP dataset contains:
\begin{itemize}\setlength\itemsep{0em}
\begin{small}
\item A pick-up time, a drop-off time, a pick-up area (Census Tract) and a drop-off area (Census Tract) for the passenger.
\item Duration and distance travelled of the trip.
\item A fare and an optional tip paid by the passenger, where the fare does not include the tip.
\end{small}
\end{itemize}
Note that the exact pick-up and drop-off locations are not provided from the dataset and times are rounded to the nearest 15 minutes.
We removed any trip record that is missing any of the essential information, ``short'' trips (less than 1.5 miles or 6 minutes) and ``long'' trips (greater than 35 miles or 70 minutes) from the dataset.
Our experiment focuses on weekdays only.
This results in 2453435 trip records from the TNP dataset.
We group one or more adjacent areas together to create 25 \emph{\textbf{regions}} to represent the Chicago City.
Areas of a region are grouped together based on the total number of trips with pick-up and drop-off areas in that region.
That is, areas with smaller number of trips are grouped together as shown in Figure~\ref{fig-regions}.
Since the trip records in TNP are aggregated in every 15 minutes, we partition a day from 6:00 to 23:59 into 72 time intervals (each has 15 minutes).
In each time interval, drivers and passengers are generated for each of the 25 regions.
\begin{figure}[hbtp]
\centering
\includegraphics[width=0.85\linewidth]{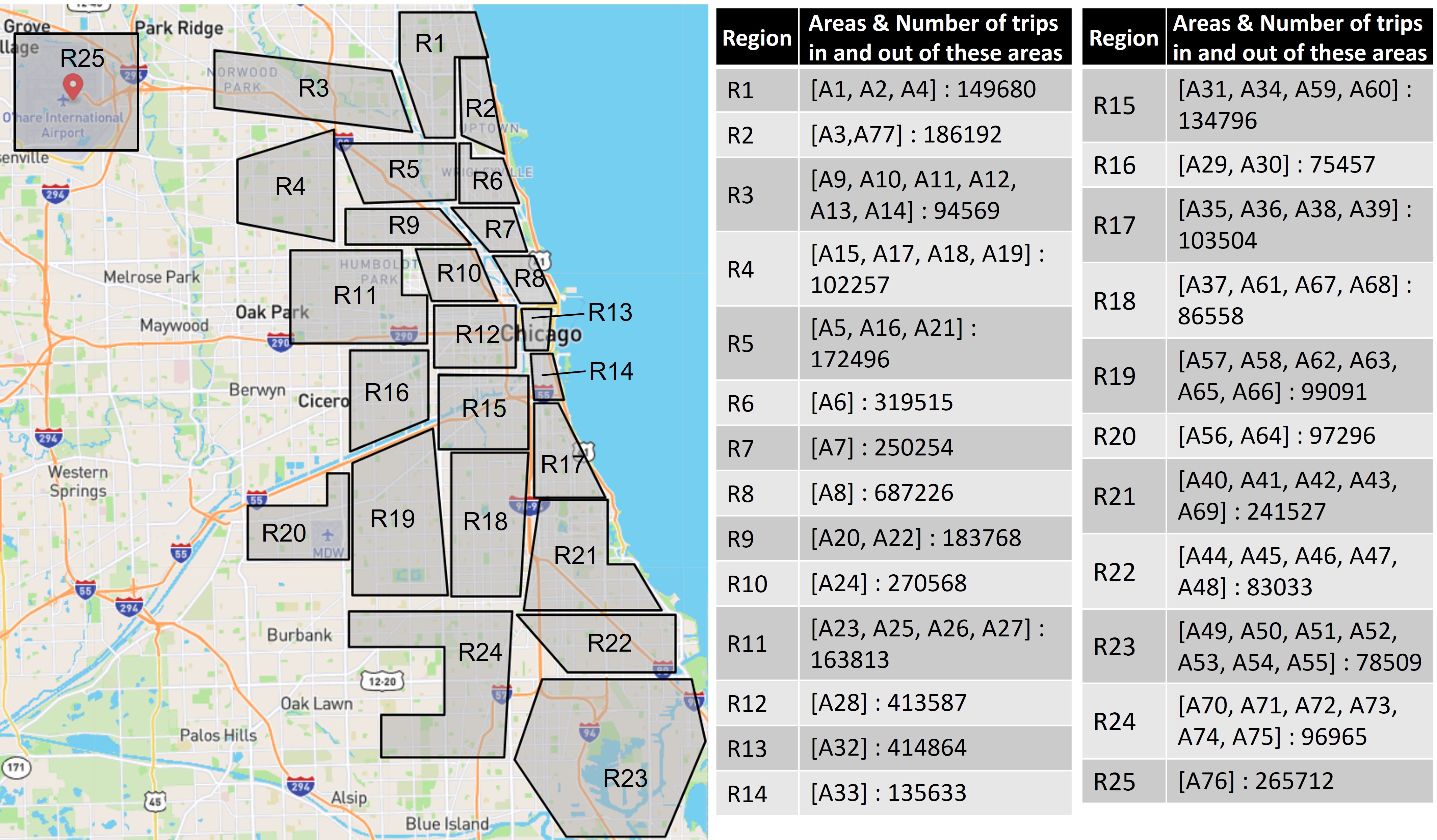}
\captionsetup{font=small}
\caption{The 77 community areas are grouped into 25 regions.}
\label{fig-regions}
\end{figure}

\subsection{Profit for feasible matches} \label{sec-experiment-profit}
First, we describe how we estimate the revenue $rev(\eta_i, R_i)$ of a feasible match $(\eta_i, R_i)$ for a driver $\eta_i$ using the TNP dataset.
Since the trip records in TNP are reported by rideshare companies in the US, we use the price scheme from Uber, based on its upfront cost estimator\footnote{Uber cost estimator. \url{https://www.uber.com/global/en/price-estimate}.} (Lyft has a similar scheme).
Recall that $\SFP(\eta_i, R_i)$ is the shortest feasible path $\eta_i$ needs to traverse to serve all of $R_i$.
Let $\zeta=|R_i|$ and $\SFP(\eta_i, R_i) = (l_0,l_1,\ldots,l_{2\zeta+1})$ such that for $1 \leq a \leq 2\zeta$, $l_a$ is a location (a vertex in the road network graph) representing either an origin $o_j$ or a destination $d_j$ of passenger $r_j$ in $R_i$, where $l_0 = o_i$ and $l_{2\zeta+1}=d_i$.
For $1\leq a<b\leq 2\zeta$, let $p(a,b)$ be the subpath of $\SFP(\eta_i, R_i)$ from $l_a$ to $l_b$.
For a passenger $r_j \in R_i$, let $p(j_1, j_q)$ be the subpath such that $l_{j_1} = o_j$ and $l_{j_q} = d_j$.
In other words, $p(j_1, j_q)$ is the path passenger $r_j$ needs to traverse.
Denoted by $t(l_a,l_b)$ is the estimated travel time (duration) for traversing $p(a,b)$.
We define a cost function $g(\eta_i,r_j)$ representing how much $r_j$ needs to pay.
The cost of a trip for a passenger from Uber's cost estimator includes: a base fare $f_1$, a per-minute cost multiplier $f_2$, a per-mile cost multiplier $f_3$ and a booking/service fee $f_4$.
Then the cost for passenger $r_j \in R_i$ is 
\[
g(\eta_i,r_j) = \gamma(r_j) \cdot [f_1 + f_2\cdot t(j_1,j_q) + f_3\cdot \dist(p(j_1,j_q))] + f_4,
\]
where $\gamma(r_j)\geq 1$ is a surge pricing factor. Surge pricing factor $\gamma(r_j)$ fluctuates based on passenger-demand and driver-supply, which depends on when $r_j$ is picked-up, $o_j$ and $d_j$.
Let $f(\eta_i,r_j) = g(\eta_i,r_j) - f_4$.
Uber takes all of the booking fee $f_4$ and takes a portion of $f(\eta_i,r_j)$, which is known as the ``take-rate'' $\theta(r_j, R_i)$, and it is usually $0.2\leq\theta(r_j, R_i)\leq 0.25$.
In addition, $r_j$ has the option to include a tip $\epsilon(r_j) > 0$ for driver $\eta_i$.
The estimated revenue $rev(\eta_i, R_i)$ for driver $\eta_i$ and $|R_i|=\{r_j\}$ is $ (1-\theta(r_j, R_i))\cdot \sum_{r_j \in R_i} f(\eta_i,r_j) + \epsilon(r_j)$.

If $|R_i| > 1$, the match $(\eta_i,R_i)$ may become shared trips.
The price scheme is similar, except a discounted rate $\omega(r_j,R_i)$ is applied to $f(\eta_i,r_j)$, where $0\leq \omega(r_j,R_i) \leq 1$ and $\omega(r_j,R_i)=1$ means no discount.
Let $dp(r_j, R_i)$ be the number of different passengers in $R_i\setminus\{r_j\}$ encountered by $r_j$ while traversing $p(j_1, j_q)$.
We set $\omega(r_j,R_i) = \max\{1.0 - 0.2dp(r_j, R_i), 0.2\}$ (a linear relation).
In addition, the cost for shared distance travelled (per-minute and per-mile) is split among the passengers in the car at the time.
For $1\leq a \leq 2\zeta-1$, let $np(l_a,l_{a+1})$ be the number of passengers in the car while travelling from $l_a$ to $l_{a+1}$.
The take-rate $\theta(r_j, R_i)$ for shared trips (e.g., UberPool) can be set lower to adjust for the earnings of the drivers:
take-rate $\theta(r_j, R_i)$ for a passenger $r_j\in R_i$ is selected uniformly at random from $[\max\{0.05, 0.2\omega(r_j,R_i)\},\max\{0.1,0.25\omega(r_j,R_i)\}]$.
The final estimated revenue $rev(\eta_i, R_i)$ is
\begin{small}
\begin{align*}
\sum_{r_j \in R_i} (1-\theta(r_j, R_i)) \cdot \omega(r_j,R_i) \cdot \gamma(r_j) \cdot
(f_1 + \sum_{j_1\leq a \leq j_{q-1}} \frac{f_2\cdot t(l_a,l_{a+1}) + f_3\cdot \dist(l_a,l_{a+1})}{np(l_a,l_{a+1})}) + \epsilon(r_j).
\end{align*}
\end{small}%

Next, we describe how we estimate the fee components, surge pricing factor, travel time, tip amount, and profit $w(\eta_i, R_i)$ in detail.
From Uber's cost estimator, we can determine that $f_1$, $f_2$ and $f_3$ are fixed regardless of the distance of the trip.
The booking fee increases as the estimated distance of the trip increases.
\begin{table}[htbp]
\small
\centering
   \begin{tabular}{| c | c | c | c |}
   	\hline
             base fare $f_1$  & per-minute $f_2$   & per-mile $f_3$    &  booking fee $f_4$            \\ \hline
             1.8           & 0.27         & 0.8         & $\min\{\max\{1,1+0.25(\text{miles}- 2)\},10\}$      \\ \hline
   \end{tabular}
\caption{The cost (in USD) for each fee component of a trip.}
\label{table-base-fare}
\end{table}
Table~\ref{table-base-fare} shows the cost for each fee component for a single-passenger trip used in our experiment.
The choice of the fees is validated by examining the TNP dataset as follows.
Let $Z(h,x,y)$ be the set of trips with pick-up times during hour $h$, origins (pick-up areas) in region $x$, and destinations (drop-off areas) in region $y$.
For each set $Z(h,x,y)$ of trips, we calculate the average fare $avg(Z(h,x,y))$ paid by the passengers of these trips, namely, sum the fare of the trips in $Z(h,x,y)$ and divide it by $|Z(h,x,y)|$.
Then, we examine the periods of time containing many trips during which surge pricing is unlikely to occur (before noon, as demonstrated by~\cite{Yan-NRL20}).
We compare the estimated average fare $f'(Z(h,x,y))$, calculated using the chosen fee component costs in Table~\ref{table-base-fare}, against the the average fare $avg(Z(h,x,y))$ for $h\in\{10,11\}$.
From the stats in Table~\ref{table-average-fare-ratio}, the cost for each fee component shown in Table~\ref{table-base-fare} is reasonable.
\begin{table}[htbp]
\small
\centering
\begin{tabularx}{0.78\textwidth}{ c | *{2}{Y} }
\toprule
         \multirow{2}{*}{Ratio range}                        & \multicolumn{2}{c}{Percentage of $\cup_{x,y} Z(h,x,y)$ fall in the range}       \\ 
                                                                                 & $h=10$ & $h=11$ \\ 
\midrule
         $\frac{avg(Z(h,x,y))}{f'(Z(h,x,y))} > 1.25$   & 8.64\% & 13.48\%    \Bstrut    \\
         $1.25 \leq \frac{avg(Z(h,x,y))}{f'(Z(h,x,y))} \leq 0.75$   & 84.48\% & 82.02\%  \TBstrut \\
         $0.9 \leq \frac{avg(Z(h,x,y))}{f'(Z(h,x,y))} \leq 1.1$   & 40.32\% & 38.52\%    \TBstrut   \\
         $\frac{avg(Z(h,x,y))}{f'(Z(h,x,y))} < 0.75$   & 6.88\% & 4.49\%    \Tstrut   \\ 
\bottomrule
   \end{tabularx}
\captionsetup{font=small}
\caption{Percentage of $\cup_{x,y} Z(h,x,y)$ that fall in different ranges of $avg(Z(h,x,y))/f'(Z(h,x,y))$.}
\label{table-average-fare-ratio}
\end{table}
Note that fares are rounded to the nearest \$2.50, so some inaccuracy may occur due to this.
The ratio $\gamma(h,x,y) = avg(Z(h,x,y)) / f'(Z(h,x,y))$ is also the estimated average surge pricing factor (same surge pricing factor $\gamma(r_j)$ mentioned above) for a passenger $r_j$ with an origin in region $x$ and a destination in region $y$ that is picked-up during hour $h$.
Promotion discounts are given out regularly by ridesharing companies, which can cause the fare to be lower than normal.
We simulate this type of discount by applying a surge pricing factor $\gamma(h,x,y) < 1$ to any passenger trip $r_j$ in $Z(h,x,y)$, causing $f(\eta_i,r_j)$ to be lower.

We use a similar process to estimate an average vehicle (driving) speed.
For each set $Z(h,x,y)$ of trips, we calculate the average speed $spd(h,x,y)$ by these trips, namely, the sum of the distance of trips in $Z(h,x, y)$ is divided by the sum of the duration of the trips in $Z(h,x,y)$.
In this way, the average speed $spd(h,x,y)$ gives variable vehicle speeds according to the time $h$ of day from region $x$ to region $y$.
Since the TNP does not contain exact coordinates for pick-up and drop-off locations and travel time is required for profit/revenue calculations, we estimate the travel time during hour $h$ from a location $l_a$ in region $x$ to location $l_b$ in region $y$ as $t(l_a,l_b) = \dist(l_a,l_b) / spd(h,x,y)$.
Note that for both estimated surge pricing factor $\gamma(h,x,y)$ and $spd(h,x, y)$, $x=y$ is also considered.

From the TNP dataset, we calculate the average amount of tips and the percentage of trips which a tip is given as follows.
The distance of a trip is rounded to the nearest mile; and we denote the set of trips with distance $d$ as $Z_d$.
Let $Z^+_d$ be the set of trips with tips given and $\epsilon(r_j,d)$ be the average amount of a tip given by a passenger $r_j$ for trip distance $d$.
Figure~\ref{fig-tips} shows $\epsilon(r_j,d)$ and $p(d) = |Z^+_d| / |Z_d|$ for each distance $d$ round to the nearest mile.
\begin{figure}[hbtp]
\centering
\includegraphics[width=0.65\linewidth]{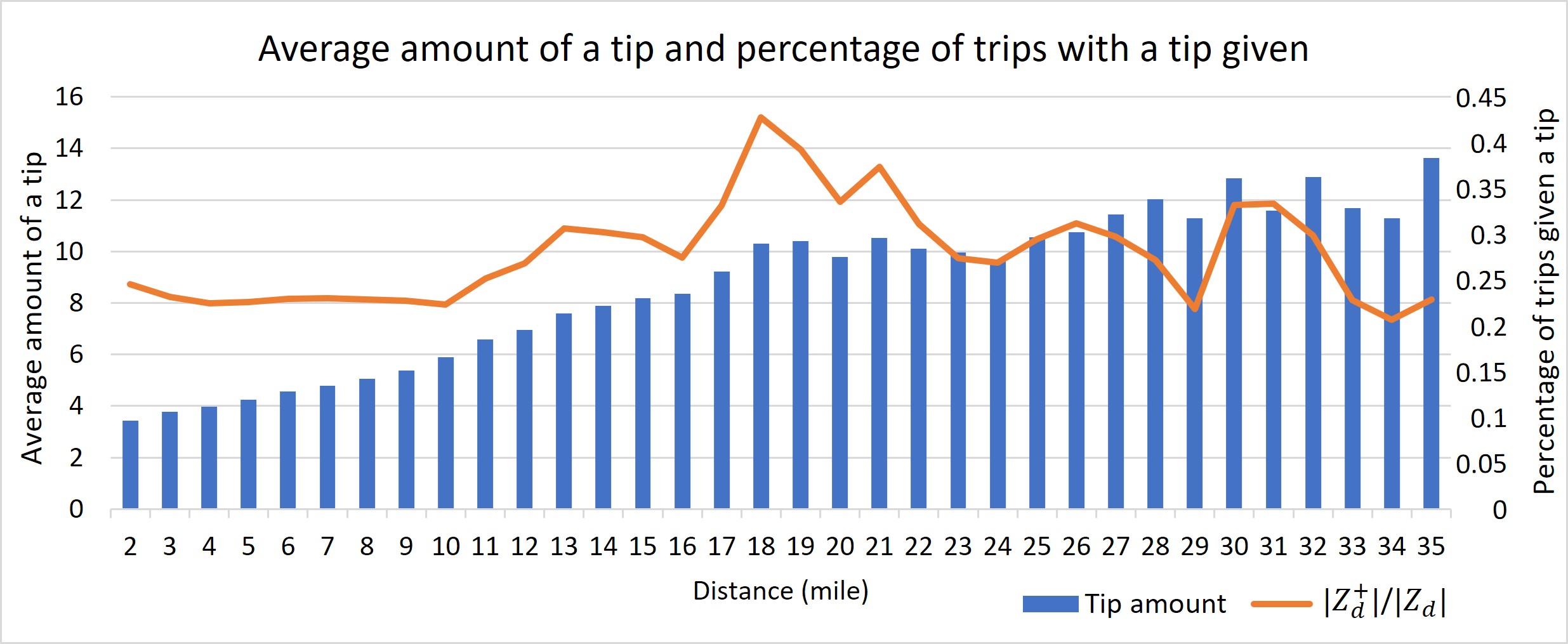}
\captionsetup{font=small}
\caption{Average amount of a tip $\epsilon(r_j,d)$ and ratio $|Z^+_d| / |Z_d|$ for each rounded distance $d$.}
\label{fig-tips}
\end{figure}
We use this result to determine a tip given by a passenger.
Given a match $(\eta_i, R_i)$, $\SFP(\eta_i, R_i)$ and a passenger $r_j \in R_i$, let $d_{r_j}$ be the nearest mile of the subpath $(o_j,d_j)$ in $\SFP(\eta_i, R_i)$.
Then $\epsilon(r_j,d_{r_j})$ is the average amount of tip from $r_j$ (blue bar in Figure~\ref{fig-tips}) if $r_j$ pays a tip, and $p(d_{r_j})$ is the probability $r_j$ pays a tip (orange line in Figure~\ref{fig-tips}).
The average tip amount of a passenger trip $r_j\in R_i$ is $\epsilon(r_j) = p(d_{r_j})\cdot \epsilon(r_j,d_{r_j})$.

The cost $tc(\eta_i, R_i)$ of a match $(\eta_i, R_i)$ for a driver $\eta_i$ is the travel cost for traversing $\SFP(\eta_i, R_i)$.
Then, $tc(\eta_i, R_i) = \dist(\SFP(\eta_i, R_i)) \cdot cost(type)$, where $cost(type)$ is an estimate average cost per mile for a vehicle type under normal traffic conditions.
We use the estimate costs from~\cite{AAA22,NTS22} (which are based on the average gas prices for a 12-month period ending May 2022 in the US): the cost per mile for small Sedan, medium Sedan and medium SUV are \$0.1251, \$0.1437 and \$0.1889 (USD), respectively.

\begin{table}[!htp]
\footnotesize
\centering
   \begin{tabular}{ l | p{13.2cm} }
   	\hline
    $\omega(r_j,R_i)$  & discounted rate for the fee components \\   	
   	$\theta(r_j, R_i)$    & take-rate for a passenger trip $r_j\in R_i$ and is selected uniformly at random from $[\max\{0.05, 0.2\omega(r_j,R_i)\},\max\{0.1,0.25\omega(r_j,R_i)\}]$ \\
    $\gamma(h,x,y)$    & surge pricing factor for a passenger picked-up during hour $h$, from region $x$ to region $y$    \\
    $spd(h,x,y)$    & average vehicle speed during hour $h$ from region $x$ to region $y$    \\
    $\epsilon(r_j)$      & the average amount of a tip given by a passenger $r_j$  \\
 	$tc(\eta_i, R_i)$   & the travel cost for traversing $\SFP(\eta_i, R_i)$ based on a vehicle-type mileage-cost $cost(type)$ \\	
  	$t(l_a,l_b)$        & estimated travel time (duration) of from location $l_a$ to  location $l_b$ in $\SFP(\eta_i, R_i)$       \\ \hline
   \end{tabular}
\captionsetup{font=small}
\caption{Notation used in estimating revenue $rev(\eta_i, R_i)$ and profit $w(\eta_i, R_i)$.}
\label{table-estimation-notation}
\end{table}
Table~\ref{table-estimation-notation} summarizes the notation for all the estimations.
Putting everything together,
the estimated revenue $rev(\eta_i, R_i)$ is
\begin{align*}
\sum_{r_j \in R_i} (1-\theta(r_j, R_i)) &\cdot \omega(r_j,R_i) \cdot \gamma(r_j) \cdot \\
&(1.8 + \sum_{j_1\leq a \leq j_{q-1}} \frac{0.27\cdot t(l_a,l_{a+1}) + 0.8\cdot \dist(l_a,l_{a+1})}{np(l_a,l_{a+1})}) + \epsilon(r_j),
\end{align*}
and the estimated profit $w(\eta_i, R_i)$ after all of $R_i$ are served is
\[
w(\eta_i, R_i) = rev(\eta_i, R_i) - tc(\eta_i, R_i) = rev(\eta_i, R_i) - \dist(\SFP(\eta_i, R_i)) \cdot cost(type).
\]

\subsection{Driver and passenger trips generation} \label{sec-experiment-instance}
For each 15-minute time interval $h_t$, $1\leq t\leq 4$, in hour $h$, we first generate a set of passengers and then a set of drivers.
Passengers are generated according to the average number of trips occurred per hour, calculated using the TNP dataset.
Let $Z(h_t,x,y)$ be the set of trip records in the TNP dataset with pick-up time in interval $h_t$, origins (pick-up areas) in region $x$, and destinations (drop-off areas) in region $y$.
Let $R(h_t,x,y)$ be the set of passengers generated for interval $h_t$ with origins in $x$ and destinations in $y$.
Then, $|R(h_t,x,y)| = \ceil{|Z(h_t,x,y)|/days(h_t,x,y)}$, where $days(h_t,x,y)$ is the number of weekdays in TNP that contain at least a trip record of $Z(h_t,x,y)$.
After $R(h_t,x,y)$ is generated, a set $D(h_t,x)$ of drivers with origins in region $x$ is generated.
The destination region $y$ of a driver $\eta_i \in D(h_t,x)$ is decided as follows.
Let $sum(h_t,x) = \sum_y |R(h_t,x,y)|$ for each region $x$. Then, $\eta_i$ has a destination in region $y$ with probability $|R(h_t,x,y)| / sum(h_t,x)$.
For any driver $\eta_i \in D(h_t,x)$ or passenger $r_i \in R(h_t,x,y)$, the actual origin $o_i$ and destination $d_i$ are two random locations in regions $x$ and $y$, respectively.
The only exceptions are regions R25 and R20 (where the O'Hare International and Midway airports are located, respectively); all drivers' and passengers' origins and destinations in R25 are the O'Hare airport, and there is 50\% chance that origins and destinations in R20 are the Midway airport.
The ridesharing instance in interval $h_t$ consists of $D=\cup_{x} D(h_t,x)$ and $R=\cup_{x,y} R(h_t,x,y)$.

For variant \textbf{RPC1}, we set $0.9\leq \ceil{\frac{|D(h_t,x)|}{|R(h_t,x,y)|}} \leq 1.1$, depending on the hour of the day.
For variant \textbf{RPC\texttt{+}}, $|D(h_t,x)| = \ceil{\frac{|R(h_t,x,y)|}{4}}$ for each $h_t$ during peak hours (7:00-9:59) and (16:00-19:59); and $\ceil{\frac{|R(h_t,x,y)|}{3}} \leq |D(h_t,x)| \leq \ceil{\frac{|R(h_t,x,y)|}{2}}$ during non-peak hours.
Figure~\ref{fig-number-trips} shows the number of drivers and passenger generated for each interval.
\begin{figure}[b]
\centering
\includegraphics[width=0.55\linewidth]{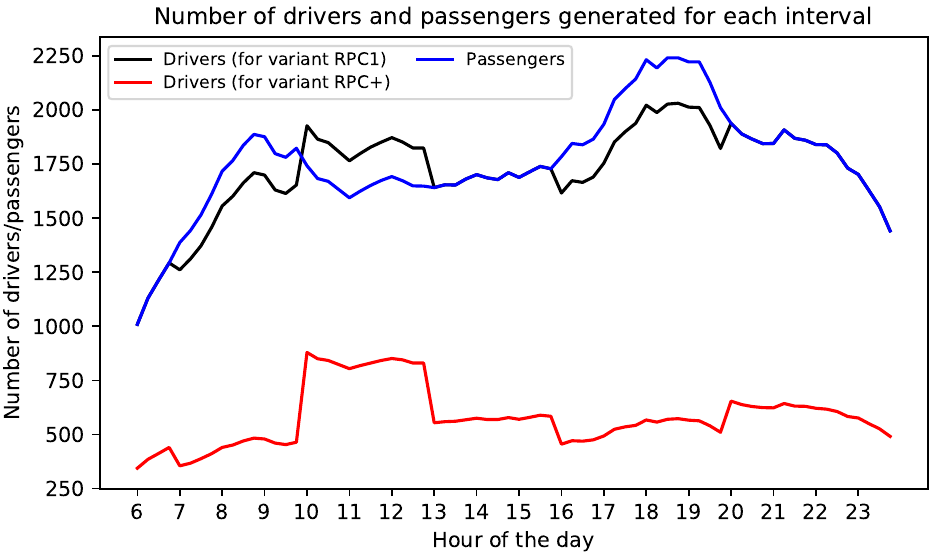}
\captionsetup{font=small}
\caption{The number of drivers and passenger generated for each interval.}
\label{fig-number-trips}
\end{figure}
These numbers are selected with the consideration of surge pricing factor and capacities of drivers' vehicles.
Since the trip records in the TNP dataset are for served (completed) trips, the number of drivers should not be too low.
We consider only Sedan vehicle types with capacity for variant RPC1.
For variant RPC\texttt{+}, we consider three vehicle types: small and medium Sedans with capacity [1,3] and SUVs with capacity [1,5] inclusive.
During peak hours, roughly 95\% and 5\% of vehicles have capacities randomly selected from Sedan and SUV, respectively.
During non-peak hours, roughly 90\% and 10\% of vehicles have capacities randomly selected from Sedan and SUV, respectively.
Other parameters for drivers and passengers are listed in Table~\ref{table-parameters}.
\begin{table}[!ht]
\footnotesize
\setlength\tabcolsep{3pt}
\centering
   \begin{tabular}{ l | p{9.85cm} }
   	\hline
    $\SP(o,d)$                                  & shortest path from location $o$ to location $d$    \\
    Earliest departure time $\alpha_i$  & immediate to end of a time interval $h_t$ for any driver $\eta_i$ or passenger $r_i$ in $D(h_t,x) \cup R(h_t,x,y)$ \\
    Driver detour limit $z_i$ for $\eta_i\in D(h_t,x)$     & at most $\max\{\frac{a\cdot \dist(\SP(o_i,d_i))}{spd(h,x,y)}, 45\}$, $a\in[1.2,1.4]$ 	\\
  	Latest arrival time for $\eta_i\in D(h_t,x)$     & $\alpha_i + a(\frac{\dist(\SP(o_i,d_i))}{spd(h,x,y)} + z_i)$, $a\in[1.0,1.25]$      \\
    Max travel duration for $\eta_i\in D(h_t,x)$   & $\frac{\dist(\SP(o_i,d_i))}{spd(h,x,y)} + z_i$   		\\
    Latest arrival time for $r_j\in R(h_t,x,y)$         & $\alpha_j + \frac{a\cdot \dist(\SP(o_j,d_j))}{spd(h,x,y)}$, $a\in[2.0,3.0]$ \\
    Max travel duration for $r_j\in R(h_t,x,y)$       & $\frac{a\cdot \dist(\SP(o_j,d_j))}{spd(h,x,y)}$, $a\in[1.5,2.0]$  	\\ 
    \hline
   \end{tabular}
\captionsetup{font=small}
\caption{Parameters for drivers and passengers.}
\label{table-parameters}
\end{table}

Feasible matches are computed from $D$ and $R$ in each interval $h_t$.
For a driver $\eta_i$, a feasible match $(\eta_i,R_i)$ with $|R_i|=1$ is called a \emph{base match} because for any feasible match $(\eta_i, R'_i)$ with $|R'_i|>1$, the match $(\eta_i, R_i)$ with $R_i \subset R'_i$ must exist~\cite{Gu-ISAAC21,Stiglic-COR18}.
We compute base matches first and then feasible matches with $|R_i|>1$ as in~\cite{Alonso-Mora-PNAS17,Simonetto-TRCET19}.
Shortest paths in our simulation are computed in \textit{real-time}.
To speedup the computation of feasible matches for practical reasons, we first apply a conditional check to see if a driver $\eta_i$ and a passenger $r_j$ is a \emph{candidate pair}.
That is, test $\eta_i$ and $r_j$ should be considered in a base match by estimating the travel distance without computing any shortest path.
We also limit the number of base matches and total feasible matches a driver $\eta_i$ can have.
\begin{enumerate}\setlength\itemsep{0em}
\item Let $\overline{\dist}(o,d)$ be the straight-line distance from location $o$ to location $d$. Let $ed(\eta_i,r_j)=\tau\cdot (\overline{\dist}(o_i,o_j)+\overline{\dist}(o_j,d_j)+\overline{\dist}(d_j,d_i))$ be the estimated travel distance, for some $\tau > 0$.
If the maximum travel distance for $\eta_i$ is at least $ed(\eta_i,r_j)$, then the match $(\eta_i, R_i=\{r_j\})$ is a candidate.
Otherwise, $(\eta_i, \{r_j\})$ is unlikely a feasible match (so $(\eta_i,\{r_j\})$ is not checked).
\item Any driver $\eta_i$ can have at most 100 base matches and at most 500 feasible matches in total; and each passenger can belong to at most 20 base matches (so 20 different drivers).
\end{enumerate}

\subsection{Computational results} \label{sec-experiment-results}
All algorithms were implemented in Java, and the experiments were conducted on an Intel Core i7-6700 processor with 2133 MHz of 12 GBs RAM available to JVM.
All ILP formulations in our algorithms are solved by CPLEX v12.10.1.
We label the algorithm CPLEX uses to solve ILP formulations~\eqref{formulation-max-passenger}-\eqref{constraint-profit} and \eqref{formulation-max-passenger}-\eqref{constraint-nonnegative-edge} for RPC1 and RPC+ by \textbf{Exact}.
Note that by default, CPLEX uses multithreading, and we leave it as it is.
Recall that for RPC1, the implementation use ILPs is labeled as \textbf{ExactNF1}, the implementation based on graph algorithms is labeled as \textbf{ExactNF2} and the greedy $\frac{1}{2}$-approximation algorithm is labeled as \textbf{Greedy}.
A passenger $r_j \in R$ is called \emph{served} if $r_j \in R_i$ such that $(\eta_i, R_i)$ is a feasible match belongs to a solution computed by one of the algorithms.

\subsubsection{RPC1 results}
The base case instances use the profit calculation described in Section~\ref{sec-experiment-profit}.
The estimated distance factor used is $\tau=0.6$ for the computation heuristic described in~\ref{sec-experiment-instance}.
The overall results are shown in Table~\ref{table-base-result} for profit targets $c_1 = w(M')$, $c_2 = 0.8\cdot w(M')$ and $c_3 = 0.6\cdot w(M')$, where $M'$ is a maximum weight matching in $H$ for each interval.
\begin{table}[!ht]
\footnotesize
\setlength\tabcolsep{4pt}
\centering
   \begin{tabular}{ l | c | c | c | c | c}
   	\hline
    \multicolumn{2}{ l |}{}    & Greedy     & ExactNF1    & ExactNF2	    & Exact   \\ \hline
   \multirow{3}{4.7cm}{Total \# of passengers served in all intervals}  & ($c'_1$=\$1587436)   & 109770     & 109771    & 109771     & 109771  \\    
    & ($c'_2$=\$1269949)       & 109775 	  & 110035     & 110035    & 110035     \\    
    &  ($c'_3$=\$952462)       & 109775     & 110035     & 110035    & 110035      \\ \hline    
  \multirow{3}{4.7cm}{Total profit of served matches in all intervals}  & ($c'_1$=\$1587436)   & 1587436    & 1587436   & 1587436    & 1587436  \\
   & ($c'_2$=\$1269949)        & 1587432 	  & 1586707     & 1586707    & 1465676    \\ 
   &  ($c'_3$=\$952462)        & 1587432    & 1586707     & 1586707    & 1457338    \\ \hline     
    \multirow{3}{4.7cm}{Avg running time (second) per interval}   & ($c'_1$=\$1587436)  & 5.573	   & 6.249  & 4.765     & 7.030 \\
    & ($c'_2$=\$1269949)       & 5.670       & 6.223      & 4.484		  & 6.591    \\
    &  ($c'_3$=\$952462)       & 5.765       & 6.379      & 4.565      & 6.298 	  \\ \hline
    \multicolumn{3}{ l |}{Avg running time to compute the matches per interval} & \multicolumn{3}{l}{238.71 seconds} \\
    \multicolumn{3}{ l |}{Total number of drivers and passengers generated, respectively} & \multicolumn{3}{l}{124340 and 126625}\\ \hline
   \end{tabular}
\captionsetup{font=small}
\caption{Performances of algorithms for RPC1 on base case instances. For $1\leq a\leq 3$, $c'_a=\sum^{18}_{h=1}\sum^{4}_{h_t=1} c_a$ (in dollar).}
\label{table-base-result}
\end{table}
Due to the profit calculation, there are only 7.28 negative-profit matches per interval on average, which is about 0.019\% of the average number of matches per interval (37939.99).
This is by design for drivers to make money in practice, which results in the excellent performance of Greedy, as optimal solutions have very few negative-profit matches.
The Greedy solutions serve about 99.76\% of passengers served by the optimal solutions.
In some intervals, Greedy solutions have higher profits due to negative matches are assigned in the optimal solutions.
From Table~\ref{table-base-result}, ExactNF2 runs faster than other algorithms, and ExactNF1/2 always produce optimal solutions with the highest profits of all optimal solutions.

The base case instances may not truly reflect the whole picture when retail gas price increases and traffic congestion occurs.
The gas price (regular) in Chicago was increased by nearly 35\% from March 2022 to June 2022 and nearly 80\% from Dec. 2021 to June 2022\footnote{Energy Information Administration. \url{https://www.eia.gov}}.
According to~\cite{Litman-TCBA16}, fuel consumption can increase 30\% under heavily congestion.
Note that the cost increase from these two together are multiplicative.
Although the cost due to congestion is compensated by a higher revenue (longer travel time), some costs caused by congestion are not recovered.
We considered five different settings for travel cost increases due to extra fuel cost (gas price + congestion):
\begin{itemize}\setlength\itemsep{0em}
\begin{small}
\item 0-20\% increase (0\% for non-peak hours and 20\% for peak hours), 20-40\%, 40-60\%, 60-80\% and 80-100\%.
\end{small}%
\end{itemize}%
In addition to gas price fluctuation, other major operating costs for drivers include maintenance, depreciation, extra insurance and tax, especially for the drivers that provide frequent ridesharing service.
From the findings in~\cite{AAA22,NTS22}, we add the following maintenance + depreciation (based on 20k miles/year) operating costs, labeled as \textit{OP}, to $tc(\eta_i,R_i)$ for each match $(\eta_i, R_i)$:
\begin{itemize}\setlength\itemsep{0em}
\begin{small}
\item \$0.0887 + \$0.1851 per mile for Small Sedan and \$0.1064 + \$0.2505 per mile for Medium Sedan.
\end{small}%
\end{itemize}%
For all other costs, an extra 20\%/40\% cost increase is applied to $tc(\eta_i, R_i)$ for each match $(\eta_i, R_i)$.
Altogether, we tested the following six cost settings (fuel and operating costs) added to the base case:
\begin{itemize}\setlength\itemsep{0em}
\begin{small}
\item \textit{S1} (20-40\% cost increase + operating costs OP), \textit{S2} (40-60\% + OP),
\textit{S3} (60-80\% + OP), \textit{S4} (80-100\% + OP), \textit{S5} (100-120\% + OP), \textit{S6} (120-140\% + OP).
\end{small}
\end{itemize}%
The results of these six settings are depicted in Table~\ref{table-cost-increased-result} for profit target $c = 0.8\cdot w(M')$ for each interval.
\begin{table}[!ht]
\footnotesize
\setlength\tabcolsep{4pt}
\centering
\begin{tabular}{l | l | c | c | c | c}
\hline
	\multicolumn{2}{ l |}{} 		& Greedy    & ExactNF1   & ExactNF2   & Exact	    \\ \hline
	\multicolumn{2}{c |}{\#S1}  	& 107233    & 110035     & 110035     & 110035      \\
    \multicolumn{2}{c |}{\#S2}    & 106953   	 & 110035     & 110035     & 110035      \\
    \multicolumn{2}{c |}{\#S3}    & 106658   	 & 110035     & 110035     & 110035     \\
    \multicolumn{2}{c |}{\#S4}    & 106387     & 110035     & 110035     & 110035 	 \\
    \multicolumn{2}{c |}{\#S5}    & 106081   	 & 110035     & 110035     & 110035     \\   
    \multicolumn{2}{c |}{\#S6}    & 105795     & 110035     & 110035     & 110035     \\
\midrule
	\$S1  & $c'$=\$931569 	      & \$1162693	 & \$1150650  & \$1150650  & \$1006764     \\
    \$S2  & $c'$=\$907483        & \$1132014	 & \$1118309  & \$1118309  & \$977763      \\
    \$S3  & $c'$=\$883652        & \$1101575	 & \$1086034  & \$1086034  & \$949875      \\
    \$S4  & $c'$=\$860079        & \$1071137	 & \$1053821  & \$1053821  & \$922326		\\
    \$S5  & $c'$=\$836773        & \$1041053   & \$1021658  & \$1021658  & \$897687      \\   
    \$S6  & $c'$=\$813739        & \$1010972   & \$989553   & \$989553   & \$871236      \\
\midrule
	\multicolumn{2}{c |}{$\Theta$S1 (second)}  & 3.316   & 3.976    & 4.066    & 4.577   \\
    \multicolumn{2}{c |}{$\Theta$S2 (second)}  & 3.246   & 3.921    & 4.658    & 4.564   \\
    \multicolumn{2}{c |}{$\Theta$S3 (second)}  & 3.308   & 3.909    & 3.986    & 4.672   \\
    \multicolumn{2}{c |}{$\Theta$S4 (second)}  & 3.301   & 3.962    & 4.526    & 4.696   \\
    \multicolumn{2}{c |}{$\Theta$S5 (second)}  & 3.351   & 3.919    & 3.904    & 4.496   \\   
    \multicolumn{2}{c |}{$\Theta$S6 (second)}  & 3.343   & 3.989    & 4.600    & 4.720   \\    
\hline
    \end{tabular}
\captionsetup{font=small}
\caption{(\#) Total number of passengers served and (\$) total profit of served matches in all intervals, and ($\Theta$) average running time per interval for RPC1 using different cost settings. $c'=\sum^{18}_{h=1}\sum^{4}_{h_t=1} c$ (in dollar).}
\label{table-cost-increased-result}
\end{table}
From Tables~\ref{table-base-result}~and~\ref{table-cost-increased-result}, the performance of Greedy is from about 99.76\% (base case) to 96.1\% (the worst case S6) of the exact algorithms in the total number of passengers served.
In the tested instances, Greedy has the fastest average running time in this scenario.
Exact is still the slowest, and ExactNF1 is slightly faster than ExactNF2 on average.
Table~\ref{table-negative-matches} shows results related to negative profits for selected cost settings: S2, S4, and S6.
Solutions produced by Greedy have higher total profits than that of ExactNF1/ExactNF2 and Exact because Greedy solutions have lower number of matches with negative profit, compared to the ExactNF1/ExactNF2/Exact solutions, as stated in Table~\ref{table-negative-matches}.
\begin{table}[!t]
\footnotesize
\setlength\tabcolsep{4pt}
\centering
   \begin{tabular}{ l | c | c | c | c | c}
   	\hline
    \multicolumn{2}{ l |}{}  & Greedy       & ExactNF1   & ExactNF2   & Exact	    \\ \hline
   \multirow{3}{8.25cm}{Avg \% of negative-profit matches served per interval out of all served matches}       
   	& (S2)					  & 0.8219\%     & 1.2332\%   & 1.2314\%	 & 3.5371\%     \\    
    & (S4)                   & 1.2361\%   	  & 1.8867\%   & 1.8867\%	 & 4.6540\%      \\    
    & (S6)                   & 1.8073\%     & 2.8391\%   & 2.8373\%   & 5.9127\%      \\ \hline  
    \multirow{3}{*}{Avg number of negative-profit matches per interval} 
    & (S2) & \multicolumn{4}{c}{3651.903 (9.625\% of total matches)}  \\
    & (S4) & \multicolumn{4}{c}{5013.125 (13.213\% of total matches)}  \\
    & (S6) & \multicolumn{4}{c}{6517.472  (17.178\% of total matches)}  \\ \hline
   \end{tabular}
\captionsetup{font=small}
\caption{Results relate to profit for S2, S4, and S6.}
\label{table-negative-matches}
\end{table}
Of course, ExactNF always produces optimal solutions.
As a result, ExactNF1 and ExactNF2 always outperform Exact.
As the number of matches with negative profit increases (S1$\rightarrow$S6), Greedy serves less passengers, and the exact algorithms stay the same.
These may suggest that when there are more matches with negative profit, ExactNF1 is a better choice as it runs slightly faster than ExactNF2 on average.
On the other hand, if profit is also important, using Greedy is acceptable since it produces the highest total profits and its performance is about 96.1\% of the exact algorithms in the total number of passengers served.

The mean occupancy rate (in each interval) for exact algorithms stays the same for all six settings S1 to S6.
The mean occupancy rate is calculated as, in each interval, (the number of served passengers +
the total number of drivers) divided by the total number of drivers.
The occupancy rates for each time interval in S2 and S6 are shown in Figure~\ref{fig-occupanyRateRPC1}.
\begin{figure}[!th]
\centering
\includegraphics[width=0.6\linewidth]{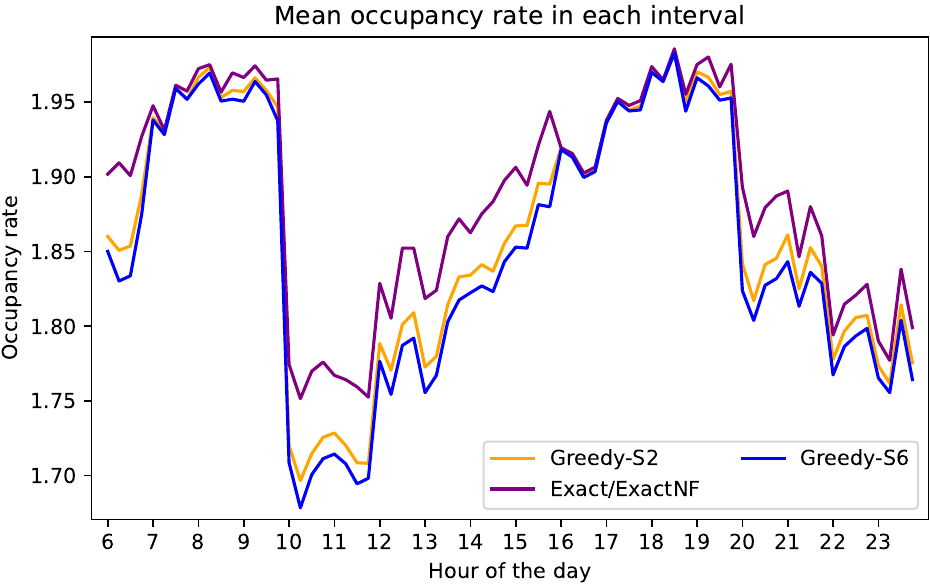}
\captionsetup{font=small}
\caption{The mean occupancy rate in each interval for S2 and S6.}
\label{fig-occupanyRateRPC1}
\end{figure}
The mean occupancy rates follow a similar distribution as the number of passengers generated.
The average occupancy rate for exact algorithms (calculated as the sum of the mean occupancy rate for each interval and divided it by 72) is 1.8868.
When there are many drivers and passengers generated (balanced supply and demand), many of them are matched together.
During morning and afternoon peak/rush hours, the mean occupancy rates are close to 2, meaning most of the generated drivers are assigned a passenger.
It may be beneficial to decrease the the parameter $\tau$ for candidate pair checks when supply/demand are lower in some time intervals, so more driver-passenger pairs are considered.
Finally, from Table~\ref{table-base-result}, the total maximum number of passengers served is at 110035 for $c_2$ and $c_3$ (same in Table~\ref{table-cost-increased-result} for $c = 0.8\cdot w(M')$).
This may indicate that $c_2 = 0.8\cdot w(M')$ is a relative stable choice for profit target in general.

\subsubsection{RPC\texttt{+} results}
The base case instances use the profit calculation described in Section~\ref{sec-experiment-profit}.
The estimated distance factor used is $\tau=0.8$ for the computation heuristic described in~\ref{sec-experiment-instance}.
We assume all passengers are willing to participant in ridesharing.
Recall that the $\frac{2}{3\lambda}$-approximation algorithm that solves RPC\texttt{+} is labeled as \textbf{LS2} and the first step of LS2 is labeled as \textbf{SimpleGreedy}.

For this variant, the profit target $c$ is upper bounded by the weight $w(M')$ of the matching $M'$ found by SimpleGreedy.
Recall that $A =\{e \in M' \mid |R(e)|=1\}$, as defined in the description of Algorithm LS2, and
Assumption~\ref{assumption-profit-target-range} requires $c \leq w(M' \setminus A) + 2w(A) / (\lambda+1)$.
We set a lower bound $LB=\min\{w(M' \setminus A) + 2w(A) / (\lambda+1), 0.6w(M')\}$.
We tested three profit targets $c_1=w(M')$, $c_2 = 0.5\cdot(w(M') - LB) + LB$ and $c_3=LB$.
The overall results are shown in Table~\ref{table-base-result-rpc+}.
\begin{table}[ht]
\footnotesize
\setlength\tabcolsep{3pt}
\centering
   \begin{tabular}{ l | c | c | c | c}
   	\hline
    \multicolumn{2}{ l |}{}   & SimpleGreedy    & LS2        & Exact	    \\ \hline
\multirow{3}{*}{Total \# of passengers served in all intervals}   & ($c'_1$=\$845817)
								& 63554 	      & 64099      & 71197       \\    
    & ($c'_2$=\$676653)        & 63554  	      & 64118      & 71208          \\    
    & ($c'_3$=\$507490)        & 63554          & 64118      & 71208          \\ \hline  
   \multirow{3}{*}{Total profit of served matches in all intervals} & ($c'_1$=\$845817) 
   								& 845817          & 848677     & 846893   \\
   & ($c'_2$=\$676653)         & 845817  	      & 848271      & 702130          \\ 
   & ($c'_3$=\$507490)         & 845817          & 848271     & 681472          \\ \hline  
    \multirow{3}{*}{Avg running time (second) per interval}   & ($c'_1$=\$845817)  
    							 & 0.0445	       & 0.0761     & 47.880   \\
    & ($c'_2$=\$676653)         & 0.0386         & 0.0708     & 33.279   \\
    & ($c'_3$=\$507490)         & 0.0397         & 0.0695     & 35.664   \\ \hline
    \multicolumn{2}{ l |}{Avg number of feasible matches per interval} & \multicolumn{3}{l}{103612.431} \\
    \multicolumn{2}{ l |}{Avg number of matches with negative profit per interval} & \multicolumn{3}{l}{4.236 (0.0041\% of total matches per int.)} \\
    \multicolumn{2}{ l |}{Avg running time to compute the matches per interval} & \multicolumn{3}{l}{389.176 seconds} \\
    \multicolumn{2}{ l |}{Total number of drivers and passengers generated respectively} & \multicolumn{3}{l}{40573 and 126625}\\ \hline
   \end{tabular}
\captionsetup{font=small}
\caption{Performances of algorithms for RPC\texttt{+} on base case instances. For $1\leq a\leq 3$, $c'_a=\sum^{18}_{h=1}\sum^{4}_{h_t=1} c_a$ (in dollar).}
\label{table-base-result-rpc+}
\end{table}
The performances of SimpleGreedy and LS2 are about 89.25\% and 90.04\% of the exact algorithm (Exact), in the total number of passengers served.
The running time of LS2 is only 0.0312 second longer than that of SimpleGreedy on average.
On the other hand, the running time of Exact is 470-630 times longer than that of LS2, depending on the profit target.
For very larger instance thought, Exact may not be suitable for real-time computation.
Nonetheless, the overall running times of all the algorithms for the tested cases are practical enough as shown in Table~\ref{table-base-result-rpc+}.

In terms of occupancy rates, Exact has the best mean occupancy rate in each interval as expected, and its mean occupancy rate is also more stable compared to the other two algorithms (see an example for $c_1$ in Figure~\ref{fig-occupanyRateRPC+}).
\begin{figure}[!b]
\centering
\includegraphics[width=0.61\linewidth]{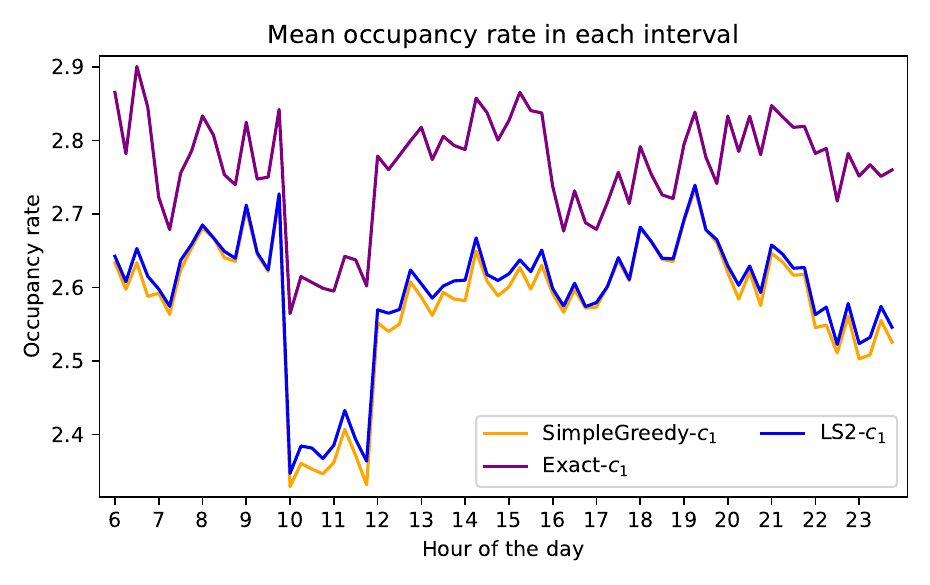}
\captionsetup{font=small}
\caption{The mean occupancy rate in each interval for RPC\texttt{+} and $c_1$.}
\label{fig-occupanyRateRPC+}
\end{figure}
In most intervals, the mean occupancy rate of Exact is 2.7 or higher, and the mean occupancy rates of LS2 and SimpleGreedy are close to 2.6.
The average occupancy rate for each algorithm is depicted in Table~\ref{table-RPC+-OR}.
This aligns with previous studies that there is potential in ridesharing.
Our experiment further shows that there is potential in profit-maximizing MoD platforms by utilizing ridesharing as profit targets are achieved by our algorithms.
It is beneficial to use different algorithms for different intervals (depending on the supply and demand) if time limit for computing a solution is important.
\begin{table}[t]
\small
\captionsetup{font=small}
\caption{The average occupancy rate.}
\centering
   \begin{tabular}{ l | c | c | c | c }
   	\hline
    \multicolumn{2}{c|}{}               			    & SimpleGreedy  & LS2     & Exact   \\ \hline
    \multirow{3}{*}{Average occupancy rate}  & $c_1$  & 2.5808        & 2.5933  & 2.7628  \\
    &											 $c_2$  & 2.5808        & 2.5938  & 2.7631  \\
    &											 $c_3$  & 2.5808        & 2.5938  & 2.7631  \\ \hline    
   \end{tabular}
\label{table-RPC+-OR}
\end{table}

\subsubsection{Discussion}
Based on the RPC1 and RPC\texttt{+} results, our algorithms are effective for achieving the optimization goal in practical scenarios.
Although the RPC problem is NP-hard, the exact algorithms are efficient enough to find optimal solutions, which can be practical (depending on the applications).
The approximation algorithms Greedy and LS2 can achieve 96.1\% and 90.04\% of the optimal solutions to RPC1 and RPC\texttt{+}, respectively, in the number of passengers served.
The average occupancy rates of the solutions to RPC1 and RPC\texttt{+} are higher than the reported occupancy rate in the US (which was 1.5 in 2017) for all algorithms tested.
Our experiment results suggest that there is potential in profit-maximizing/profit-incentive MoD platforms by utilizing ridesharing.

\begin{table}[!ht]
\small
\centering
   \begin{tabular}{ c | c | c | c | c | c }
   	\hline
    	& RPC1 base   & RPC1: S2    & RPC1: S4    & RPC1: S6    & RPC\texttt{+} base   \\ \hline   
\multirow{2}{*}{*}
		& \#109770    & \#106074    & \#105072    & \#103883    & \#65913     \\
     	& \$1587436   & \$1134354   & \$1075099   & \$1017174   & \$893879  \\ \midrule
\multirow{2}{*}{ $\diamond$}
		& \#109771    & \#110035    & \#110035    & \#110035    & \#71197   \\
     	& \$1586707   & \$1118309   & \$1053821   & \$989553    & \$846893  \\ \hline
   \end{tabular}
\captionsetup{font=small}
\caption{Total number \# of passengers served and total profit \$ of served matches in all intervals. (*) Optimal solutions to RP. ($\diamond$) Optimal solutions to RPC1 (for ExactNF and $c_2$) and RPC\texttt{+} (for Exact and $c_1$).}
\label{table-RP-result}%
\end{table}%
We also computed optimal solutions to the RP problem (formulation~(\ref{formulation-sharingmodel})-\eqref{constraint-disjoint-RP}) using some RPC1 and RPC\texttt{+} test instances. The results are shown in Table~\ref{table-RP-result}. 
From Table~\ref{table-RP-result}, optimal solutions to RPC1 and RPC\texttt{+} serve 5.92\% (S6) and 8.02\% (RPC\texttt{+}) more passengers than the respective RP optimal solutions.
The profits of the optimal solutions for RPC1 and RPC\texttt{+} are reasonably close to that of the RP optimal solutions when there are not many matches with negative profit.
RPC can be an alternative to RP in practice since more passengers are served with a controllable profit target, which can be adjusted for each interval.
MoDs can choose to compensate the drivers that serve negative-profit matches (e.g., lower the take-rate).

\section{Conclusion} \label{sec-conclusion}
Our model of the RPC problem provides a new framework to incorporate a flexible pricing scheme to maximize the number of passengers served while meeting a profit target.
Our approach for solving the RPC problem allows personal/ad-hoc drivers and designated drivers to participate in the ridesharing system at the same time.
The RPC problem is a more complex variant of the maximum set packing problem, and hence, it is NP-hard.
We give a polynomial time exact algorithm (with two different practical implements) and approximation algorithms for special cases (RPC1 and RPC\texttt{+}) of the problem.
Experimental results show that practical profit (price) schemes can be incorporated into our model and suggest that there is potential in profit-maximizing/profit-incentive MoD platforms by utilizing ridesharing.

The RPC\texttt{+} variant considers only matches with non-negative profit, which may cover the MoD systems' profit-incentive, but it may impose a limit on improving the number of passengers served.
It is worth developing algorithms for more general cases where matches with negative profit are also considered.
A related optimization problem is to maximize the system-wide profit while a number of passengers must be served.
Such an optimization problem may satisfy more demand compared to the RPC problem, which is important in reducing congestion and $\textup{CO}_2$ emissions.

\bibliographystyle{plainurl}
{\small
\bibliography{rpc}

\begin{thebibliography}{10}

\bibitem{Agatz-TRBM11}
Niels Agatz, Alan Erera, Martin Savelsbergh, and Xing Wang.
\newblock Dynamic ride-sharing: A simulation study in metro atlanta.
\newblock {\em Transportation Research Part B: Methodological},
  45(9):1450--1464, 2011.
\newblock \href {https://doi.org/10.1016/j.trb.2011.05.017}
  {\path{doi:10.1016/j.trb.2011.05.017}}.

\bibitem{Agatz-EJOR12}
Niels Agatz, Alan Erera, Martin Savelsbergh, and Xing Wang.
\newblock Optimization for dynamic ride-sharing: A review.
\newblock {\em European Journal of Operational Research}, 223(2):295--303,
  2012.
\newblock \href {https://doi.org/10.1016/j.ejor.2012.05.028}
  {\path{doi:10.1016/j.ejor.2012.05.028}}.

\bibitem{Ahuja-NF93}
Ravindra Ahuja, Thomas Magnanti, and James Orlin.
\newblock {\em Network Flows: theory, algorithms, and applications}.
\newblock Prentice Hall, 1993.

\bibitem{Alonso-Mora-PNAS17}
Javier Alonso-Mora, Samitha Samaranayake, Alex Wallar, Emilio Frazzoli, and
  Daniela Rus.
\newblock On-demand high-capacity ride-sharing via dynamic trip-vehicle
  assignment.
\newblock {\em Proceedings of the National Academy of Sciences},
  114(3):462--467, 2017.
\newblock \href {https://doi.org/10.1073/pnas.1611675114}
  {\path{doi:10.1073/pnas.1611675114}}.

\bibitem{Amirkiaee-TRFTPB18}
Yasaman Amirkiaee and Nicholas Evangelopoulos.
\newblock Why do people rideshare? an experimental study.
\newblock {\em Transportation Research Part F: Traffic Psychology and
  Behaviour}, 55:9--24, 2018.
\newblock \href {https://doi.org/10.1016/j.trf.2018.02.025}
  {\path{doi:10.1016/j.trf.2018.02.025}}.

\bibitem{Archetti-VR14}
Claudia Archetti, M.~Grazia Speranza, and Daniele Vigo.
\newblock {\em Chapter 10: Vehicle Routing Problems with Profits}, pages
  273--297.
\newblock Society for Industrial and Applied Mathematics, 2014.
\newblock \href {https://doi.org/10.1137/1.9781611973594.ch10}
  {\path{doi:10.1137/1.9781611973594.ch10}}.

\bibitem{AAA22}
American~Automobile Association.
\newblock Your driving costs fact sheet, August 2022.
\newblock URL:
  \url{https://newsroom.aaa.com/asset/2022-yourdrivingcosts-factsheet-august}.

\bibitem{Berman-SWAT00}
Piotr Berman.
\newblock A d/2 approximation for maximum weight independent set in d-claw free
  graphs.
\newblock In {\em Algorithm Theory - SWAT 2000}, pages 214--219. Springer
  Berlin Heidelberg, 2000.
\newblock \href {https://doi.org/10.1007/3-540-44985-X_19}
  {\path{doi:10.1007/3-540-44985-X_19}}.

\bibitem{Besbes-MS21}
Omar Besbes, Francisco Castro, and Ilan Lobel.
\newblock Surge pricing and its spatial supply response.
\newblock {\em Management Science}, 67(3):1350--1367, 2021.
\newblock \href {https://doi.org/10.1287/mnsc.2020.3622}
  {\path{doi:10.1287/mnsc.2020.3622}}.

\bibitem{NTS22}
{Bureau of Transportation Statistics, United States Department of
  Transportation}.
\newblock {Average Cost of Owning and Operating an Automobile, National
  Transportation Statistics}, August 2022.
\newblock URL:
  \url{https://www.bts.gov/topics/national-transportation-statistics}.

\bibitem{Castillo-EC17}
Juan~Camilo Castillo, Dan Knoepfle, and Glen Weyl.
\newblock Surge pricing solves the wild goose chase.
\newblock In {\em Proceedings of the 2017 ACM Conference on Economics and
  Computation}, pages 241--242, 2017.
\newblock \href {https://doi.org/10.1145/3033274.3085098}
  {\path{doi:10.1145/3033274.3085098}}.

\bibitem{Caulfield-TRDTE09}
Brian Caulfield.
\newblock Estimating the environmental benefits of ride-sharing: A case study
  of dublin.
\newblock {\em Transportation Research Part D: Transport and Environment},
  14(7):527--531, 2009.
\newblock \href {https://doi.org/10.1016/j.trd.2009.07.008}
  {\path{doi:10.1016/j.trd.2009.07.008}}.

\bibitem{CSS20}
{Center for Sustainable Systems, University of Michigan}.
\newblock Personal transportation factsheet, 2020.
\newblock URL:
  \url{http://css.umich.edu/factsheets/personal-transportation-factsheet}.

\bibitem{Chandra-JoA01}
Barun Chandra and Magn{\'{u}}s Halld{\'{o}}rsson.
\newblock Greedy local improvement and weighted set packing approximation.
\newblock {\em Journal of Algorithms}, 39(2):223--240, 2001.
\newblock \href {https://doi.org/10.1006/jagm.2000.1155}
  {\path{doi:10.1006/jagm.2000.1155}}.

\bibitem{Cormen-IntroAlg09}
Thomas~H. Cormen, Charles~E. Leiserson, Ronald~L. Rivest, and Clifford Stein.
\newblock {\em Introduction to Algorithms, Third Edition}.
\newblock The MIT Press, 2009.

\bibitem{Diao-NS21}
Mi~Diao, Hui Kong, and Jinhua Zhao.
\newblock Impacts of transportation network companies on urban mobility.
\newblock {\em Nature Sustainability}, 4:494--500, 2021.
\newblock \href {https://doi.org/10.1038/s41893-020-00678-z}
  {\path{doi:10.1038/s41893-020-00678-z}}.

\bibitem{EEA19}
{European Environment Agency}.
\newblock Greenhouse gas emissions from transport in europe, 2019.
\newblock URL:
  \url{https://www.eea.europa.eu/ds_resolveuid/7009a89effc04dbea8b5f94ff0a39912}.

\bibitem{Fielbaum-TRCET21}
Andres Fielbaum, Xiaoshan Bai, and Javier Alonso-Mora.
\newblock On-demand ridesharing with optimized pick-up and drop-off walking
  locations.
\newblock {\em Transportation Research Part C: Emerging Technologies},
  126:103061, 2021.
\newblock \href {https://doi.org/10.1016/j.trc.2021.103061}
  {\path{doi:10.1016/j.trc.2021.103061}}.

\bibitem{Furer-ISCO14}
Martin F{\"{u}}rer and Huiwen Yu.
\newblock Approximating the k-set packing problem by local improvements.
\newblock In {\em Combinatorial Optimization - Third International Symposium
  (ISCO 2014)}, LNCS, pages 408--420. Springer Verlag, 2014.
\newblock \href {https://doi.org/10.1007/978-3-319-09174-7_35}
  {\path{doi:10.1007/978-3-319-09174-7_35}}.

\bibitem{Furuhata-TRBM13}
Masabumi Furuhata, Maged Dessouky, Fernando Ordóñez, Marc-Etienne Brunet,
  Xiaoqing Wang, and Sven Koenig.
\newblock Ridesharing: The state-of-the-art and future directions.
\newblock {\em Transportation Research Part B: Methodological}, 57:28--46,
  2013.
\newblock \href {https://doi.org/10.1016/j.trb.2013.08.012}
  {\path{doi:10.1016/j.trb.2013.08.012}}.

\bibitem{Garey79}
Michael Garey and David Johnson.
\newblock {\em Computers and Intractability: A Guide to the Theory of
  NP-Completeness}.
\newblock W.H. Freeman and Company, 1979.

\bibitem{Ghoseiri-USDT11}
Keivan Ghoseiri, Ali Haghani, and Masoud Hamedi.
\newblock Real-time rideshare matching problem.
\newblock Final {Report} of {UMD-2009-04}, U.S. Department of Transportation,
  2011.
\newblock URL: \url{https://rosap.ntl.bts.gov/view/dot/25988}.

\bibitem{Gu-ISAAC21}
Qian{-}Ping Gu and Jiajian Liang.
\newblock Multimodal transportation with ridesharing of personal vehicles.
\newblock In {\em 32nd International Symposium on Algorithms and Computation
  (ISAAC 2021)}, volume 212 of {\em LIPIcs}, pages 39:1--39:16. Schloss
  Dagstuhl -- Leibniz-Zentrum f{\"u}r Informatik, 2021.
\newblock \href {https://doi.org/10.4230/LIPIcs.ISAAC.2021.39}
  {\path{doi:10.4230/LIPIcs.ISAAC.2021.39}}.

\bibitem{Gunawan-EJOR16}
Aldy Gunawan, Hoong~Chuin Lau, and Pieter Vansteenwegen.
\newblock Orienteering problem: A survey of recent variants, solution
  approaches and applications.
\newblock {\em European Journal of Operational Research}, 255(2):315--332,
  2016.
\newblock \href {https://doi.org/10.1016/j.ejor.2016.04.059}
  {\path{doi:10.1016/j.ejor.2016.04.059}}.

\bibitem{Hazan-CC06}
Elad Hazan, Shmuel Safra, and Oded Schwartz.
\newblock On the complexity of approximating k-set packing.
\newblock {\em Computational Complexity}, 15(1):20--39, 2006.
\newblock \href {https://doi.org/10.1007/s00037-006-0205-6}
  {\path{doi:10.1007/s00037-006-0205-6}}.

\bibitem{Henao-Trans19}
Alejandro Henao and Weslet Marshall.
\newblock The impact of ride-hailing on vehicle miles traveled.
\newblock {\em Transportation}, 49:2173--2194, 2021.
\newblock \href {https://doi.org/10.1007/s11116-018-9923-2}
  {\path{doi:10.1007/s11116-018-9923-2}}.

\bibitem{Hsieh-IJGI20}
Fu-Shiung Hsieh.
\newblock A comparative study of several metaheuristic algorithms to optimize
  monetary incentive in ridesharing systems.
\newblock {\em ISPRS International Journal of Geo-Information}, 9(10):590,
  2020.
\newblock \href {https://doi.org/10.3390/ijgi9100590}
  {\path{doi:10.3390/ijgi9100590}}.

\bibitem{Hu-MSOM22}
Bin Hu, Ming Hu, and Han Zhu.
\newblock Surge pricing and two-sided temporal responses in ride hailing.
\newblock {\em Manufacturing \& Service Operations Management}, 24(1):91--109,
  2022.
\newblock \href {https://doi.org/10.1287/msom.2020.0960}
  {\path{doi:10.1287/msom.2020.0960}}.

\bibitem{Johnson-JoA77}
Donald~B. Johnson.
\newblock Efficient algorithms for shortest paths in sparse networks.
\newblock {\em Journal of the ACM}, 24(1), 1977.
\newblock \href {https://doi.org/10.1145/321992.321993}
  {\path{doi:10.1145/321992.321993}}.

\bibitem{Jung-CACIE16}
Jaeyoung Jung, R.~Jayakrishnan, and Ji~Young Park.
\newblock Dynamic shared-taxi dispatch algorithm with hybrid-simulated
  annealing.
\newblock {\em Computer-Aided Civil and Infrastructure Engineering},
  31(4):275--291, 2016.
\newblock \href {https://doi.org/10.1111/mice.12157}
  {\path{doi:10.1111/mice.12157}}.

\bibitem{Karp72}
Richard~M. Karp.
\newblock {\em Reducibility among Combinatorial Problems}, pages 85--103.
\newblock Springer US, Boston, MA, 1972.
\newblock \href {https://doi.org/10.1007/978-1-4684-2001-2_9}
  {\path{doi:10.1007/978-1-4684-2001-2_9}}.

\bibitem{Li-TRELTR22}
Manzi Li, Gege Jiang, and Hong~K. Lo.
\newblock Pricing strategy of ride-sourcing services under travel time
  variability.
\newblock {\em Transportation Research Part E: Logistics and Transportation
  Review}, 159:102631, 2022.
\newblock \href {https://doi.org/10.1016/j.tre.2022.102631}
  {\path{doi:10.1016/j.tre.2022.102631}}.

\bibitem{Litman-TCBA16}
Todd~A. Litman.
\newblock {\em Transportation Cost and Benefit Analysis Techniques, Estimates
  and Implications}.
\newblock Victoria Transport Policy Institute, 2009.
\newblock (Updated October 2016).

\bibitem{Liu-TRCET17}
Yang Liu and Yuanyuan Li.
\newblock Pricing scheme design of ridesharing program in morning commute
  problem.
\newblock {\em Transportation Research Part C: Emerging Technologies},
  79:156--177, 2017.
\newblock \href {https://doi.org/10.1016/j.trc.2017.02.020}
  {\path{doi:10.1016/j.trc.2017.02.020}}.

\bibitem{Ma-TKDE15}
Shou Ma, Yu~Zheng, and Ouri Wolfson.
\newblock Real-time city-scale taxi ridesharing.
\newblock {\em IEEE Transactions on Knowledge and Data Engineering},
  27(7):1782--1795, 2015.
\newblock \href {https://doi.org/10.1109/TKDE.2014.2334313}
  {\path{doi:10.1109/TKDE.2014.2334313}}.

\bibitem{Martins-CIE21}
{Leandro do} Martins, Rocio {de la Torre}, Canan Corlu, Angel Juan, and Mohamed
  Masmoudi.
\newblock Optimizing ride-sharing operations in smart sustainable cities:
  Challenges and the need for agile algorithms.
\newblock {\em Computers \& Industrial Engineering}, 153:107080, 2021.
\newblock \href {https://doi.org/10.1016/j.cie.2020.107080}
  {\path{doi:10.1016/j.cie.2020.107080}}.

\bibitem{Mourad-TRBM19}
Abood Mourad, Jakob Puchinger, and Chengbin Chu.
\newblock A survey of models and algorithms for optimizing shared mobility.
\newblock {\em Transportation Research Part B: Methodological}, 123:323--346,
  2019.
\newblock \href {https://doi.org/10.1016/j.trb.2019.02.003}
  {\path{doi:10.1016/j.trb.2019.02.003}}.

\bibitem{Najmi-TRELTR17}
Ali Najmi, David Rey, and Taha~H. Rashidi.
\newblock Novel dynamic formulations for real-time ride-sharing systems.
\newblock {\em Transportation Research Part E: Logistics and Transportation
  Review}, 108:122--140, 2017.
\newblock \href {https://doi.org/10.1016/j.tre.2017.10.009}
  {\path{doi:10.1016/j.tre.2017.10.009}}.

\bibitem{Nourinejad-TRBM20}
Mehdi Nourinejad and Mohsen Ramezani.
\newblock Ride-sourcing modeling and pricing in non-equilibrium two-sided
  markets.
\newblock {\em Transportation Research Part B: Methodological}, 132:340--357,
  2020.
\newblock 23rd International Symposium on Transportation and Traffic Theory
  (ISTTT 23).
\newblock \href {https://doi.org/10.1016/j.trb.2019.05.019}
  {\path{doi:10.1016/j.trb.2019.05.019}}.

\bibitem{Qian-TRBM17}
Xinwu Qian, Wenbo Zhang, Satish Ukkusuri, and Chao Yang.
\newblock Optimal assignment and incentive design in the taxi group ride
  problem.
\newblock {\em Transportation Research Part B: Methodological}, 103:208--226,
  2017.
\newblock \href {https://doi.org/10.1016/j.trb.2017.03.001}
  {\path{doi:10.1016/j.trb.2017.03.001}}.

\bibitem{Santi-PNAS14}
Paolo Santi, Giovanni Resta, Michael Szell, Stanislav Sobolevsky, Steven~H.
  Strogatz, and Carlo Ratti.
\newblock Quantifying the benefits of vehicle pooling with shareability
  networks.
\newblock {\em Proceedings of the National Academy of Sciences},
  111(37):13290--13294, 2014.
\newblock \href {https://doi.org/10.1073/pnas.1403657111}
  {\path{doi:10.1073/pnas.1403657111}}.

\bibitem{Santos-USDTFHA11}
A.~Santos, N.~McGuckin, H.Y. Nakamoto, D.~Gray, and S.~Liss.
\newblock Summary of travel trends: 2009 national household travel survey.
\newblock Technical report, US Department of Transportation Federal Highway
  Administration, 2011.

\bibitem{Santos-ESA15}
Douglas Santos and Eduardo Xavier.
\newblock Taxi and ride sharing: A dynamic dial-a-ride problem with money as an
  incentive.
\newblock {\em Expert Systems with Applications}, 42(19):6728--6737, 2015.
\newblock \href {https://doi.org/10.1016/j.eswa.2015.04.060}
  {\path{doi:10.1016/j.eswa.2015.04.060}}.

\bibitem{Sierpinski-ATST13}
Grzegorz Sierpi{\'{n}}ski.
\newblock Changes of the modal split of traffic in europe.
\newblock {\em Archives of Transport System Telematics}, 6(1):45--48, 2013.

\bibitem{Simonetto-TRCET19}
Andrea Simonetto, Julien Monteil, and Claudio Gambella.
\newblock Real-time city-scale ridesharing via linear assignment problems.
\newblock {\em Transportation Research Part C: Emerging Technologies},
  101:208--232, 2019.
\newblock \href {https://doi.org/10.1016/j.trc.2019.01.019}
  {\path{doi:10.1016/j.trc.2019.01.019}}.

\bibitem{StatsCan-2016}
{Statistics Canada}.
\newblock Census: Main mode of commuting, 2016.

\bibitem{Stiglic-COR18}
Mitja Stiglic, Niels Agatz, Martin Savelsbergh, and Mirko Gradisar.
\newblock Enhancing urban mobility: Integrating ride-sharing and public
  transit.
\newblock {\em Computers \& Operations Research}, 90:12--21, 2018.
\newblock \href {https://doi.org/10.1016/j.cor.2017.08.016}
  {\path{doi:10.1016/j.cor.2017.08.016}}.

\bibitem{Sviridenko-ICALP13}
Maxim Sviridenko and Justin Ward.
\newblock Large neighborhood local search for the maximum set packing problem.
\newblock In {\em Automata, Languages, and Programming (ICALP)}, pages
  792--803, 2013.
\newblock \href {https://doi.org/10.1007/978-3-642-39206-1_67}
  {\path{doi:10.1007/978-3-642-39206-1_67}}.

\bibitem{Tafreshian-SS20}
Amirmahdi Tafreshian, Neda Masoud, and Yafeng Yin.
\newblock Frontiers in service science: Ride matching for peer-to-peer ride
  sharing: A review and future directions.
\newblock {\em Service Science}, 12(2-3):41--60, 2020.
\newblock \href {https://doi.org/10.1287/serv.2020.0258}
  {\path{doi:10.1287/serv.2020.0258}}.

\bibitem{Tikoudis-TRDTE21}
Ioannis Tikoudis, Luis Martinez, Katherine Farrow, Clara~García Bouyssou, Olga
  Petrik, and Walid Oueslati.
\newblock Ridesharing services and urban transport co2 emissions:
  Simulation-based evidence from 247 cities.
\newblock {\em Transportation Research Part D: Transport and Environment}, 97,
  2021.
\newblock \href {https://doi.org/10.1016/j.trd.2021.102923}
  {\path{doi:10.1016/j.trd.2021.102923}}.

\bibitem{Tirachini-IJST20}
Alejandro Tirachini and Andres Gomez-Lobo.
\newblock Does ride-hailing increase or decrease vehicle kilometers traveled
  (vkt)? a simulation approach for santiago de chile.
\newblock {\em International Journal of Sustainable Transportation},
  14(3):187--204, 2020.
\newblock \href {https://doi.org/10.1080/15568318.2018.1539146}
  {\path{doi:10.1080/15568318.2018.1539146}}.

\bibitem{Wang-TRBM19}
Hai Wang and Hai Yang.
\newblock Ridesourcing systems: A framework and review.
\newblock {\em Transportation Research Part B: Methodological}, 129:122--155,
  2019.
\newblock \href {https://doi.org/10.1016/j.trb.2019.07.009}
  {\path{doi:10.1016/j.trb.2019.07.009}}.

\bibitem{Yan-NRL20}
Chiwei Yan, Helin Zhu, Nikita Korolko, and Dawn Woodard.
\newblock Dynamic pricing and matching in ride-hailing platforms.
\newblock {\em Naval Research Logistics}, 67:705--724, 2020.
\newblock \href {https://doi.org/10.1002/nav.21872}
  {\path{doi:10.1002/nav.21872}}.

\bibitem{Zhang-TRBM21}
Kenan Zhang and Yu~Nie.
\newblock To pool or not to pool: Equilibrium, pricing and regulation.
\newblock {\em Transportation Research Part B: Methodological}, 151:59--90,
  2021.
\newblock \href {https://doi.org/10.1016/j.trb.2021.07.001}
  {\path{doi:10.1016/j.trb.2021.07.001}}.

\end{thebibliography}
}

\end{document}